\begin{document}

\title{Conditional Predictive Inference for Missing Outcomes}

\date{\today}

\author{Yonghoon Lee}
\author{Edgar Dobriban}
\author{Eric Tchetgen Tchetgen\footnote{E-mail addresses: 
\texttt{yhoony31@wharton.upenn.edu},
\texttt{dobriban@wharton.upenn.edu},
\texttt{ett@wharton.upenn.edu}
}}

\affil{Department of Statistics and Data Science, The Wharton School, University of Pennsylvania}

\maketitle

\begin{abstract}
We study the problem of conditional predictive inference on multiple outcomes missing at random (MAR)---or equivalently, under covariate shift. While the weighted conformal prediction~\citep{tibshirani2019conformal} offers a tool for inference under covariate shift with a marginal coverage guarantee, procedures with conditional coverage guarantees are often desired in many applications to ensure reliable inference for a specific group of individuals. A standard approach to overcoming the fundamental limitation of distribution-free conditional predictive inference is to relax the target and instead aim to control coverage conditional on a local area, subset, or bin in the feature space. However, when the missingness pattern depends on the features, this relaxation remains challenging due to the violation of the MAR assumption with respect to the bins. To address this issue, we propose a \textit{propensity score $\ep$-discretization}, a carefully designed binning strategy based on the propensity score, which enables valid conditional inference. Based on this strategy, we develop a procedure---termed \emph{pro-CP}---that enables simultaneous conditional predictive inference for multiple missing outcomes. We show that \emph{pro-CP} controls the bin-conditional coverage rate in a distribution-free manner when the propensity score is either known exactly or estimated with sufficient accuracy. Furthermore, we provide a theoretical bound on the coverage rate when the propensity score is unknown and must be estimated. Notably, the error bound remains constant and depends only on the estimation quality, not on the sample size or the number of outcomes under consideration. In extensive empirical experiments on simulated data and on a job search intervention dataset, we illustrate that our procedures provide informative prediction sets with valid conditional coverage.
\end{abstract}

\tableofcontents
\medskip

\sloppy

\section{Introduction}
Consider a standard predictive inference problem, where labeled calibration data $(X_i, Y_i)_{i=1}^n$ are used to perform inference on the unknown test outcomes $(Y_{n+j})_{i=1}^m$, given an unlabeled dataset $(X_{n+j})_{i=1}^m$. Conformal prediction~\citep{saunders1999transduction,vovk1999machine,vovk2005algorithmic, papadopoulos2002inductive} provides a methodology for constructing prediction sets with a distribution-free coverage guarantee,
but the applicability of the standard methodology is limited in the following sense:
\begin{enumerate}
    \item The calibration and the test data must be exchangeable.
    \item The coverage guarantee is marginal over all the randomness in the calibration and test data; hence, it may not be suitable in settings where the goal is to obtain personalized/tailored/informative inference conditional on specific individuals or groups.
\end{enumerate}

The violation of exchangeability can 
arise in many settings involving dataset or distribution shift, see e.g., \cite{shimodaira2000improving,quinonero2009dataset,Sugiyama2012,ben2006analysis,lipton2018detecting}, 
or in scenarios including randomized controlled trials \citep{kalton2020introduction,hariton2018randomised},
etc. 
For example, consider a randomized controlled trial where we observe data of the form $(X_i, A_i, (1 - A_i)Y_i(0) + A_i Y_i(1))$, where $A_i \in \{0,1\}$ denotes the treatment assignment and $Y_i(0)$ and $Y_i(1)$ represent the potential outcomes without and with treatment, respectively, for the $i$-th individual. 
For inference on the counterfactual variables $\{Y_i(0) : A_i = 1\}$, the calibration data used are $\{(X_i, Y_i(0)) : A_i = 0\}$, which are sampled from $P_{X \mid A=0} \times P_{Y(0) \mid X}$, whereas the unlabeled test features $\{X_i : A_i = 1\}$ are drawn from $P_{X \mid A=1}$. 
For such covariate shift settings, \citet{tibshirani2019conformal} propose weighted conformal prediction, which provides valid (marginal) coverage when the \textit{propensity score} $p_{A \mid X} : x \mapsto \PPst{A=1}{X=x}$ is known;
followed by several extensions and other developments  \citep[e.g.,][etc]{podkopaev2021distribution,gibbs2021adaptive,qiu2022prediction}. 

Achieving a conditional inference within the distribution-free framework---e.g., constructing a prediction set $\ch(X_{n+1})$ with a provable control of conditional miscoverage rate $\PPst{Y_{n+1} \in \ch(X_{n+1})}{X_{n+1}}$---has been of significant interest recently, but many works have shown that it is generally impossible to attain meaningful conditional inference without distributional assumptions.
For instance, \citet{vovk2012conditional} shows that any distribution-free method with strict conditional coverage control must output a prediction set with infinite measure—meaning that it is uninformative—and \citet{foygel2021limits} show that a similar impossibility result holds even for a weaker target. 
Consequently, different forms of relaxation of the inferential target are often considered. 
For example, \citet{gupta2020distribution} explores inference conditional on a bin instead of the full feature vector, and \citet{jung2023batch,gibbs2025conformal} introduce methods that controls $\PPst{Y_{n+1} \in \ch(X_{n+1})}{X_{n+1} \in G}$ for all $G$ in a collection of subsets.\footnote{As a remark, the approach in~\citet{gibbs2025conformal} aims to control conditional coverage in an i.i.d.~setting by addressing multiple pre-specified sets of covariate shifts, but it is not directly related to the problem of inference under covariate shift. 
For example, even when the propensity score $p_{A \mid X}$ is known, this information cannot be directly incorporated into their method without compromising theoretical guarantees, nor is it their intended focus.}


We explore the setting in which \emph{both} issues arise---namely, there is covariate shift, breaking exchangeability, and conditional inference is desired at the same time. 
Alternatively, as we will show below (\Cref{ps}),
in a missing data scenario,
one can view this problem as inference on outcomes missing at random, where prediction sets are constructed based on data points with observed outcomes.

Moreover, we are interested in drawing simultaneous  inferences on \emph{multiple outcomes} 
$(Y_{n+j})_{i=1}^m$, given their features $(X_{n+j})_{i=1}^m$.
While the problem of conditional inference under covariate shift for a single outcome is already challenging and remains unaddressed, 
we consider multiple outcomes
due to several reasons.
First, considering the setting of missing data mentioned above, this setting allows us to perform simultaneous inference on multiple missing outcomes; which can be of practical interest.
Second, simultaneous inference on multiple outcomes allows a tighter control of error rates than inference on one outcome at a time: for instance, we can be sure that, say, 95\% of the test outcomes of the given features $(X_{n+j})_{i=1}^m$ are covered, with 99\% probability.

More formally, given a calibration sample $(X_i,Y_i)_{i=1}^n$ from $P_{X \mid A=1} \times P_{Y \mid X}$ and a test sample $(X_{n+j},Y_{n+j})_{j=1}^m$ from $P_{X \mid A=0} \times P_{Y \mid X}$---with $(Y_{n+j})_{j=1}^m$ unobserved---we aim to control the following test-conditional coverage rate:
\begin{equation}\label{eqn:test_set_conditional}
    \EEst{\frac{1}{m}\sum_{j=1}^m \One{Y_{n+j} \in \ch(X_{n+j})}}{(X_{n+j})_{j=1}^m}.
\end{equation}
Thus, we want to cover most of the outcomes $Y_{n+j}$, with high probability.
This metric generalizes the conditional coverage rate $\mathbb{P}\{Y_{n+1} \in \ch(X_{n+1}) \mid X_{n+1}\}$, which corresponds
to the case $m=1$. Controlling conditional coverage is often desired in applications where high-quality inference for specific individuals is important---for example, doctors often require reliable diagnoses for an individual patient, and recruiters are interested in accurate evaluations of the of the particular applicant under review. In such cases, a marginal coverage guarantee may not be the appropriate target. The conditional coverage in the sense of~\eqref{eqn:test_set_conditional} extends the notion of conditional coverage to the finite population setting. It captures the quality of a procedure conditional on a specific test set or finite population of interest---e.g., a group of patients under treatment, students in a new teaching program, etc, where conditional inference on a specific group, rather than marginal inference with respect to a hypothetical infinite population, is more desirable.

Controlling this error rate under covariate shift appears to be nontrivial, and 
cannot be directly achieved by combining approaches developed separately for covariate shift and conditional coverage.
In particular, since the test sample size $m$ can be small (we even consider $m=1)$, using concentration inequalities over the $m$ summands leads to loose and conservative results.

To better understand the challenge, 
consider for instance \emph{binning}, one of the key strategies used in prior work to relax conditional inference to a feasible constraint \citep{gupta2020distribution}.
Suppose we bin each feature $X_i$, mapping it to $B_i = b(X_i)$, with some map $b$.
In the standard i.i.d. setting, such binning/discretization enables some level of conditional inference, since it can lead to multiple outcome datapoints with the same (discretized) feature observation---enabling learning about $P_{Y \mid b(X)}$. 
However, this approach fails for an arbitrary binning strategy under covariate shift/missing at random data, since
\emph{arbitrary binning does not preserve covariate shift}.
To wit, the distribution of $Y \mid b(X)$ is a mixture of distributions $Y|X'$ for $X'$ such that $b(X')=b(X)$, and thus can depend on the distribution of $X$. 
Therefore, even though $Y|X$ initially  has the same distribution under $A=0$ and $A=1$, this does not necessarily hold after binning.\footnote{From the perspective of missing data, this means that as once the features are discretized, $Y$ is not necessarily missing at random but rather missing not at random, which makes the problem more challenging.}

To overcome this challenge, we carefully examine the source of the violation of covariate shift after binning. 
Since the distribution after binning is a mixture of different conditional distributions, this raises the possibility of constructing a binning scheme that mixes together only similar conditional distributions. We take up this approach, and show that 
such a binning strategy can indeed be developed, 
by leveraging the odds of the propensity score.
Then, we construct a simultaneous inference procedure for multiple missing outcomes that controls the bin-conditional coverage rate---a surrogate for the feature-conditional coverage rate, which is unattainable with nontrivial prediction sets. Our contributions are summarized below.

\subsection{Main contributions}
We develop methods for predictive inference of multiple outcomes under covariate shift (or equivalently, missing at random), with conditional guarantees.

\begin{enumerate}
    \item {\bf Inferential target: feature-conditional coverage.}
    We discuss which conditional inferential goals are appropriate, depending on the discreteness or continuity of of the data distribution (\Cref{sec:in_exp}).
    For discrete-valued features (or, more generally, features whose distribution has point masses)
    we provide a method that satisfies
feature-conditional coverage guarantee (\Cref
{thm:MAR_in_exp}).
We show that this method is valid as long as the per-feature observations are exchangeable, which enables  
using it to construct 
narrower prediction sets via partitioning the observations (Corollary \ref{cor:in_exp_discrete_partition}).

\item {\bf Propensity score discretization-based conformal prediction (pro-CP).}
To handle general continuous feature distributions, 
we face the challenge that, since we do not make any assumptions  (such as smoothness) on the distribution  of the outcome given the features, 
we cannot borrow information across feature values. 
Thus, we introduce methods based on discretized feature values.
Since feature-discretized data generally do not remain missing at random, we propose a carefully crafted binning/discretization strategy based on the propensity score $x \mapsto p_{A \mid X}(x) = \PPst{A = 1}{X = x}$.
We show that \emph{approximate within-bin exchangeability} is retained.\footnote{Stratifying the propensity score (e.g., into several quantiles) has previously appeared in the causal inference literature—see, for example, \citet{rosenbaum1983central, rosenbaum1984reducing}—in the context of approximate balancing. In this work, our innovations are to (1) propose a specific discretization strategy which is linear in the log-odds of the propensity score and (2) theoretically prove that this ensures ``approximate within-bin exchangeability".} 
We refer to this this discretization scheme as \textit{propensity score $\ep$-discretization}.
We then introduce \emph{propensity score discretization-based conformal prediction} (pro-CP), and show that it achieves propensity score-discretized feature-conditional coverage when the propensity score is known exactly (\Cref{thm:MAR_known_p_A_X}),
We also discuss a use case of the procedures we introduced, to
obtain inference for individual treatment effects (\Cref{ite}). 

\item {\bf Propensity score discretization leads to approximate balancing.}
Our analysis crucially relies on a new theoretical result (\Cref{lem:dtv_0_1}) showing that the proposed propensity score $\eps$-discretization leads to ``approximate independence" between the distributions of $Y$ and $A$.
While classical results show that 
the propensity score has the balancing property,
so that $Y$ is independent of $A$ given $ p_{A \mid X}(X)$~\citep{rosenbaum1983central}, our results show that
\emph{propensity score $\eps$-discretization} leads to ``$\eps$-approximate balancing", which we detail later.

\begin{figure}[htbp]
  \centering
  \includegraphics[width=0.8\textwidth]{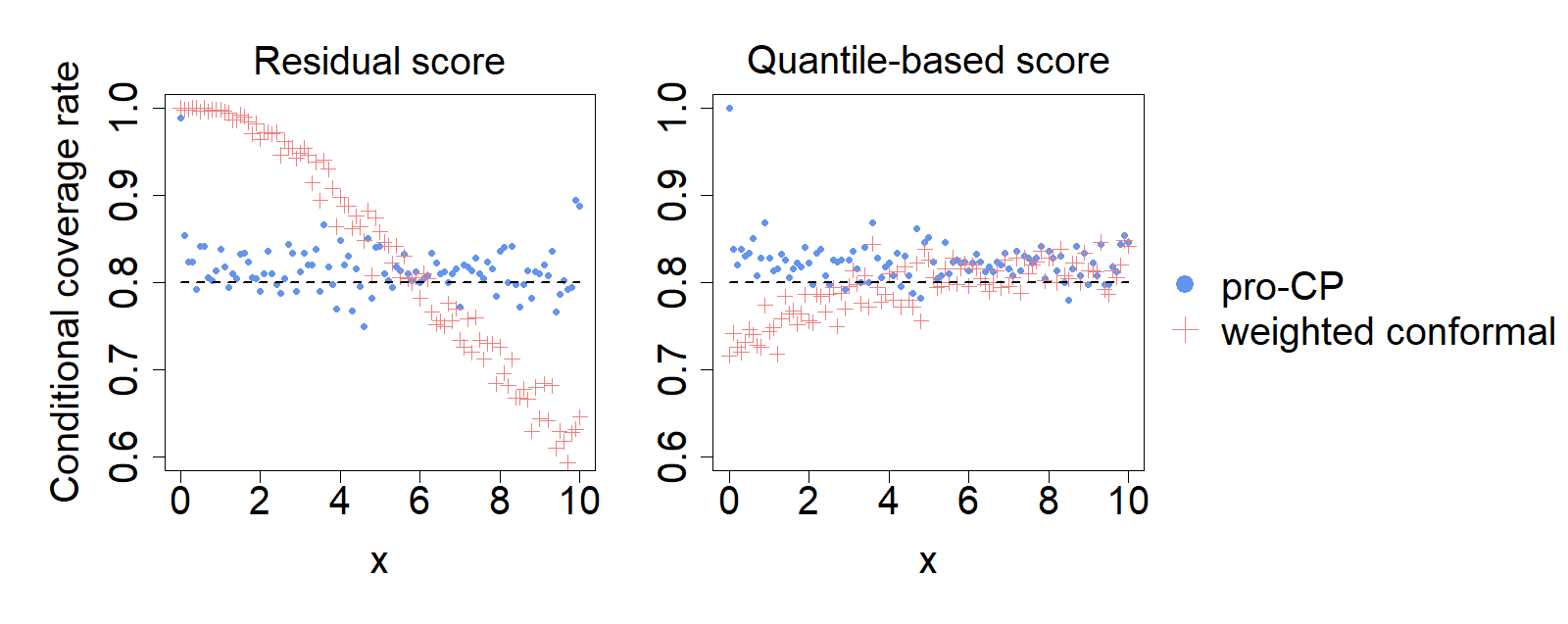}
  \caption{Conditional coverage rates of pro-CP and the method based on weighted conformal prediction~\citep{tibshirani2019conformal}---which targets a marginal coverage guarantee---under two choices of the nonconformity score. Note that the quantile-based score combined with weighted conformal prediction corresponds to the method of~\citet{lei2021conformal}. Our method shows approximate feature-conditional coverage, whereas the weighted conformal prediction does not. See \Cref{sec:sim} for details.
  }
  \label{fig:single_conditinal_coverage_0}
\end{figure}

\item {\bf Empirical evaluation.}
We evaluate our methods empirically (\Cref{exp}), 
both in numerical simulations (\Cref{sim} and \Cref{sec:sim_mult}),
and using an empirical data set (\Cref{dat2}) \footnote{Code to reproduce the experiments is available at \url{https://github.com/yhoon31/pro-CP}.} .
The empirical results support that our method satisfies the desired theoretical guarantees---namely, discretized feature-conditional coverage control; 
while also providing good control of the feature-conditional coverage. Additionally, they show that the pro-CP procedure does not generate overly conservative prediction sets. 
For inference on counterfactuals and individual treatment effects, our method has an advantage over the popular method of~\cite{tibshirani2019conformal} or its variant in~\cite{lei2021conformal}, in that it provides approximate feature-conditional coverage (See Figure \ref{fig:single_conditinal_coverage_0})---with provable theoretical conditional guarantees, rather than relying on the (unknown) conditional quality of estimates.

\end{enumerate}

\subsection{Problem setting}
\label{ps}
We consider the \textit{missing at random} (MAR)---or equivalently, covariate shift---setting, where the distribution of the outcome depends on the feature but not on whether it is observed. That is, denoting the feature, outcome, and observation indicator by $X$, $Y$, and $A$, respectively, we assume $P_{Y \mid X, A} = P_{Y \mid X}$, or equivalently, $Y \indep A \mid X$. 
We consider calibration and the test data $(X_i,Y_i)_{i=1}^{n+m} \subset \X \times \Y$, $m\ge 1$, drawn as follows:
\begin{equation}\label{eqn:model}
    \begin{split}
    &(X_1,Y_1),(X_2,Y_2), \ldots, (X_n,Y_n) \iidsim P_{X \mid A=1} \times P_{Y\mid X} \\
    &(X_{n+1},Y_{n+1}), (X_{n+2},Y_{n+2}) \ldots, (X_{n+m},Y_{n+m}) \iidsim P_{X \mid A=0} \times P_{Y \mid X},
    \end{split}
\end{equation}
where we only observe $(X_i,Y_i)_{i=1}^n$ and $(X_{n+j})_{j=1}^m$. 
The task is to construct prediction sets for the unobserved missing outcomes $Y_{n+1}, \cdots, Y_{n+m}$. 
Specifically, we aim to construct an algorithm $\widehat{C} : \X \rightarrow \mathcal{P}(\Y)$ such that most of the sets $\widehat{C}(X_{n+j})$ for $j=1,2,\cdots,m$ cover their corresponding missing outcome $Y_{n+j}$.

This formulation is equivalent to being given data with missing outcomes $(X_i, A_i, Y_i A_i)_{i \in [n]}$, where we use $(X_i, Y_i)_{i : A_i = 1}$ as calibration data to perform inference on the unobserved outcomes $(Y_i)_{i : A_i = 0}$---where all theoretical guarantees are stated conditional on the missingness indicators $(A_i)_{i \in [n]}$.

\subsection{Related work}

Prediction sets go back to \protect\cite{Wilks1941}, \protect\cite{Wald1943}, \protect\cite{scheffe1945non}, and \protect\cite{tukey1947non,tukey1948nonparametric}.
Distribution-free inference and the conformal prediction framework have been extensively studied in recent works \protect\citep[see, e.g.,][]{saunders1999transduction,vovk1999machine,papadopoulos2002inductive,vovk2005algorithmic,vovk2012conditional, Chernozhukov2018,dunn2022distribution,lei2013distribution,lei2014distribution,lei2015conformal,lei2018distribution,angelopoulos2021gentle,guan2022prediction,guan2023localized, guan2023conformal,romano2020classification,bates2023testing,einbinder2022training,liang2022integrative,liang2023conformal}. 
Conformal prediction provides a general framework for constructing prediction intervals with a marginal coverage guarantee under the exchangeability of datapoints.
Predictive inference methods
\citep[e.g.,][etc]{geisser2017predictive}  
have been developed under various assumptions
\protect\citep[see, e.g.,][]{Park2020PAC,park2021pac,park2022pac,sesia2022conformal,qiu2022prediction,li2022pac,kaur2022idecode,si2023pac}. 
 
Several works have explored the possibility of attaining stronger guarantees. 
\citet{vovk2012conditional} shows that split conformal prediction provides good control of conditional coverage if the sample size is sufficiently large; equivalently to the coverage properties of tolerance regions \citep{Wilks1941}. 
\cite{lei2014distribution} shows
that finite sample validity
conditional on all feature values implies infinite-length prediction sets almost surely.
For test-conditional predictive inference, \citet{foygel2021limits} proves an impossibility result for the goal of bounding the conditional coverage rate when the feature distribution is nonatomic (i.e., has no point mass). \citet{barber2020distribution} and \citet{lee2021distribution} discuss a distribution-free regression problem where the goal is to cover the conditional mean $\EEst{Y_{n+1}}{X_{n+1}}$ and provide impossibility results for nonatomic features. Consequently, several works have explored relaxed targets for conditional predictive inference. For example, \citet{hore2023conformal} propose a method that approximately controls the coverage rate conditional on a neighborhood of the test input, while \citet{gibbs2025conformal} discuss a relaxation of the multi-accuracy condition, which, in special cases, leads to bin-conditional coverage control.

Inference on the missing outcomes is closely related to the problem of prediction under covariate shift \citep[see e.g.,][]{shimodaira2000improving, quinonero2009dataset, Sugiyama2012}. 
We further discuss the relation between the missing data problem and the covariate shift problem in Section~\ref{sec:in_exp}, but to briefly outline the rationale, suppose we are interested in the prediction of only one missing outcome. Considering data points with outcomes as training data and the target outcome as the test outcome reduces the problem to prediction under covariate shift. On the other hand, in the missing data problem, our focus is on simultaneously inferring multiple missing outcomes, which is different; see Section~\ref{sec:in_exp}.

For the related problem of prediction under covariate shift, 
\citet{tibshirani2019conformal} introduces weighted conformal prediction, which handles the effect of covariate shift by putting weights on the scores based on the likelihood ratio of the feature distributions. 
Their method provides a finite sample guarantee 
when the likelihood ratio is known; see also \cite{park2021pac} for the case of a PAC guarantee for weights that are known up to being in a hyper-rectangle. 
This methodology is further explored by \citet{lei2021conformal}, where the authors prove the asymptotic validity of weighted conformal prediction with quantile estimate-based scores. 
When the likelihood ratio and outcome model are estimated, methods with doubly robust asymptotic coverage under distribution shift are introduced in 
\citet{yang2022doubly} with a marginal guarantee and in
\citet{qiu2022prediction} with a PAC guarantee.

Inference on missing outcomes has been extensively studied in the context of multiple imputation. Comprehensive overviews of multiple imputation methods are provided by \citet{harel2007multiple} and \citet{rubin1996multiple}. 
Multiple imputation was introduced by \citet{rubin1978multiple}, 
who proposes a Bayesian approach to achieve a distribution-like imputation rather than a single imputation. \citet{reilly1993data} studies nonparametric approaches, and asymptotic properties of parametric imputation models are examined by \citet{wang1998large} and \citet{robins2000inference}. \citet{rubin1986multiple} proposes a method for interval estimation, constructing an interval that covers the mean of the missing outcome.

\subsection{Notations}
We write $\R$ to denote the set of real 
numbers 
and $\R_{\ge 0}$ to denote the set of nonnegative reals.
For a positive integer $n$, we write $[n]$ to denote the set $\{1,2,\ldots,n\}$ and write $X_{1:n}$ to denote the vector $(X_1,X_2,\ldots,X_n)^\top$. 
For a distribution $P$ on $\R$ and a constant $\alpha \in (0,1)$, we write $Q_{1-\alpha}(P)$ to denote the $(1-\alpha)$-quantile of $P$, i.e., 
\[Q_{1-\alpha}(P) = \inf\left\{t \in \R : \Pp{T \sim P}{T \le t} \ge 1-\alpha \right\}.\]
For numbers $v_1,v_2,\ldots,v_m \in \R \cup \{\infty\}$ and $p_1,p_2,\ldots,p_m \in [0,1]$ satisfying $p_1+p_2+\ldots+p_m = 1$, 
we write $\sum_{i=1}^m p_i \delta_{v_i}$ to denote the discrete distribution that has probability mass function $p : \R \cup \{\infty\} \rightarrow [0,1]$ with
$p(x) = p_i$ 
 if 
 $x = v_i$, for $i\in[m]$, and $p(x) = 0$ otherwise.
For non-negative integers $a\le b$, we denote the binomial coefficient by $\binom{b}{a}$; the same expression is interpreted as zero for other values $a,b$. For a vector $(a_1,a_2,\ldots,a_m)^\top$ and a set of indices $I=\{i_1,i_2,\ldots,i_k\} \subset [m]$ with $i_1 < i_2 < \ldots < i_k$, we write $(a_i)_{i \in I}$ to denote the sub-vector $(a_{i_1},a_{i_2}, \ldots, a_{i_k})^\top$, also writing $a_{u:v}:=a_{u, u+1, \ldots, v}$ for positive integers $u<v$, $u,v\in [m]$. For an event $E$, we write $\One{E}$ to denote its corresponding indicator variable. For a set $D$, $\mathcal{P}(D)$ denotes its power set.

\section{Main results}\label{sec:in_exp}

\subsection{Inferential targets}
Given data $\D = (X_i,Y_i)_{i \in [n]}$, we aim to construct a procedure $\C$ such that $\ch = \C(\D)$ provides prediction sets $\{\ch(X_{n+j}) : j\in [m]\}$ for the missing outcomes $\{Y_{n+j} : j\in [m]\}$. 
The \emph{realized coverage}
for the
missing outcomes is 
the fraction
$\frac{1}{m}\sum_{j = 1}^m \One{Y_{n+j} \in \widehat{C}(X_{n+j})}$ of outcomes $Y_{n+j}$ covered by the prediction sets.

Our goal is to control the expected coverage, possibly conditional on certain functions of the data. The simplest property one might consider is unconditional coverage, namely
\begin{equation}\label{eqn:in_exp_guarantee}
    \EE{\frac{1}{m}\sum_{j=1}^m \One{Y_{n+j} \in \widehat{C}(X_{n+j})}}
    = \PP{Y_{n+1} \in \widehat{C}(X_{n+1})}
    \ge 1-\alpha.
\end{equation}
The expectation is taken with respect to the distribution~\eqref{eqn:model}; and the simplification happens due to the i.i.d.~sampling of $(X_{n+j},Y_{n+j})_{j=1}^m$.  
Applying methods such as weighted conformal prediction~\citep{tibshirani2019conformal}---when $P_{A \mid X}$ is known---separately to individual test points can guarantee coverage.
In this work, we aim to achieve stronger guarantees for conditional inference. An ideal guarantee would be the following test-input-conditional coverage property:

\begin{definition}[Feature-conditional coverage guarantee]\label{fmc}
A procedure $\ch(\cdot) = \ch(\cdot ; (X_i,Y_i)_{i \in [n]}, (X_{n+j})_{j \in [m]})$ satisfies 
feature-conditional 
coverage guarantee at level $1-\alpha$ if
\begin{equation}\label{eqn:in_exp_guarantee_X_conditional}
    \EEst{\frac{1}{m}\sum_{j=1}^m \One{Y_{n+j} \in \widehat{C}(X_{n+j})}}{(X_{n+j})_{j \in [m]}} \ge 1-\alpha,\qquad \textnormal{ almost surely.}
\end{equation}
\end{definition}
Intuitively, the above condition guarantees that we obtain reliable inference for any set of realized test inputs $X_{n+1}, \cdots, X_{n+m}$.
Clearly, it implies\footnote{As for the marginal guarantee, the feature-conditional can be expressed alternatively as 
$\frac{1}{m}\sum_{j=1}^m\PPst{Y_{n+j} \in \widehat{C}(X_{n+j})}{(X_{n+j})_{j \in [m]}} \ge 1-\alpha$. Since $\widehat{C}$ will in general depend on all of $(X_{n+j})_{j \in [m]}$, the probabilities do not in general reduce to $\PPst{Y_{n+j} \in \widehat{C}(X_{n+j})}{X_{n+j}}$.}
the marginal guarantee \eqref{eqn:in_exp_guarantee}.
When $m = 1$, the above guarantee reduces to the standard target of conditional predictive inference for a single test point: $\PPst{Y_{n+1} \in \ch(X_{n+1})}{X_{n+1}} \ge 1 - \alpha$ \citep{vovk2012conditional}. 
This is a very strong requirement, 
and it is not attainable by any distribution-free procedure with bounded average prediction set width, if the feature distribution is nonatomic---even in the simplest setting of $m = 1$ and no covariate shift~\citep{vovk2012conditional}.

However, several prior works have shown that approximate forms of feature-conditional coverage become possible if the feature space is approximately discretized or partitioned \citep[see e.g.,][etc]{gupta2020distribution,jung2023batch,gibbs2025conformal}.
In particular, discretization induces an atomic feature distribution, which avoids the above impossibility results.
Inspired by these works, our first method also concerns discrete-valued features, or more generally features whose distribution has point masses. 
There, we develop a new method that achieves the above guarantee.

For more general feature distributions, it is also reasonable to consider discretizing the feature space. 
However, this approach runs into a roadblock in the setting of covariate shift/data missing at random.
The problem is that 
$Y$ and $A$ \emph{may not be retain
independence}
conditional on the discretized $X$, i.e., the missing at random assumption may not be preserved for the discretized features.
This leads to a setting of arbitrary distribution shift after discretization, for which only weaker guarantees are known to be possible to achieve \citep[see e.g.,][etc]{bastani2022practical}.
One might consider---as detailed in Section~\ref{bin-wCP}---a straightforward approach of applying weighted conformal prediction within each bin, separately for each test point. However, this significantly reduces the sample size available for inference on each test point and may be impractical in many scenarios of interest (e.g., by producing trivial prediction sets of infinite width for many test points).

To overcome this challenge, 
when we know the propensity score exactly, such as in randomized trials or two-phase sampling studies,
we propose a bespoke binning scheme based on the 
 propensity score $x\mapsto p_{A \mid X}(x) = \PPst{A = 1}{X=x}$.
Our approach is inspired by the balancing property of the propensity score, which ensures that conditioning on its precise value, $Y$ and $A$ remain independent \citep{rosenbaum1983central}.
However, for continuous-valued features, we need to discretize the propensity score to ensure there are multiple datapoints in each bin after discretization.
Therefore, going beyond the known exact balancing property of the propensity score, we show that by discretizing it appropriately,
we retain \emph{approximate independence}. When the propensity score needs to be estimated, in general we  incur an additional slack in our coverage guarantee, which we characterize precisely. 

Specifically, we will consider the following guarantee:

\begin{definition}[Propensity score discretized feature-conditional coverage]\label{proc}
  Suppose 
 the propensity score function satisfies
  $0 < p_{A \mid X}(x) < 1$ for all $x \in \X$. 
Consider
 a strictly increasing sequence 
 $(z_k)_{k \in \mathbb{Z}}$
in $[0,1]$ with $\lim_{k\to-\infty} z_k = 0$ and $\lim_{k\to\infty} z_k = 1$, and
the partition $\mathcal{B}$ of the feature space $\X$ given by
\begin{equation}\label{eqn:eps_partition}
    \mathcal{B} = \{D_k : k \in \mathbb{Z}\},\,\,  D_k = \left\{x : p_{A \mid X}(x) \in [z_k,z_{k+1})\right\}.
\end{equation}
For $i \in [n]$, let 
$B_i$ be the unique index
$k \in \mathbb{Z}$ such that $D_k$ contains $X_i$.
A procedure $\ch$ satisfies 
\emph{propensity score discretized feature-conditional coverage guarantee}\footnote{As above, this reduces to 
$\frac{1}{m}\sum_{j=1}^m\PPst{Y_{n+j} \in \widehat{C}(X_{n+j})}{(B_{n+j})_{j \in [m]}} \ge 1-\alpha$, but the probabilities do not in general simplify to $\PPst{Y_{n+j} \in \widehat{C}(X_{n+j})}{B_{n+j}}$ since $\hat C$ can depend on all of $(X_{n+j})_{j \in [m]}$ and thus $(B_{n+j})_{j \in [m]}$.} 
at level $1-\alpha$ if
\begin{equation}\label{psmcc}
\EEst{\frac{1}{m}\sum_{j=1}^m \One{Y_{n+j} \in \widehat{C}(X_{n+j})}}{(B_{n+j})_{j \in [m]}} \ge 1-\alpha, \qquad \text{ almost surely}.
\end{equation}
\end{definition}

This guarantee depends on the partition $\mathcal{B}$, and we will  later specify the form we use in our results. The discretized feature-conditional guarantee~\eqref{psmcc} can be considered a surrogate for the original feature-conditional guarantee~\eqref{eqn:in_exp_guarantee_X_conditional}, in the sense that the discretized features $(B_{n+j})_{j \in [m]}$ contain partial information about the true features $(X_{n+j})_{j \in [m]}$. In the experiments, we will show that controlling the bin-conditional coverage indeed leads to control of the feature-conditional coverage in most cases.

\subsubsection{Overview}
Here, we briefly outline the organization of the remainder of this section.

\begin{enumerate}
\item We first discuss the setting where $X$ is discrete, and provide a procedure that achieves the feature-conditional coverage guarantee~\eqref{eqn:in_exp_guarantee_X_conditional}. This procedure serves as the first key step in deriving the main procedure, pro-CP, for general feature distributions.

\item We then introduce \textit{propensity score $\eps$-discretization}, which constitutes the second key step. We show that the proposed propensity score discretization scheme induces ``approximate exchangeability" within the bins formed by the discretization. 

\item By combining these two key steps, we propose the main procedure---\textit{propensity score discretization-based conformal prediction (pro-CP)}---and demonstrate that it achieves the bin-conditional coverage guarantee~\eqref{psmcc} at level $1-\alpha-\eps$, when the propensity score is known.

\item For the case where the propensity score is unknown and an estimate is used, we derive a bound on the additional error in the conditional coverage.
\end{enumerate}

Additionally, as an extension, we present in Appendix~\ref{sq} a procedure that satisfies an even stronger condition---namely, a squared coverage guarantee.

\subsection{Inference for missing outcomes with discrete features}

We begin with a simpler case, where the feature distribution is discrete---or more generally, has atoms or point masses. 
We introduce a procedure that achieves\footnote{The procedure we introduce in this subsection attains the target guarantee in a completely distribution-free sense, but its usefulness is limited to the case of discrete features.} the conditional coverage guarantee~\eqref{eqn:in_exp_guarantee_X_conditional}.
While our main focus is on general feature distributions, the procedure for the discrete case serves as an important step toward deriving our main method, pro-CP. 

\subsubsection{Naive approach---conformal prediction for each distinct feature}\label{naive}
A direct 
approach to achieve \eqref{eqn:in_exp_guarantee_X_conditional}---which we present just as a warm-up example and a baseline for the case of discrete features---is to run 
standard conformal prediction for each distinct value of $X_i$, $i\in[n]$. 
To make this clear, let
$M\ge1 $
 be the number of distinct values in $(X_1,X_2,\ldots,X_{n+m})$, and let $\{X_1',X_2',\ldots,X_M'\}$ be those values. 
For each $k\in [M]$, define the sets
\[I_k = \{i \in [n+m] : X_i = X_k'\},\,\,
I_k^1 = \{i \in [n] : X_i = X_k'\}, \,\,
I_k^0 = \{j \in [m] : X_{n+j} = X_k'\},\]
and let $N_k = |I_k|, N_k^0 = |I_k^0|, N_k^1 = |I_k^1|$. 
Let $s:\mX\times\mY\to\R$ be a score function, constructed based on independent data. See \cite{vovk2005algorithmic,angelopoulos2021gentle} for standard examples.
For example, one can apply data splitting and construct an estimated mean function $\hat{\mu}(\cdot)$ with one of the splits by applying any regression method, and then choose to work with the residual score $s: (x,y) \mapsto |y-\hat{\mu}(x)|$---the following procedure is then applied to the other split. 

Then for each unique feature index $k \in [M]$, 
one can construct a standard split conformal prediction set for $\{Y_{n+j} : j \in I_k^0\}$ as
\begin{equation}\label{eqn:discrete_split_conformal}
    \ch(X_k') = \left\{y \in \Y : s(X_k',y) \le Q_{1-\alpha}\left(\sum_{i \in I_k^1} \frac{1}{N_k^1+1}\cdot \delta_{s(X_k',y)} + \frac{1}{N_k^1+1}\cdot\delta_\infty\right)\right\}.
\end{equation}
Such a set has the well-known property
 that for all $j \in I_k^0$, $\PPst{Y_{n+j} \in \ch(X_k')}{X_k'} \ge 1-\alpha$~\citep{papadopoulos2002inductive,vovk2005algorithmic}.
By a simple calculation, this implies the guarantee\eqref{eqn:in_exp_guarantee_X_conditional}.

While the prediction sets from~\eqref{eqn:discrete_split_conformal} provide
valid distribution-free inference,
they can be excessively wide to be informative. 
For example, if there is a missing outcome with a ``rare''  feature value, i.e., where $N_k$ is small, \eqref{eqn:discrete_split_conformal} 
can be the entire set $\Y$. We introduce below an alternative procedure, which provides a uniform bound on the scores of missing outcomes.

\subsubsection{Conformal-type method for simultaneous inference}
\label{hier}

Next, we discuss an approach that can pool datapoints across feature values.
A key technical observation is that the 
coverage property~\eqref{eqn:in_exp_guarantee} or~\eqref{eqn:in_exp_guarantee_X_conditional}
is equivalent to coverage for a randomly chosen missing outcome. 
Suppose we draw an index $j^*$ from 
the uniform measure
$\textnormal{Unif}([m])$, independently of the data. Then the coverage rate of $\widehat{C}(X_{n+j^*})$ can be represented as
\begin{equation}\label{isu}
  \PP{Y_{n+j^*} \in \widehat{C}(X_{n+j^*})} = \EE{\EEst{\One{Y_{n+j^*} \in \widehat{C}(X_{n+j^*})}}{(X_i,Y_i)_{i \in [n+m]}}} = \EE{\frac{1}{m}\sum_{i=1}^m \One{Y_{n+j} \in \widehat{C}(X_{n+j})}}.
\end{equation}
Similar representations are also possible for the conditional coverage guarantee~\eqref{eqn:in_exp_guarantee_X_conditional}.
To bound $\PP{Y_{n+j^*} \in \widehat{C}(X_{n+j^*})}$, we use that conditionally on $X_{1:(n+m)}$, 
the distribution of all outcomes
$Y_{1:(n+m)}$ is invariant under the group of permutations that keeps the feature values fixed.
We construct a score $s$ from a separate dataset, and define $S_i = s(X_i,Y_i)$ for $i\in[n]$.
For each $x\in \mathcal{X}$, let
\beq\label{eqn:CI_MAR_E}
\ch(x) = \left\{y \in \Y : s(x,y) \le Q_{1-\alpha}\left(\sum_{k=1}^M \sum_{i \in I_k^1}\frac{1}{m}\cdot \frac{N_k^0}{N_k}\cdot \delta_{S_i} + \frac{1}{m}\sum_{k=1}^M \frac{(N_k^0)^2}{N_k} \cdot\delta_{+\infty}\right)\right\}.
\eeq
In $\ch(x)$, the score $s$ is bounded above uniformly for any value $x$ of the features. 
Hence these sets are likely to be better controlled over $x$ than the standard conformal ones.
We prove the following result, 
under the assumption that 
the random variables within each collection $(Y_i : i \in I_k)$, $k \in [M]$ are  simultaneously exchangeable\footnote{We use the term ``simultaneous exchangeability" to refer to a set of random variables being invariant in distribution under an associated group of permutations.} 
conditional on $X_{1:(n+m)}$---which is a weaker assumption than the model~\eqref{eqn:model}.

\begin{theorem}\label{thm:MAR_in_exp}
Suppose that  
the random variables within each collection $(Y_i : i \in I_k)$, $k \in [M]$ are  simultaneously exchangeable 
conditional on $X_{1:(n+m)}$.
Then, the prediction set $\ch$ from~\eqref{eqn:CI_MAR_E} satisfies the feature-conditional coverage guarantee~\eqref{fmc}.
\end{theorem}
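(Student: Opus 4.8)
The plan is to verify the coverage guarantee by reducing to coverage for a randomly selected missing index $i^*$, as in \eqref{isu}, and then invoking the general theory of prediction sets under distributional invariance. First I would condition throughout on $X_{1:n}$ and $A_{1:n}$; this fixes the sets $I_k, I_k^0, I_k^1$ and the counts $N_k, N_k^0, N_k^1, N^{(0)}$, and makes $\ch$ a deterministic function (given the independently-constructed score $s$). The key structural fact is that, conditional on $X_{1:n}, A_{1:n}$, the outcome vector $(Y_i)_{i\in I_k}$ is exchangeable within each block $I_k$, simultaneously across $k\in[M]$; hence the score vector $(S_i)_{i\in I_k}$ inherits this block-exchangeability. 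This is exactly the group-invariance setting of \citet{dobriban2023symmpi}, where the relevant group is the direct product $\prod_{k=1}^M \mathfrak{S}_{N_k}$ of permutations fixing each feature level.

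Next I would make the randomly-chosen-index reduction precise. Draw $i^*\sim\mathrm{Unif}(I_{A=0})$ independently of the data given $A_{1:n}$; by \eqref{isu} (and its feature-conditional analogue), it suffices to show $\PPst{Y_{i^*}\in\ch(X_{i^*})}{X_{1:n},A_{1:n}}\ge 1-\alpha$. Conditional on everything, the chosen index lands in block $k$ with probability $N_k^0/N^{(0)}$, and given that, $i^*$ is uniform on $I_k^0$. The crucial point is to identify the conditional distribution of the ``test score'' $S_{i^*}$ relative to the empirical distribution of the ``calibration scores'' $\{S_i : i\in I_k^1\}$ together with the appropriate point mass at $+\infty$ accounting for the unobserved scores $\{S_i : i\in I_k^0\}$. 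By block-exchangeability, conditional on the multiset of scores in block $k$, the labeled value $S_{i^*}$ is uniformly distributed over that multiset; the ones with $A_i=1$ are observed, and the remaining $N_k^0$ of them (including $S_{i^*}$ itself) are not, which is why each such block contributes a weight proportional to $N_k^0$ on $\delta_{+\infty}$ and weights proportional to $1$ on each observed $\delta_{S_i}$. Aggregating over $k$ with the block-selection probabilities $N_k^0/N^{(0)}$ yields precisely the weighted mixture in \eqref{eqn:CI_MAR_E}: observed scores get total weight $\frac{1}{N^{(0)}}\cdot\frac{N_k^0}{N_k}$ each and the infinity atom gets $\frac{1}{N^{(0)}}\cdot\frac{(N_k^0)^2}{N_k}$ from block $k$ (one factor $N_k^0/N_k$ from being unobserved within the block, another factor $N_k^0/N^{(0)}$ from selecting block $k$, times the count $N_k^0$). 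Then the standard quantile-lemma argument — the rank of a uniformly-drawn element of a multiset among that multiset is stochastically large enough — gives that $S_{i^*}$ exceeds the $(1-\alpha)$-quantile of this mixture with probability at most $\alpha$, i.e.\ the desired coverage. The $N^{(0)}=0$ case is handled by the convention $0/0:=1$ and $Q_{1-\alpha}:=+\infty$.

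The main obstacle I anticipate is the bookkeeping in the previous step: correctly accounting for the fact that $S_{i^*}$ is itself one of the unobserved scores in its block, so one must compare $S_{i^*}$ against the empirical distribution on the \emph{remaining} scores in the block plus a point mass absorbing the other unobserved ones, and then check that after mixing over blocks the weights telescope exactly to the coefficients $\frac{N_k^0}{N^{(0)}N_k}$ and $\frac{(N_k^0)^2}{N^{(0)}N_k}$ written in \eqref{eqn:CI_MAR_E}, with total mass one. A clean way to organize this is to write the target probability as $\sum_k \frac{N_k^0}{N^{(0)}}\,\EEst{\frac{1}{N_k^0}\sum_{i\in I_k^0}\One{S_i > Q_{1-\alpha}(\cdots)}}{X_{1:n},A_{1:n}}$ and apply the exchangeable-rank bound within each block, with the conservative $\delta_{+\infty}$ atoms ensuring the inequality goes the right way even though the within-block calibration set $\{S_i:i\in I_k^1\}$ is smaller than $N_k$. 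Everything else — measurability of $\ch$, the independence of $s$ from $\D$, and the reduction via $i^*$ — is routine.
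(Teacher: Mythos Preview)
Your plan is correct and matches the paper's proof: both reduce to $\PPst{S_{i^*}\le q}{X_{1:n},A_{1:n}}\ge 1-\alpha$ via the random-index trick, exploit block exchangeability (the paper averages over permutations in $\prod_k\mathcal{S}_{N_k}$ explicitly, which is equivalent to your conditioning on the within-block multisets), and then invoke the deterministic quantile inequality together with the monotonicity $q\ge q_{1-\alpha}(\tilde S_1,\ldots,\tilde S_M)$ coming from $\bar S_i\ge S_i$. One small slip to fix in your bookkeeping: the mass at $+\infty$ from block $k$ is $\tfrac{N_k^0}{N^{(0)}}\cdot\tfrac{N_k^0}{N_k}$---the block-selection probability times the unobserved fraction---so your parenthetical ``times the count $N_k^0$'' is one factor too many.
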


In the prediction set~\eqref{eqn:CI_MAR_E}, feature values without any missing outcomes are not used for  inference. 
Specifically, in the prediction set~\eqref{eqn:CI_MAR_E}, 
scores $S_i$ with $i\in I_k^1$ 
have a zero point mass if $N_k^0=0$.
This is reasonable, 
since in a distribution-free setting where $P_{Y \mid X}$ is unrestricted, 
the outcomes for one feature value do not provide information about the conditional distribution of the outcome at another feature value. 

\begin{remark}
In the proof of Theorem~\ref{thm:MAR_in_exp}, we show that
\[\EEst{\frac{1}{m}\sum_{j=1}^m \One{Y_{n+j} \in \widehat{C}(X_{n+j})}}{(X_i)_{i \in [n+m]}} \ge 1 - \alpha, \qquad \textnormal{almost surely,}\]
which is a stronger guarantee than the condition in~\eqref{fmc}, as it additionally conditions on the calibration feature observations $(X_i)_{i \in [n]}$. 
Nevertheless, we choose to adopt Definition~\eqref{fmc} as the main representation of the guarantee to make the test-conditional nature of the inference clear.

\end{remark}

\subsubsection{Constructing narrower prediction sets via partitioning the test set}
\label{pa}

The method from~\eqref{eqn:CI_MAR_E} may still provide a conservative prediction set 
if the overall missingness probability is high. 
Indeed, suppose the proportion of unobserved outcomes for each value of $x$ is around $\tau > 0$. 
Then  the mass at $+\infty$ in~\eqref{eqn:CI_MAR_E} is approximately
$\frac{1}{m}\sum_{k=1}^M (N_k^0)^2/N_k \approx \frac{1}{m}\sum_{k=1}^M \tau\cdot N_k^0 = \tau$.
Thus, if $\alpha\lesssim\tau$, 
the quantile determining the upper bound equals $+\infty$, and hence the prediction set is trivial.
To deal with such cases, we 
discuss a more general procedure that covers the previous two procedures~\eqref{eqn:discrete_split_conformal} and~\eqref{eqn:CI_MAR_E} at its two extremes.
We will see that we can choose an intermediate setting to avoid the problems of the two extremes.

Let $U=\{U_1,U_2,\ldots,U_L\}$ be a partition of $[m]$, and let $N_{\ell}^0 = |U_{\ell}|$ for $\ell \in [L]$. Now, Theorem~\ref{thm:MAR_in_exp} 
holds if the outcomes are independent conditional on the feature observations and missingness indices. 
Thus, we can apply the procedure $\ch$ 
from \eqref{eqn:CI_MAR_E}
to the subset $(X_{n+j})_{j \in U_{\ell}}$ of the test inputs, obtaining a prediction set function $\ch^{\ell}$ such that
\[\EEst{\frac{1}{N_{\ell}^0}\sum_{j \in U_{\ell}} \One{Y_{n+j} \in \ch^{\ell}(X_{n+j})}}{(X_{n+j})_{j \in U_{\ell}}}\ge 1-\alpha.\]
For indices $j \in [m]$  of unobserved outcomes,
let $\ell_j$ denote the unique partition index $\ell \in [L]$ such that $U_{\ell}$ contains $j$.
Repeating this procedure for all $\ell = 1,2,\ldots,L$, we obtain prediction sets $\{\ch^{\ell_j}(X_{n+j}) : j \in [m]\}$. These sets satisfy the feature-conditional coverage guarantee, as a direct consequence of Theorem~\ref{thm:MAR_in_exp}.

\begin{corollary}\label{cor:in_exp_discrete_partition}
Under the assumptions of Theorem~\ref{thm:MAR_in_exp},
the prediction sets $\{\ch^{\ell_j}(X_{n+j}) : j \in [m]\}$ satisfies the feature-conditional coverage guarantee~\eqref{fmc}.
\end{corollary}

At one extreme, if we choose the set of singletons $U^{(1)} = \{\{j\} : j\in[m]\}$ as the partition $U$, the above procedure reduces to running split conformal prediction separately for each test point, using all calibration data points with the same feature value as the holdout set, as in~\eqref{eqn:discrete_split_conformal}. At the other extreme, if we choose $U^{(m)}=\{[m]\}$, then the procedure $\ch_{U^{(m)}}$ is equivalent to~\eqref{eqn:CI_MAR_E}.

Generally, $U$ can be any partition that is independent of $Y_{1:(n+m)}$ conditionally on $X_{1:(n+m)}$. 
For example, it can be determined using a separate dataset, such as the one used to construct the score. Alternatively, 
it may depend on $X_{1:(n+m)}$, aiming to achieve a small probability mass on $+\infty$ in each prediction set. 
This can be achieved by ensuring that only test datapoints for each feature value are included in any given element of $U$.
In Section \ref{testpart}, we develop some optimal partitioning methods based on integer programming. We also show that when there are an equal number of test datapoints for each distinct feature value, then partitioning such that each partition element has one test datapoint per distinct feature allows for the largest coverage without trivial (full-$\mathcal Y$) prediction sets.  
In our experiments, we find that partitioning the test set uniformly at random into a small number---say $L=10$---of partition elements works well.

\subsection{Conditional inference for general feature distributions}\label{sec:conformal_discretization}

The above methods
provide simultaneous distribution-free inference for missing outcomes, and are useful 
when the feature distribution has a small support size compared to the sample size $n$, 
so that we have large enough 
numbers $N_1,N_2,\ldots,N_M$ of repeated feature values. Now we discuss methods for more general feature distributions.

One can consider discretizing the observed features,
so as to repeatedly sample datapoints falling within each feature-bin as before. However, 
this is not straightforward, 
since $Y$ and $A$ may not be
independent conditional on the discretized $X$, i.e., the missing at random assumption may not hold for the discretized data. 
To overcome this challenge, 
we propose discretizing based on the 
 propensity score $x\mapsto p_{A \mid X}(x) = \PPst{A = 1}{X=x}$, 
 in which case we show that we  retain \emph{approximate independence} after discretization.
We first consider the case where we 
know 
the propensity score, 
as in randomized trials
and two phase sampling \citep{breslow2007weighted,saegusa2013weighted}.
We then characterize the impact of the additional uncertainty one must incur when, as typically the case in practice, the propensity score is not known and therefore must be estimated from the observed sample. 

We also mention that, while a direct per-bin application of weighted conformal prediction \citep{tibshirani2019conformal} can attain theoretically valid coverage, it is severely hampered because it does not allow pooling datapoints across bins, hence reducing the effective sample size. In contrast, by using hierarchical exchangeability as in Section \ref{hier}, pro-CP is more effective as it is able to pool datapoints across bins.

\subsubsection{Propensity score \texorpdfstring{$\ep$}{epsilon}-discretization}
\label{pss}

Given the propensity score function $p_{A \mid X}$, 
we choose a discretization level $\eps > 0$ and construct the partition \eqref{eqn:eps_partition} of the feature space $\X$ with $z_k = (1+\eps)^k/[1+(1+\eps)^k]$ for each $k \in \mathbb{Z}$. 
This is a valid partition if $0 < p_{A \mid X}(x) < 1$ for all $x \in \X$. 
Each bin $D_k$ contains feature values with similar odds values of the propensity score---by construction, for any $x \in D_k$, it holds that
\begin{equation}\label{di}
(1+\eps)^k \le \frac{p_{A \mid X}(x)}{1-p_{A \mid X}(x)} < (1+\eps)^{k+1}.
\end{equation}
We call this step \textit{propensity score $\ep$-discretization}.

We prove the following property, which serves as a key lemma for the main theorem.

\begin{lemma}[Bounded propensity score implies closeness of conditional distributions for observed and missing outcome]\label{lem:dtv_0_1}
    Suppose that $(X,Y,A) \sim P_X \times P_{Y \mid X} \times \textnormal{Bernoulli}(p_{A \mid X})$ on $\X \times \Y \times \{0,1\}$, 
    and that for a set $D \subset \X$ and $t \in (0,1)$, $\ep\ge 0$,
    \[t\le \frac{p_{A \mid X}(x)}{1-p_{A \mid X}(x)} \le t(1+\eps), \textnormal{ for all } x \in D.\]
    Let $s : \X \times \Y \rightarrow \R$ by any measurable function and let $S = s(X,Y)$. 
    Then
    $\dtv(P_{S \mid A=1, X \in D}, P_{S \mid A=0, X \in D}) \le \eps$.
\end{lemma}

Lemma~\ref{lem:dtv_0_1} essentially states that if we construct bins based on $\ep$-discretization of the propensity score, then the distribution of missing outcomes within each bin is approximately the same as that of the observed outcomes, with the total variation distance controlled by $\ep$. We apply this property to deduce ``approximate within-bin exchangeability", which enables conformal-type predictive inference with a provable coverage guarantee.

\begin{remark}
From another perspective, the result of Lemma~\ref{lem:dtv_0_1} can also be interpreted as stating that the ``$\ep$-discretized propensity score exhibits $\ep$-approximate balancing". 
The propensity score has the balancing property \citep{rosenbaum1983central}, i.e., $Y \indep A \mid p_{A \mid X}(X)$. 
Now, 
Lemma~\ref{lem:dtv_0_1} with the score function $s(x,y) = y$
implies that $Y$ is ``approximately independent'' 
of $A$ conditional on the discretized propensity score.
Here, 
conditioning on the event $X \in D_k$ is equivalent to conditioning on $\lfloor\log_{1+\ep} \frac{p_{A \mid X}(X)}{1-p_{A \mid X}(X)}\rfloor = k$, i.e, the event that the discretized propensity score equals $k$. 
In summary, the proposed propensity score $\ep$-discretization leads to approximate independence conditional on the discretized score, with the ``approximate independence" being characterized by total variation distance bounded by $\eps$.
\end{remark}

\subsubsection{Propensity score discretized feature-conditional coverage control with pro-CP}
Now we consider the procedure where we apply~\eqref{eqn:CI_MAR_E} to the discretized data
obtained via propensity score $\ep$-discretization. 
Define $B_i$ for $i \in [n]$
as in Definition \ref{proc}
and
apply the procedure~\eqref{eqn:CI_MAR_E} on the data $(B_i, Z_i)_{i \in [n]}$ and $(B_{n+j})_{j \in [m]}$, where $Z_i = (X_i,Y_i)$ for $i \in [n+m]$, with score $s(b,z) = s(x,y)$ for all $b,x,y$. 

Write $S_i = s(X_i,Y_i)$  for all $i\in[n+m]$. Let $\{B_1',B_2',\ldots,B_M'\}$ be the set of distinct values in $(B_1,B_2,\ldots,B_{n+m})$, 
and for each $k \in [M]$, define
\[I_k^\B = \{i \in [n+m] : B_i = B_k'\},\,\,
I_k^{\B,0} = \{i \in [n] : B_i = B_k'\},\,\,
I_k^{\B,1} = \{j \in [m] : B_{n+j} = B_k'\},\]
and let $N_k^\B = |I_k^\B|, N_k^{\B,0} = |I_k^{\B,0}|, N_k^{\B,1} = |I_k^{\B,1}|$.
Here $I_k^\B$ is the index set of datapoints in a specific bin,
where datapoints with unobserved and observed outcomes are indexed by $I_k^{\B,0}$ and $I_k^{\B,0}$, respectively.
We propose the following procedure, which we call \textit{propensity score discretization-based conformal prediction (pro-CP)}.
For all $x\in\mathcal{X}$, let
\beq\label{eqn:pro_CP}
\chp(x) = 
\left\{y \in \Y : s(x,y) \le Q_{1-\alpha}\left(\sum_{k=1}^M \sum_{i \in I_k^{\B,1}}\frac{N_k^{\B,0}}{ m N_k^\B} \cdot \delta_{S_i} 
+ \frac{1}{m}\sum_{k=1}^M \frac{(N_k^{\B,0})^2}{N_k^\B} \cdot\delta_{+\infty}\right)\right\}.
\eeq
We prove the following result.
\begin{theorem}[Coverage of pro-CP]\label{thm:MAR_known_p_A_X}
   Suppose $0 < p_{A \mid X}(x) < 1$ holds for all $x \in \X$. Then
   the pro-CP procedure
   $\chp$ from~\eqref{eqn:pro_CP} satisfies
the propensity score discretized feature-conditional coverage
as per Definition \ref{proc}
at level $1-\alpha-\ep$.
\end{theorem}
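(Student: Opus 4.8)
The plan is to reduce Theorem~\ref{thm:MAR_known_p_A_X} to Theorem~\ref{thm:MAR_in_exp} plus a correction term controlled by the approximate balancing result (Lemma~\ref{lem:dtv_0_1}, cited as \Cref{lem:dtv_0_1} in the excerpt). The key point is that pro-CP is literally the procedure~\eqref{eqn:CI_MAR_E} applied to the discretized data $(B_i, A_i, Z_i A_i)_{1\le i\le n}$. If the discretized data satisfied the hypothesis of Theorem~\ref{thm:MAR_in_exp}---namely that within each bin $D_k$ the outcomes $(Z_i : i \in I_k^\B)$ are simultaneously exchangeable conditionally on $B_{1:n}, A_{1:n}$---then we would immediately get exact propensity score discretized feature- and missingness-conditional coverage at level $1-\alpha$. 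The obstacle is that this exchangeability fails: even though the full data is MAR (so $A_i \perp Y_i \mid X_i$), conditioning on the \emph{binned} value $B_i$ rather than the exact $X_i$ destroys this conditional independence, so within a bin the joint law of $(A_i, Z_i)$ need not be exchangeable under permutations that fix $B_{1:n}$.

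\textbf{First}, I would make precise the sense in which approximate exchangeability holds. Fix a bin $D_k$ with odds of the propensity score in $[(1+\ep)^k, (1+\ep)^{k+1})$. Conditionally on $X_{1:n}$ and the event $\{X_i \in D_k\}$, the $A_i$'s are independent Bernoullis with parameters $p_{A\mid X}(X_i)$ that vary only within a factor tied to $\ep$ in odds; meanwhile the $Z_i = (X_i, Y_i)$ are i.i.d.\ from $P_X \times P_{Y\mid X}$ restricted to $D_k$. The coverage of pro-CP can be written, via the ``random missing index'' device of~\eqref{isu} adapted to bins, as $\PP{S_{i^*} \le \widehat q}$ for $i^*$ drawn uniformly from $I_{A=0}$ and $\widehat q$ the data-dependent quantile in~\eqref{eqn:pro_CP}. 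Under the \emph{hypothetical} law in which, within each bin, the assignment of which datapoints are missing is exchangeable (equivalently, the $A_i \mid X_{1:n}$ were i.i.d.\ Bernoulli with a common per-bin parameter), Theorem~\ref{thm:MAR_in_exp} gives coverage $\ge 1-\alpha$ conditionally on $B_{1:n}, A_{1:n}$. The task is to bound the change in this probability when we replace the true law by that hypothetical law.

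\textbf{Second}, I would invoke the approximate balancing lemma. The statement ``propensity score discretization leads to approximate balancing'' (\Cref{lem:dtv_0_1}) should give a total-variation-type bound: the conditional law of $Y$ given $\{X \in D_k, A = 0\}$ and the conditional law of $Y$ given $\{X \in D_k, A = 1\}$ differ in total variation by something like $O(\ep)$; equivalently, the joint law of $(A_{1:n}, Y_{1:n})$ within a bin is within $O(\ep)$ in total variation of a law under which $A \perp Y$. Since pro-CP's output---the prediction set, hence the coverage indicator $\One{Y_{i^*} \in \chp(X_{i^*})}$---is a fixed measurable function of $(B_{1:n}, A_{1:n}, Z_{1:n})$, the total variation bound transfers directly: the true coverage differs from the coverage under the balanced (exchangeable) law by at most the total variation distance, which is at most $\ep$ after summing/weighting across bins appropriately (the per-bin TV bounds combine because the weights $N_k^{\B,0}/N^{(0)}$ sum to one, or because the TV distance on the product is bounded by the worst per-bin TV distance). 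Combining with the $\ge 1-\alpha$ bound under the balanced law yields coverage $\ge 1-\alpha-\ep$.

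\textbf{The main obstacle} I anticipate is making the coupling/total-variation argument rigorous while the quantile $\widehat q$ in~\eqref{eqn:pro_CP} is itself a function of $(A_{1:n}, Z_{1:n})$ (it depends on the scores $S_i$ for observed points and on the bin counts): one must be careful that the event $\{S_{i^*} \le \widehat q\}$ is a genuine measurable function of the joint data so that the TV transfer applies, and one must verify that the $O(\ep)$ slack from Lemma~\ref{lem:dtv_0_1} aggregates to exactly $\ep$ (not a constant multiple) across the infinitely many bins---this is where the precise form of the partition, $z_k = (1+\ep)^k/[1+(1+\ep)^k]$, and the fact that the odds-ratio within a bin is bounded by $1+\ep$, must be used sharply. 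A secondary subtlety is the degenerate case $N^{(0)} = 0$ and the $0/0 := 1$ convention, which should be handled exactly as in the proof of Theorem~\ref{thm:MAR_in_exp}.
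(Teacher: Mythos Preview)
Your proposal is correct and follows essentially the same route as the paper: define a hypothetical law $Q$ under which the within-bin scores are exchangeable (concretely, the paper sets $S_i \sim P_{s(X,Y)\mid A=1,\,X\in B_i}$ even for $i\in I_{A=0}$), invoke Theorem~\ref{thm:MAR_in_exp} under $Q$ to get coverage $\ge 1-\alpha$, and then bound the discrepancy with the true law $P$ via Lemma~\ref{lem:dtv_0_1}.

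The obstacle you flag---why the $\ep$-slack does not accumulate with $n$---is resolved exactly along the line you sketch with the convex weights, and the paper makes this sharp as follows. The quantile $\widehat q$ in~\eqref{eqn:pro_CP} depends on the data only through $(S_i)_{i\in I_{A=1}}$ and the bin counts, all of which have the \emph{same} law under $P$ and $Q$ and are independent of $(S_i)_{i\in I_{A=0}}$. Hence the event $\{S_{i^*}\le \widehat q\}$ depends on the part of the data that differs between $P$ and $Q$ only through the single variable $S_{i^*}$, and it suffices to bound $\dtv(P^*,Q^*)$ where $P^*,Q^*$ are the marginals of $S_{i^*}$. Since $S_{i^*}$ is the uniform mixture $\tfrac{1}{N^{(0)}}\sum_{i\in I_{A=0}}\mathrm{Law}(S_i)$, Lemma~\ref{lem:dtv_0_1} applied termwise gives $\dtv(P^*,Q^*)\le \tfrac{1}{N^{(0)}}\sum_{i\in I_{A=0}}\ep=\ep$. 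Your parenthetical ``equivalently, the joint law of $(A_{1:n},Y_{1:n})$ within a bin is within $O(\ep)$ in total variation'' is not actually equivalent (that joint TV would scale with the number of points), but you never need it; the reduction to the marginal of $S_{i^*}$ is the point.
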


Thus the coverage in Theorem~\ref{thm:MAR_known_p_A_X} 
is lower 
than the target coverage level $1-\alpha$ by at most $\eps$, 
where $\eps$ is due to the discretization step. 
We will see in experiments that 
this bound represents a worst case scenario, and 
the coverage empirically tends to still be close to $1-\alpha$. 
Observe that, to attain a provable $(1-\alpha')$-coverage for some $\alpha'\in[0,1]$, we can set 
$\eps$ and $\alpha$ appropriately---for instance, $\alpha = 0.8\alpha'$ and $ \eps = 0.2\alpha'$. That is, the pro-CP procedure provides an exact distribution-free control of the relaxed feature-conditional guarantee.

We provide a brief intuition 
for the discretization strategy and the proof of Theorem~\ref{thm:MAR_known_p_A_X}. 
The method ensures 
that the odds ratio of the propensity scores of two features $x_1$ and $x_2$ in the same bin is in the interval $[({1+\eps})^{-1}, 1+\eps]$. This results
in the approximate exchangeability of outcomes in the same bin. We show that such an exchangeable distribution is within a 
total variation distance of $\eps$ after binning, leading to our bound. 
The proof relies on a
 new theoretical result (\Cref{lem:dtv_0_1}) showing that propensity score discretization leads to approximate independence between the distributions of $Y$ and $A$.
We think that this result may have further uses, such as in studying propensity score matching in causal inference \citep{abadie2016matching}.


Similarly to the discrete case from \Cref{pa}, we can obtain narrower prediction sets via partitioning. Specifically, let $U = \{U_1,U_2,\ldots,U_L\}$ be a partition of $[m]$, and let $\ch^{\ell}$ be the prediction algorithm obtained by applying the procedure $\chp$ from~\eqref{eqn:pro_CP} to the subset $(X_{n+j})_{j \in U_{\ell}}$ of the test data, for each $\ell \in [L]$. 
Then for $j \in [m]$, define
$\chp_U(x,j) = \ch^{\ell_j}(x)$, where for $j \in [m]$, 
$\ell_j$ denotes the unique $\ell \in [L]$ that contains $j$. Then by similar arguments, we can show that
$\chp_U$ satisfies propensity score discretized feature-conditional coverage as per Definition \ref{proc}
at level $1-\alpha-\ep$---with prediction sets $\chp_U(X_{n+j},j)$, $j \in [m]$. Again, applying this procedure with $U = {[m]}$ recovers the original pro-CP procedure. The complete procedure, including this partitioning strategy, is summarized in Algorithm~\ref{alg:pro_CP}.

\begin{algorithm}
\caption{Propensity score discretization-based conformal prediction (pro-CP)}
\label{alg:pro_CP}
\textbf{Input:} Calibration data $(X_i,Y_i)_{i \in [n]}$, test inputs $(X_{n+j})_{j \in [m]}$, partition $U = \{U_1,\cdots,U_L\}$ of $[m]$, score function $s : \X \times \Y \rightarrow \R^+$, propensity score function $p_{A \mid X} : \X \rightarrow (0,1)$, discretization level $\eps \in (0,1)$.

\textbf{Step 1:} Compute the calibration scores $S_i = s(X_i, Y_i)$ for all $i \in [n]$.

\textbf{Step 2:} Discretize the features $(X_i)_{i \in [n+m]}$ based on propensity score $\eps$-discretization:
\[B_i = \left\lfloor\log_{1+\eps} \left( \frac{p_{A \mid X}(X_i)}{1 - p_{A \mid X}(X_i)} \right)\right\rfloor,\quad \text{for all } i \in [n+m].\]

\textbf{Step 3:} Identify the distinct values in $(B_i)_{i \in [n+m]}$ and denote them by $\{B_1', \cdots, B_M'\}$.

\textbf{Step 4:} For each $\ell = 1, \ldots, L$, do:

\textbf{Step 4-1:} Define:
\[I_k^\B = \{i \in [n] \cup U_{\ell} : B_i = B_k'\},\quad
I_k^{\B,0} = \{i \in [n] : B_i = B_k'\},\quad
I_k^{\B,1} = \{j \in U_{\ell} : B_{n+j} = B_k'\},\]
for $k \in [M]$, and compute their sizes: $N_k^\B = |I_k^\B|, N_k^{\B,0} = |I_k^{\B,0}|, N_k^{\B,1} = |I_k^{\B,1}|$.

\textbf{Step 4-2:} For each $j \in U_{\ell}$, construct the prediction set $\ch(X_{n+j})$ as:
\scalebox{0.95}{
$\ch(X_{n+j}) = 
\left\{y \in \Y : s(X_{n+j}, y) \le Q_{1-\alpha}\left(\sum_{k=1}^M \sum_{i \in I_k^{\B,1}} \frac{N_k^{\B,0}}{m N_k^\B} \cdot \delta_{S_i} 
+ \frac{1}{m} \sum_{k=1}^M \frac{(N_k^{\B,0})^2}{N_k^\B} \cdot \delta_{+\infty} \right)\right\}.$
}

\textbf{Return:} Prediction sets $(\ch(X_{n+j}))_{j \in [m]}$.
\end{algorithm}

\subsubsection{Optimality of propensity score discretization}

Moreover, we claim that the proposed propensity score discretization scheme is optimal in a sense described below. To introduce this result, consider a space $\mathcal{X}$ and a binning scheme $\mathcal{D} = \{D_\lambda : \lambda \in \Lambda\}$, where $\Lambda$ is an at most countable index set, that partitions the feature space $\mathcal{X}$. Let $P = P_X \times P_{Y \mid X} \times P_{A \mid X}$ be the distribution of $(X, A, Y)$.
For each $\lambda \in \Lambda$, 
consider sets $V_\lambda \subseteq \mathcal{Y}$, 
viewed as prediction sets for $Y$ corresponding to feature values in $D_\lambda$. These sets are considered fixed---e.g., by conditioning on the calibration data. The coverage of these sets $\mathcal{V} = (V_\lambda)_{\lambda \in \Lambda}$ under $A = a$ is given by
\[
\operatorname{Cover}(\mathcal{V}, P, A = a) = \sum_{\lambda \in [\Lambda]} \PP{X \in D_\lambda} \PPst{Y \in V_\lambda}{X \in D_\lambda, A = a}.
\]
The absolute difference between the coverage probabilities under $A = 0$ and $A = 1$---i.e., the error from covariate shift---is
\[
\Delta_\mathcal{V}(P) = \left| \operatorname{Cover}(\mathcal{V}, P, A = 0) - \operatorname{Cover}(\mathcal{V}, P, A = 1) \right|.
\]
We aim to design a binning scheme that controls this gap in a distribution-free 
sense—regardless of the distribution $P_{X,Y} = P_X \times P_{Y \mid X}$. 
That is, we seek a partition $\mathcal{D}$ such that the worst-case coverage gap
\[ \Delta_\mathcal{V}(P_{A|X}) := \sup_{P_{X,Y}} \Delta_\mathcal{V}(P) \]
is small. Next, define the \emph{propensity discretization error} as
$\mathcal{E}_\mathcal{V}(\mathcal{D}, P_{A|X})
:=\sup_{\lambda \in \Lambda_\mathcal{V}} \mathcal{E}(D_\lambda, P_{A|X})$,
where for $D\subset \mathcal{X}$,
\[
\mathcal{E}(D, P_{A|X}) :=
\sup_{x, x' \in D}
\left| 
{\dfrac{\PP{A=1 \mid X = x}}{1 - \PP{A=1 \mid X = x}}}
\Big/
    {\dfrac{\PP{A=1 \mid X = x'}}{1 - \PP{A=1 \mid X = x'}}} - 1 
\right|,
\]
and $\Lambda_\mathcal{V} = \{\lambda \in \Lambda : V_\lambda \neq \emptyset \text{ and } V_\lambda \neq \Y\}$ denotes the set of $\lambda$'s for which the corresponding set $V_\lambda$ is a nontrivial subset of $\Y$.
Indeed, \eqref{di} ensures that for propensity score $\ep$-discretization, $\mathcal{E}_\mathcal{V}(\mathcal{D}, P_{A|X}) \le \ep$.
The next result implies that this strategy is optimal in a sense, by showing that 
propensity score discretization error tightly controls the coverage gap.

\begin{theorem}[Propensity score discretization error controls the coverage gap]\label{opt}
Consider any 
feature space $\mathcal{X}$, any partition $\mathcal{D} = (D_\lambda)_{\lambda \in \Lambda}$ of $\mathcal{X}$, and any collection of sets $\mathcal{V} = (V_\lambda)_{\lambda \in \Lambda}$.
Then, for any
missingness distribution $P_{A|X}$,
the worst-case coverage gap $\Delta_\mathcal{V}(P_{A|X})$ is at least on the order of the propensity discretization error $\mathcal{E}_\mathcal{V}(\mathcal{D}, P_{A|X})$:
\[
\Delta_\mathcal{V}(P_{A|X}) \ge  \min\{\mathcal{E}_\mathcal{V}(\mathcal{D}, P_{A|X}),1\}/8.
\]
\end{theorem}

In most cases of interest, $\mathcal{E}_\mathcal{V}(\mathcal{D}, P_{A|X})$ is always less than unity, for instance $\mathcal{E}_\mathcal{V}(\mathcal{D}, P_{A|X}) \le 0.1$ in our examples.
Then, 
the above result 
$\Delta_\mathcal{V}(P_{A|X}) \ge  \mathcal{E}_\mathcal{V}(\mathcal{D}, P_{A|X})/8$
shows that a large propensity discretization error $\mathcal{E}_\mathcal{V}(\mathcal{D}, P_{A|X})$ implies a large coverage gap between $A = 0$ and $A = 1$. 

\begin{remark}[Optimality of propensity score discretization]
Propensity score $\eps$-discretization constructs a partition 
whose elements $D$ satisfy 
$\mathcal{E}(D, P_{A|X}) \le \ep$
and are as large as possible.
Indeed, consider any distribution
$P_{A \mid X}$ such that the random variable
$\frac{p_{A \mid X}(X)}{1-p_{A \mid X}(X)}$
has a continuous distribution on 
$[(1+\eps)^k,(1+\eps)^{k'})$ for some $k<k'$.
Then, clearly, propensity score $\varepsilon$-discretization provides the coarsest binning that ensures $\mathcal{E}_\mathcal{V}(\mathcal{D}, P_{A \mid X}) \le \eps$ for any $\lambda_\mathcal{V} \subset \Lambda$. Hence, propensity score $\eps$-discretization can be viewed optimal in this sense.    
\end{remark}

\subsubsection{Approximately valid inference via estimation of missingness probability}
In practice, the propensity score can be unknown an may need to be estimated.
If we apply pro-CP  
discretizing with 
an estimate
$\hat{p}_{A \mid X}$ instead of $p_{A \mid X}$, how much does the error in $\hat{p}_{A \mid X}$ affect the coverage?

Let $\ctp$ denote the pro-CP procedure from~\eqref{eqn:pro_CP}, applied with the propensity score $\ep$-discretization based on $\hat{p}_{A \mid X}$ instead of $p_{A \mid X}$. Define the odds ratio function
$f_{p,\hat{p}} : \X \rightarrow (0,\infty)$ 
 between $\hat{p}_{A \mid X}$ and $p_{A \mid X}$
by
\begin{equation}\label{eqn:f_p_phat}
f_{p,\hat{p}}(x) = \frac{p_{A \mid X}(x) / [1-p_{A \mid X}(x)]}{\hat{p}_{A \mid X}(x) / [1-\hat{p}_{A \mid X}(x)]}
\end{equation}
for all $x$.
 Then we prove the following.

\begin{theorem}\label{thm:MAR_estimated_p_A_X}
    Suppose $0 < p_{A \mid X}(x) < 1$ and $0 < \hat{p}_{A \mid X}(x) < 1$ hold for all $x \in \X$. Then $\ctp$ satisfies
    propensity score discretized feature-conditional coverage as per Definition \ref{proc}
at level $1-\alpha-(\eps+ \delta_{\hat{p}_{A \mid X}}+\eps\delta_{\hat{p}_{A \mid X}})$,
    where
    $\delta_{\hat{p}_{A \mid X}} = e^{2\|\log f_{p,\hat{p}}\|_\infty} - 1$.
\end{theorem}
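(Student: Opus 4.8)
The plan is to reduce \Cref{thm:MAR_estimated_p_A_X} to \Cref{thm:MAR_known_p_A_X} by showing that discretizing with $\hat{p}_{A \mid X}$ is, for the purposes of the coverage analysis, equivalent to an $\eps'$-discretization based on the \emph{true} propensity score, for the inflated level $\eps' := \eps + \delta_{\hat{p}_{A \mid X}} + \eps\,\delta_{\hat{p}_{A \mid X}}$. As the discussion following \Cref{thm:MAR_known_p_A_X} explains, the only property of the $\eps$-discretization used there is that within each bin the odds ratio of the true propensity score between any two feature values lies in $[(1+\eps)^{-1}, 1+\eps]$; \Cref{lem:dtv_0_1} turns this into an $\eps$-total-variation bound between the within-bin conditional distribution of the outcomes (given $X_{1:n}, A_{1:n}$) and an exchangeable one, which in turn produces the additive $\eps$ loss relative to $1-\alpha$. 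So it suffices to establish the analogous within-bin odds-ratio bound, with $\eps$ replaced by $\eps'$, for the bins actually used by $\ctp$.

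Concretely, $\ctp$ applies the pro-CP construction \eqref{eqn:pro_CP} after discretizing by $\{\widehat{D}_k : k \in \mathbb{Z}\}$, $\widehat{D}_k = \{x : \hat{p}_{A \mid X}(x) \in [z_k, z_{k+1})\}$ with $z_k = (1+\eps)^k/[1+(1+\eps)^k]$; this is a valid partition of $\X$ since $0 < \hat{p}_{A \mid X}(x) < 1$ for all $x$. By construction, for $x_1,x_2$ in a common bin $\widehat{D}_k$, the odds ratio of $\hat{p}_{A \mid X}(x_1)$ to $\hat{p}_{A \mid X}(x_2)$ lies in $[(1+\eps)^{-1}, 1+\eps]$. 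The key step is the factorization of the true odds ratio through the function $f_{p,\hat{p}}$ from \eqref{eqn:f_p_phat}:
\[
\frac{p_{A \mid X}(x_1)/(1-p_{A \mid X}(x_1))}{p_{A \mid X}(x_2)/(1-p_{A \mid X}(x_2))} = f_{p,\hat{p}}(x_1)\cdot \frac{\hat{p}_{A \mid X}(x_1)/(1-\hat{p}_{A \mid X}(x_1))}{\hat{p}_{A \mid X}(x_2)/(1-\hat{p}_{A \mid X}(x_2))}\cdot \frac{1}{f_{p,\hat{p}}(x_2)}.
\]
Since $f_{p,\hat{p}}(x_i)\in[e^{-\|\log f_{p,\hat{p}}\|_\infty}, e^{\|\log f_{p,\hat{p}}\|_\infty}]$ for $i=1,2$, multiplying the three factors shows that for $x_1,x_2$ in the same $\widehat{D}_k$,
\[
\frac{p_{A \mid X}(x_1)/(1-p_{A \mid X}(x_1))}{p_{A \mid X}(x_2)/(1-p_{A \mid X}(x_2))}\in\bigl[(1+\eps)^{-1}e^{-2\|\log f_{p,\hat{p}}\|_\infty},\ (1+\eps)\,e^{2\|\log f_{p,\hat{p}}\|_\infty}\bigr]=\bigl[(1+\eps')^{-1},\ 1+\eps'\bigr],
\]
where we used $1+\eps'=(1+\eps)\,e^{2\|\log f_{p,\hat{p}}\|_\infty}=(1+\eps)(1+\delta_{\hat{p}_{A \mid X}})$, i.e.\ $\eps'=\eps+\delta_{\hat{p}_{A \mid X}}+\eps\,\delta_{\hat{p}_{A \mid X}}$. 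Thus $\{\widehat{D}_k\}$ satisfies exactly the within-bin odds-ratio property that a true-propensity-score $\eps'$-discretization would; feeding this in place of the $\eps$-bound into the argument for \Cref{thm:MAR_known_p_A_X} (into \Cref{lem:dtv_0_1} and the ensuing conformal quantile comparison) yields propensity score discretized feature- and missingness-conditional coverage as in \Cref{proc}, with the partition $\mathcal{B}$ taken to be $\{\widehat{D}_k\}$, at level $1-\alpha-\eps'$, which is the claim.

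The main obstacle is making the ``only the within-bin odds ratio matters'' reduction rigorous: one must check that nothing in the proof of \Cref{thm:MAR_known_p_A_X} — neither the invariance/exchangeability bookkeeping conditional on $X_{1:n},A_{1:n}$, nor the step converting the within-bin total-variation bound of \Cref{lem:dtv_0_1} into the additive coverage loss — uses the bins being generated by the \emph{true} propensity score rather than by an arbitrary fixed measurable partition of $\X$ enjoying the stated odds-ratio bound. A minor point to dispatch along the way is that the guarantee is now conditional on the $\hat{p}_{A \mid X}$-induced bins $B_i$, which is legitimate because $\ctp$ depends on the data only through those bins and the observed outcomes; and the bound is of course only meaningful when $\|\log f_{p,\hat{p}}\|_\infty<\infty$, being vacuous otherwise.
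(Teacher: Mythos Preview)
Your proposal is correct and follows essentially the same route as the paper: both arguments multiply the within-bin $(1+\eps)$-odds bound for $\hat p_{A\mid X}$ by the two-sided bound $f_{p,\hat p}(x)\in[e^{-\|\log f_{p,\hat p}\|_\infty},e^{\|\log f_{p,\hat p}\|_\infty}]=[(1+\delta_{\hat p_{A\mid X}})^{-1/2},(1+\delta_{\hat p_{A\mid X}})^{1/2}]$ to obtain a $(1+\eps)(1+\delta_{\hat p_{A\mid X}})$-odds bound for the true propensity on each $\widehat D_k$, and then invoke \Cref{lem:dtv_0_1} and the remainder of the proof of \Cref{thm:MAR_known_p_A_X} verbatim. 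The paper presents this as an absolute bound $t\le p_{A\mid X}/(1-p_{A\mid X})\le t(1+\eps')$ on each bin rather than your pairwise ratio formulation, but the two are equivalent and the resulting $\eps'=\eps+\delta_{\hat p_{A\mid X}}+\eps\,\delta_{\hat p_{A\mid X}}$ is identical.
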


This result shows that a uniformly accurate estimate 
$\log[{\hat{p}_{A \mid X}(x) / (1-\hat{p}_{A \mid X}(x))}]$
of the log-odds ratio 
$\log[{p_{A \mid X}(x) / (1-p_{A \mid X}(x))}]$
guarantees
that the loss in coverage is small compared to
the case of a known propensity score from
\Cref{thm:MAR_known_p_A_X}. Note that this is a worst-case error bound, and in practice, the actual error is often much smaller.

We briefly discuss scenarios where we may have a small error bound $\delta_{\hat{p}_{A \mid X}}$. First, we can consider settings where the propensity score follows a parametric model, allowing for accurate estimation. 
As a simple example,
suppose that the propensity score follows a 
single-index model 
$p_{A \mid X}(x) = \sigma(\beta^\top x)$ for an $L'$-Lipschitz function $\sigma$ for some $L'>0$, 
and that the feature space $\mathcal{X} \subset \mathbb{R}^d$ is bounded, i.e., $\|x\| \le C$ for some $C > 0$. 
Also, suppose there exists $0 < c < 1/2$ such that $c \le p_{A \mid X}(x) \le 1-c$ for all $x \in \mathcal{X}$. Then, for the least squares estimator $\hat{\beta}$ and the corresponding propensity score estimator $\hat{p}_{A \mid X}(x) = \hat{\beta}^\top x$, we have 
$$\left|\log{\frac{p_{A \mid X}(x)}{1-p_{A \mid X}(x)}} - \log{\frac{\hat{p}_{A \mid X}(x)}{1-\hat{p}_{A \mid X}(x)}}\right| 
\le \frac{|p_{A \mid X}(x) - \hat{p}_{A \mid X}(x)| }{c(1-c)}
\le \frac{L'}{c(1-c)} |(\beta-\hat{\beta})^\top x| \le \frac{CL'}{c(1-c)} \|\beta-\hat{\beta}\|$$
for any $x \in \mathcal{X}$ (where the first inequality applies the intermediate value theorem). Consequently, $\|\log f_{p,\hat{p}}\|_\infty$ has an upper bound that scales as $1/\sqrt{n}$ \citep{van2000asymptotic}.

We can also consider nonparametric settings; 
for instance 
let $\mathcal{X} \subset \mathbb{R}^d$ and 
for $\beta>0$, let $l\ge0$
be the integer part 
of $\beta$.
Suppose that for $L>0$,
$p_{A \mid X}$ belongs to a Hölder class $\Sigma(\beta,L)$
of functions
$f:\mathcal{X} \to \R$ 
such that 
for all tuples $(l_1,\ldots,l_d)$ of non-negative integers with $l_1+\ldots+l_d=l$,
one has 
$|f^{(l_1,\ldots,l_d)}(x) - f^{(l_1,\ldots,l_d)}(x')| \le L |x-x'|^{\beta-l}$, 
for all $x, x' \in \mathcal{X}$, where $f^{(l_1,\ldots,l_d)}$ is the $(l_1,\ldots,l_d)$-th partial derivative of $f$ with respect to $(x_1,\ldots,x_d)^\top$.
Then under certain assumptions,
it is known that a local polynomial estimator has an $L_\infty$ norm-error bound that scales as $((\log n)/n)^{\beta / (2\beta+d)}$~\citep{stone1982optimal,tsybakov2009}. Applying a similar argument as above, 
we then find a bound for $\|\log f_{p,\hat{p}}\|_\infty$ of the same order $((\log n)/n)^{\beta / (2\beta+d)}$, 
if for some $c>0$,
$c < p_{A\mid X}(x) < 1-c\; \text{ for all } x \in \X$.

\subsection{Use case---inference for individual treatment effects}
\label{ite}
We discuss a use case of the procedures we introduced, to
obtain inference for individual treatment effects (see e.g., \citet{hernan2020causal}). 
Suppose
\[(X_i,T_i,Y_i(0), Y_i(1))_{1 \le i \le n} \iidsim P_X \times P_{T \mid X} \times P_{Y(1) \mid X} \times P_{Y(0) \mid X},\]
where 
for observation unit $i \in [n]$, 
$X_i \in \X$ denotes the features, 
$T_i \in \{0,1\}$ denotes the binary treatment indicator, and $Y_i(1), Y_i(0) \in \R$ denote the counterfactual outcomes under treatment and control conditions. 
We 
make the standard consistency assumption 
where
we observe
$Y_i = T_i Y_i(1) + (1-T_i) Y_i(0)$,
$i \in [n]$ \citep{hernan2020causal}. 
The task is to achieve valid inference on individual treatment effects $Y_i(1) - Y_i(0)$, for the untreated individuals $I_{T=0} = \{i \in [n]: T_i = 0\}$.

We first observe that we can construct prediction sets for the unobserved counterfactuals $\{Y_i(1) : T_i = 0\}$ by applying the procedure for missing outcomes to the dataset $(X_i, T_i, Y_i(1)T_i)_{1 \le i \le n}$, to which we have full access (since $Y_i(1)T_i = Y_iT_i$). For example, if the treatment assignment probability is known, the pro-CP procedure can be applied with $(X_i, Y_i(1))_{i : T_i = 1}$ as the calibration data and $(X_i)_{i : T_i = 0}$ as the test inputs, to construct $\ch^\text{counterfactual}$ such that the following condition holds.

\[\EEst{\frac{1}{|I_{T=0}|}\sum_{i \in I_{T=0}} \One{Y_i(1) \in \ch^\text{counterfactual}(X_i)}}{(B_i)_{i \in I_{T=0}}} \ge 1-\alpha.\]

Next, since we have access to $Y_i(0)$ for the individuals in $I_{T=0}$, we can immediately construct prediction sets for $Y_i(1) - Y_i(0)$. 
Specifically, by letting $\ch^\text{ITE}_i = \{y-Y_i(0) : y \in \ch^\text{counterfactual}(X_i)\}$, we obtain

\begin{equation}\label{eqn:coverage_ITE}
    \EEst{\frac{1}{|I_{T=0}|}\sum_{i \in I_{T=0}} \One{Y_i(1)-Y_i(0) \in \ch^\text{ITE}_i}}{(B_i)_{i \in I_{T=0}}} \ge 1-\alpha.
\end{equation}
Thus, we obtain a 
simultaneous inference procedure for the individual treatment effects with valid coverage.

\section{Experimental results}
\label{exp}

\subsection{Simulations with illustrative examples}\label{sec:sim}
\label{sim}
We present simulation results to illustrate the performance of the proposed procedure. Here, we present results in a simple univariate feature setting to illustrate the difference between conditional and marginal coverage control. In the next section, we provide additional experiments under a more complex setting with multivariate features.

We generate the data $(X_i, A_i, Y_i A_i)_{1 \le i \le n}$ as follows:
\begin{align*}
    X &\sim \textnormal{Unif}[0,10],\,\,
    Y \mid X \sim N(X, (3+X)^2),\, \,
    A \mid X \sim \textnormal{Bernoulli}(p_{A \mid X}(X)),
\end{align*}
where we consider 
two settings of $p_{A \mid X}$, such that for all $x\in[0,10]$:
\begin{align*}
    \textnormal{(1)}: p_{A \mid X}(x) = 0.9-0.02x,\quad
    \textnormal{(2)}: p_{A \mid X}(x) = 0.8-0.1(1+0.1x) \sin 3x.
\end{align*}
We then use $\{(X_i,Y_i) : A_i = 1\}$ as the calibration data an construct prediction sets for the unobserved outcomes $\{Y_i : A_i = 0\}$. The above distributions are illustrated in Figure~\ref{fig:scatterplot}.

\begin{figure}[ht]
  \centering
  \includegraphics[width=0.9\textwidth]{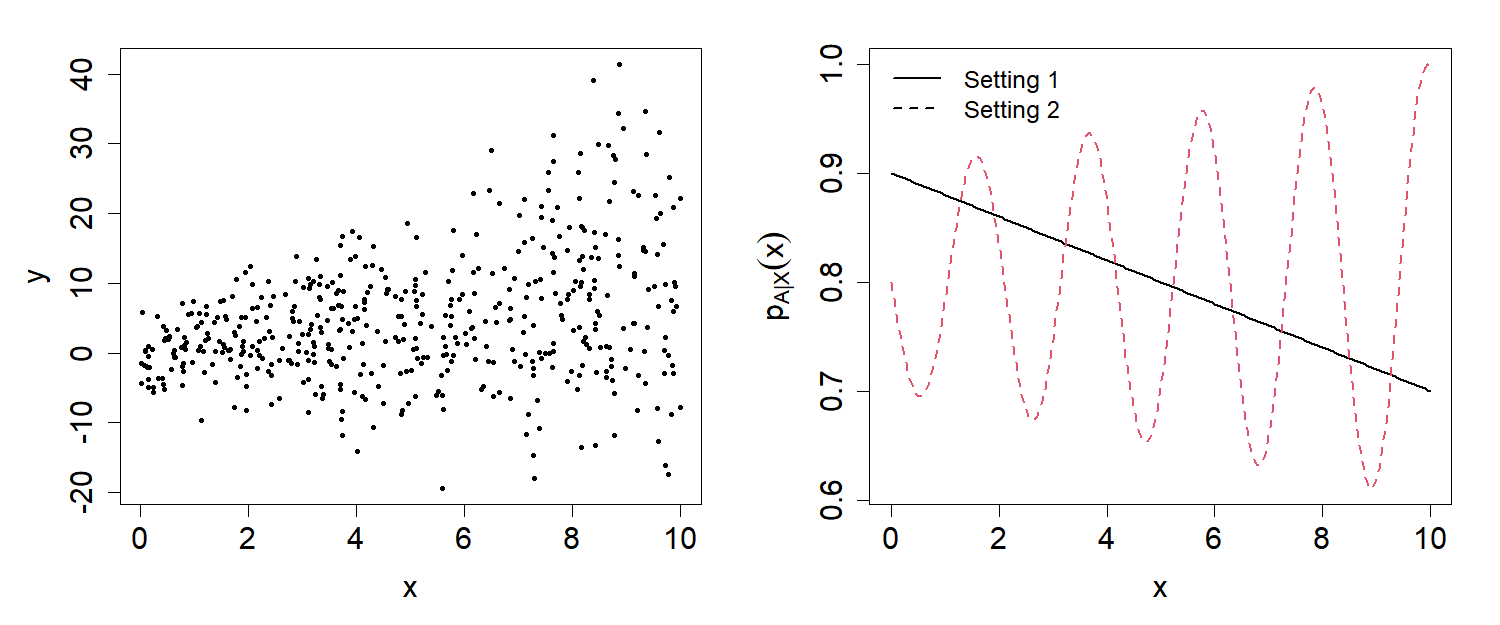}
  \caption{Scatterplot of the full dataset (without missing outcomes), and the graphs of the missingness probability in Settings 1 and 2.}
  \label{fig:scatterplot}
\end{figure}

In Setting 1,
the missingness probability is linearly decreasing in $x$, 
so that binning based on $p_{A \mid X}$ leads to intervals. 
In Setting 2, 
partition elements can include non-neighboring points with different spreads of $Y$ given $X$, 
and consequently the resulting coverage guarantee does not imply a local coverage property. 

We first generate a training dataset $(X_i',A_i',Y_i'A_i')_{1 \le i \le n_\text{train}}$ of size $n_\text{train} = 500$, and then fit quantile linear regressions on the subset $\{(X_i',Y_i') : A_i' = 1\}$ of the data to construct estimates $\hat{q}_{\alpha/2}(\cdot)$ and $\hat{q}_{1-\alpha/2}(\cdot)$ for the $\alpha/2$- and $(1-\alpha/2)$-conditional quantiles, respectively. Then we consider the quantile-based score
$s(x,y) = \max\{\hat{q}_{\alpha/2}(x) - y, y-\hat{q}_{1-\alpha/2}(x)\}$.

We first illustrate the conditional coverage  \eqref{psmcc} of pro-CP, and compare it to the marginal coverage  \eqref{eqn:in_exp_guarantee} 
achieved by applying weighted conformal prediction~\citet{tibshirani2019conformal} for each individual missing outcome.
We show the performance of the two methods in Setting 1 where the 
propensity score discretized feature-conditional coverage
(abbreviated as the bin-conditional coverage rate)  
can be accurately computed, and we further show their
feature-conditional coverage rate from \eqref{eqn:in_exp_guarantee_X_conditional}.

We run 500 independent trials, where in each trial we generate $(X_i,A_i)_{1 \le i \le n}$ of size $n=500$ and then apply propensity score $\ep$-discretization to obtain $(B_i,A_i)_{1 \le i \le n}$, with level $\eps=0.1$. 
Then we generate 
100 samples of $(X_i',Y_i')_{1 \le i \le n}$,
where $(X_i',Y_i') \mid B_i \sim P_{X \mid B} \times P_{Y \mid X}$.
For each sample, we apply $\chp_U$ to $(X_i',A_i,Y_i'A_i)_{1 \le i \le n}$ with
$U$ being an induced-partition from splitting $\{1,2,\ldots,500\}$ into ten intervals uniformly, i.e.,
\begin{equation}\label{eqn:U}
    U = \{\bar{U}_{\ell} \cap I_{A=0} : l \in [10]\}, \text{ where } \bar{U}_{\ell} = \{50\cdot(j-1)+1, 50\cdot(j-1)+2, \ldots, 50\cdot j \} \text{ for } l \in [10],
\end{equation}
where we let $I_{A = 0} = \{i \in [n] : A_i = 0\}$. The level is set as $\alpha=0.2$.
We also run weighted split conformal prediction, following the steps in~\citet{tibshirani2019conformal} with the weights
$w(x) = p_{A \mid X}(x)/[1-p_{A \mid X}(x)]$ for all $x$.
We take the average of the coverage rates of the two methods over the 100 repeats, 
to obtain an estimate of the bin-conditional coverage rate.
In each trial, 
we also generate 100 samples $(Y_i')_{1 \le i \le n}$ from $Y_i \mid X_i \sim P_{Y \mid X}$ and then apply the two methods, to compute the feature-conditional coverage rate. Figure~\ref{fig:bin_conditional_coverage_rate} shows the results.

\begin{figure}[htbp]
  \centering
  \includegraphics[width=\textwidth]{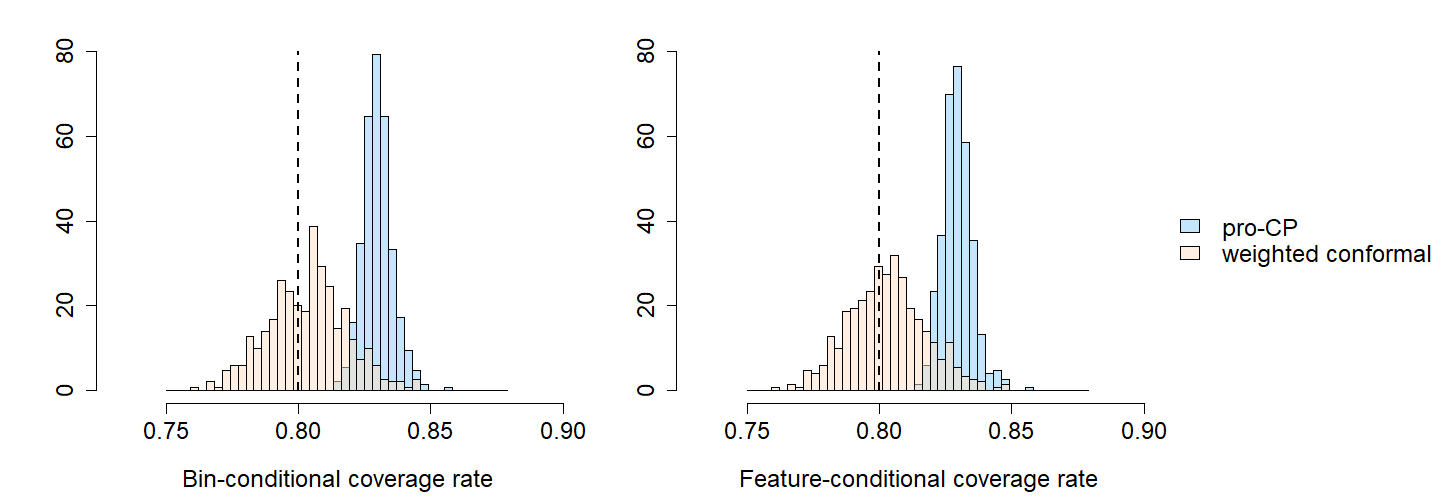}
  \caption{Histograms of propensity score discretized feature-conditional (bin-conditional)
  and 
  feature-conditional coverage
  coverage rates of pro-CP and the method of~\citet{lei2021conformal} conformal prediction over 500 independent trials, in Setting 1.}
  \label{fig:bin_conditional_coverage_rate}
\end{figure}

The result illustrates that both methods work as intended. 
Pro-CP controls the bin-conditional coverage rate~\eqref{psmcc} in every trial at coverage level $1-\alpha=0.8$. On the other hand, weighted conformal prediction allows the conditional coverage rate to be smaller than $1-\alpha$, to tightly attain the marginal coverage rate of $1-\alpha$. The feature-conditional coverage rates show similar trends, implying that the conditioning on the discretized features approximates conditioning on the features fairly well.
The theoretical lower bound
for the  bin-conditional coverage rate
provided by Theorem~\ref{thm:MAR_known_p_A_X} is $1-\alpha-\eps = 0.7$, 
but the procedure tends to control the conditional coverage rate above $1-\alpha$ in practice. 
This is because the $\eps$ term represents a worst-case scenario, which is not reflected here. 

Next,  Figure~\ref{fig:feature_conditional_coverage_rate} shows the distribution of feature-conditional coverage rates and the (feature-conditional expectation of) median widths---$\textnormal{median} \big(\big\{\text{leb}(\ch(X_i)) : A_i = 0\big\}\big)$---of the prediction intervals.
We use the median, 
since there is a nonzero probability---although small---that some prediction sets have infinite width---e.g., if an element of $U$ 
contains only one point with a missing outcome, and the corresponding discretized feature value appears only once in the data. 

\begin{figure}[htbp]
  \centering
  \includegraphics[width=0.9\textwidth]{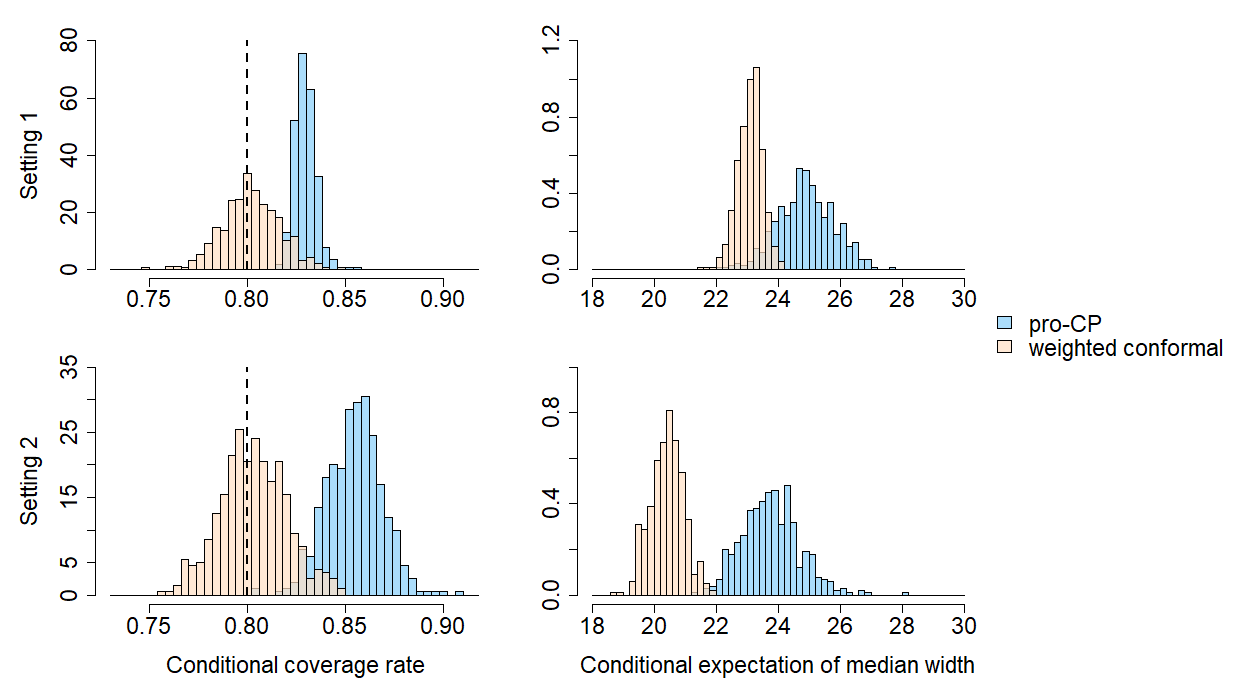}
  \caption{Histograms of (feature-)conditional coverage rates and expected median width of pro-CP and 
  weighted split conformal prediction 
  over 500 independent trials, in Settings 1 (top) and 2 (bottom).}
  \label{fig:feature_conditional_coverage_rate}
\end{figure}

The results show that in all trials, the conditional coverage rate~\eqref{eqn:in_exp_guarantee_X_conditional} is controlled at level $1-\alpha=0.8$ by pro-CP in both settings.
Again, weighted conformal prediction allows for a conditional coverage rate smaller than $1-\alpha$---it tightly controls marginal coverage at the level $1-\alpha$. Pro-CP attains the stronger conditional coverage guarantee by constructing wider prediction sets.

\subsubsection{Comparison with weighted conformal prediction for a single prediction}

Next, for additional illustration, we present the result for the case $m = 1$, i.e., we have a single test point $X_{n+1}$, and the target is the conditional coverage rate $\mathbb{P}\{Y_{n+1} \in \ch(X_{n+1}) \mid X_{n+1}\}$.
For this experiment, we fix the calibration size at $n = 500$, and run the procedures---pro-CP and weighted conformal prediction---with test input values $X_{n+1} = 0, 0.1, 0.2, \cdots, 10$. 
For each value of $X_{n+1}$, we compute the conditional coverage rate based on 500 repeated generations of calibration data and runs of the methods, at level $\alpha=0.2$. 

The results are shown in Figure~\ref{fig:single_conditinal_coverage}. 
The first two plots show the results under $\ep=0.1$. 
In Setting 1, the pro-CP procedure achieves conditional coverage rates around 0.8 for most test input values, while weighted conformal prediction---which only controls the marginal coverage rate---fails to control the conditional coverage for roughly half of the input values.
In Setting 2, weighted conformal prediction shows a similar trend, but the pro-CP procedure now sometimes fails to control the conditional coverage. 
This is because it theoretically controls the bin-conditional coverage, which may not accurately approximate the feature-conditional coverage in Setting 2.

To further illustrate this, we also provide results under $\ep = 0.02$ (third plot), which leads to finer binning and is more likely to yield a better approximation of the feature-conditional coverage by the bin-conditional coverage.
We observe that the pro-CP procedure now tends to control the conditional coverage rates so that they exceed the target level of 0.8 for almost all values of $X_{n+1}$. 
Note, however, that it can provide conservative prediction sets for small $\ep$ values unless the sample size is very large, as there may be only a few 
datapoints in the same bin as the test point. 
Since our main focus is on simultaneous inference on multiple test points, 
we may encounter a few test  datapoints with ``rare" feature values.

\begin{figure}[htbp]
  \centering
  \includegraphics[width=0.9\textwidth]{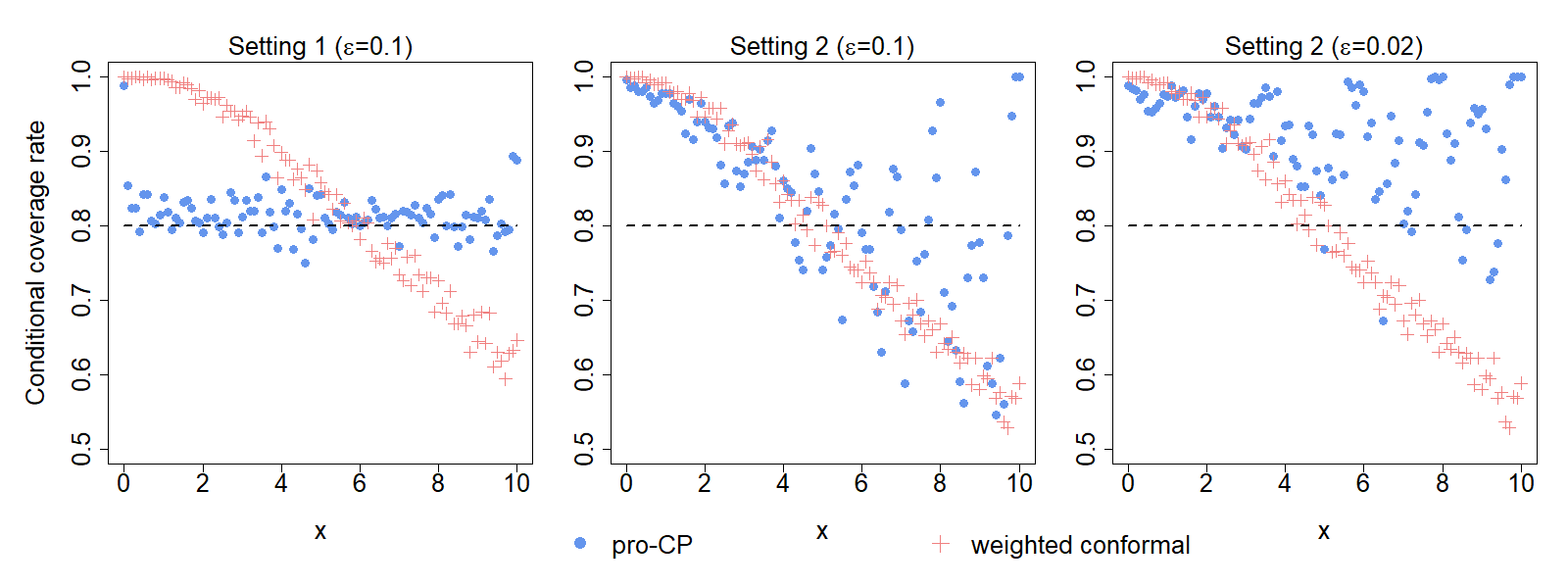}
  \caption{Conditional coverage rates of pro-CP and weighted split conformal prediction, in the case where the test size is one, in Settings 1 and 2.}
  \label{fig:single_conditinal_coverage}
\end{figure}

\subsection{Simulations with a higher-dimensional feature}\label{sec:sim_mult}
In this section, we move beyond simple illustrative examples and present additional simulation results in a higher-dimensional setting.
We draw an  i.i.d.~sample from the following distribution for a dimension $p=30$:

\[X \sim N_p(\mu,\Sigma),\; Y \mid X \sim N(\beta_0+\beta^\top X, \sigma_X^2),\; A \mid X \sim \textnormal{Bernoulli}\left(\frac{\exp(\gamma_0+\gamma^\top X)}{1+\exp(\gamma_0+\gamma^\top X)}\right).\]
We set $\mu = (1,1,\cdots,1)^\top$, $\Sigma = 2 \cdot I_p$, where $I_p$ denotes the $p \times p$ identity matrix, and $\sigma_X = \|X\|_2^2/p$. 
We set $\beta_0 = 5$, and each component of $\beta$ is randomly drawn from
the distribution $\textnormal{Unif}(-2,2)$. 
The parameters for the logistic model are set as $\gamma_0=1.2$ and $\gamma=(0.2,-0.3,0.2,0,0,\cdots,0)^\top$, resulting in an overall missingness probability of approximately 23\%.

We first demonstrate the conditional-coverage control of pro-CP. The simulation steps are analogous to those outlined in Section~\ref{sec:sim}. For the nonconformity score, we use the quantile-based score proposed by \citet{romano2019conformalized}. We compare the performance of pro-CP with that of~\citet{lei2021conformal}, which integrates weighted split conformal prediction~\citep{tibshirani2019conformal} with the quantile-based score of~\citet{romano2019conformalized} for inference on each missing outcome (or equivalently, individual treatment effect). We consider two settings: when we have access to the true propensity score, and when we use an estimated propensity score. 
For the estimation of the propensity score, we apply sparse logistic regression with $\ell_1$ penalization using the \texttt{glmnet} package in R, and the regularization strength 
is selected through cross-validation on the training data. Figure~\ref{fig:logistic_bin_conditional_coverage_rate} and~\ref{fig:higher_lc_box} show the results. 
We observe results similar to those from the low-dimensional setting of Section~\ref{sec:sim}. 
In both settings, the pro-CP procedure provides a better control of the conditional coverage rate, 
exceeding $1-\alpha$ in most trials, while the distribution of the conditional coverage rate of the method of~\citet{lei2021conformal} is centered around $1-\alpha$.

\begin{figure}[htbp]
  \centering
  \includegraphics[width=0.8\textwidth]{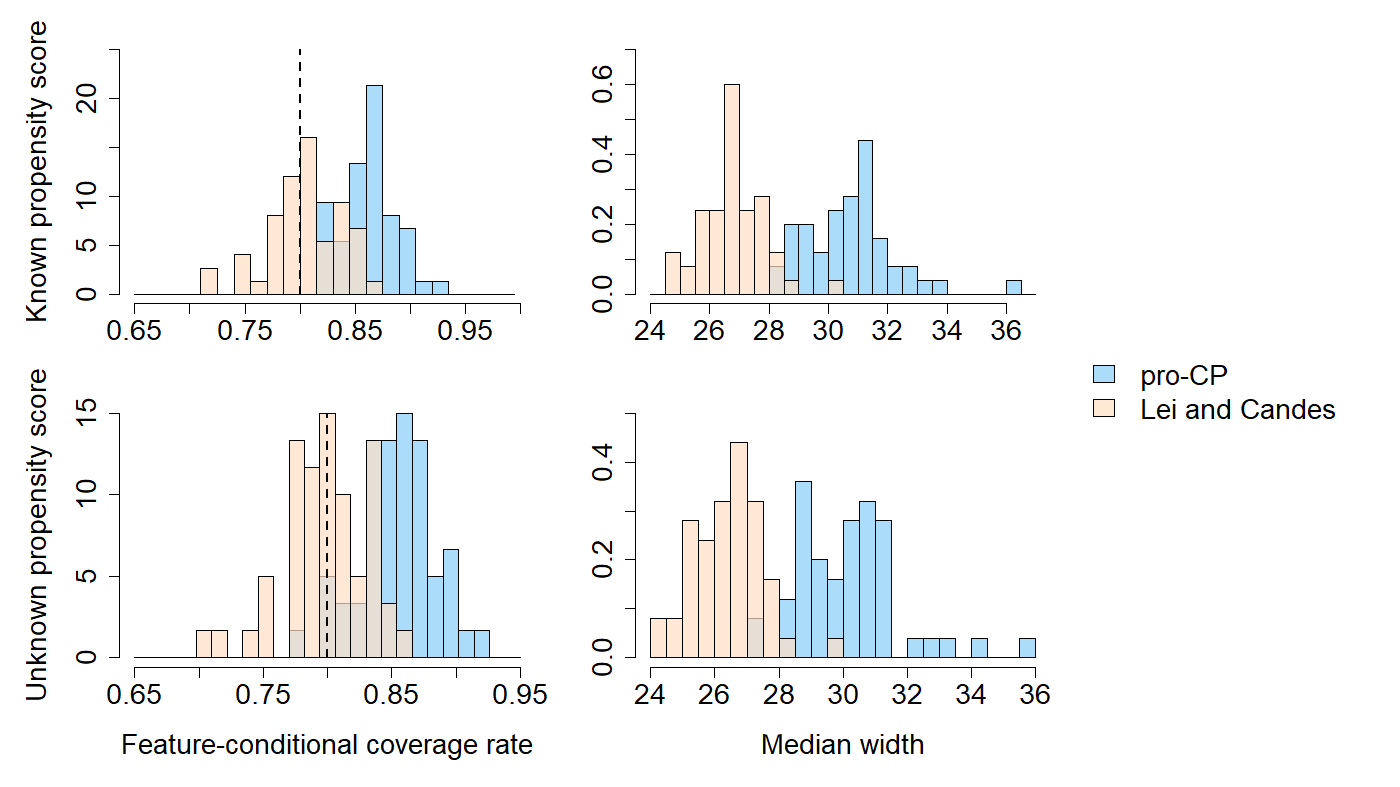}
  \caption{Higher-dimensional setting: Histograms of feature-conditional coverage rates and expected median width of pro-CP and weighted split conformal prediction (in the form discussed in~\citet{lei2021conformal}) over 500 independent trials. Top: known propensity score; Bottom: unknown propensity score.}
  \label{fig:logistic_bin_conditional_coverage_rate}
\end{figure}

\begin{figure}[htbp]
  \centering
  \includegraphics[width=0.8\textwidth]{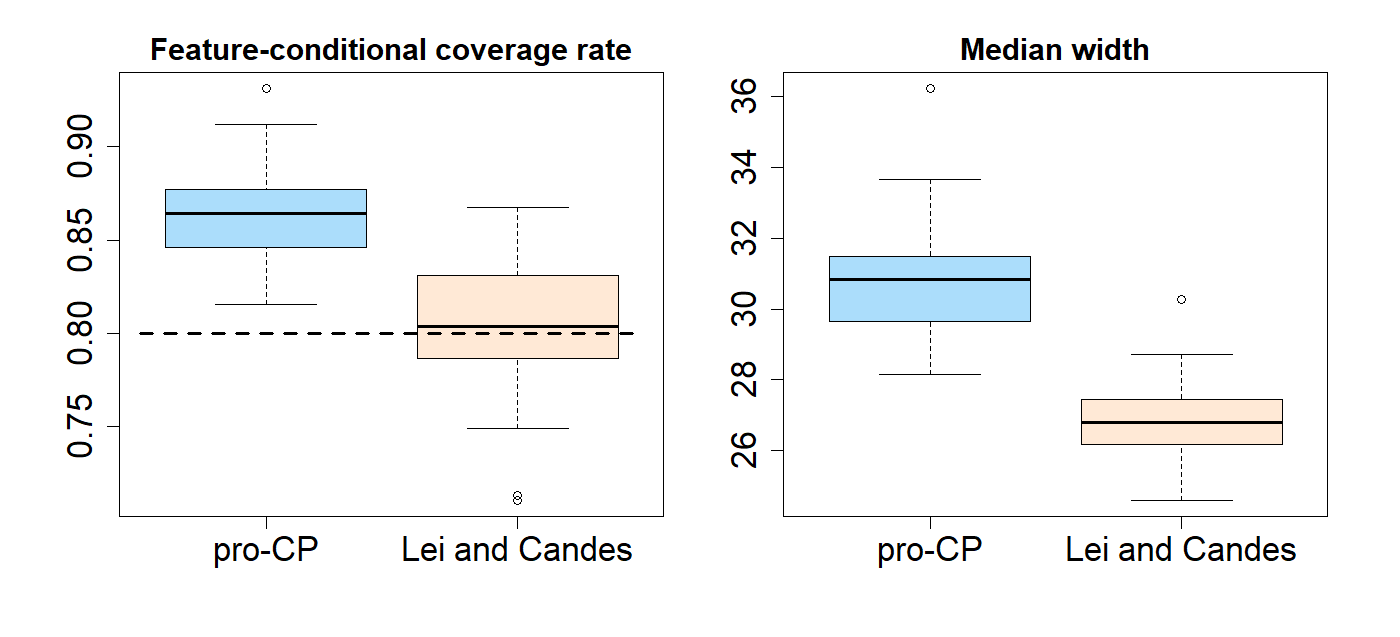}
  \caption{Higher-dimensional setting: Boxplots of feature-conditional coverage rates and expected median width of pro-CP and weighted split conformal prediction (in the form discussed in~\citet{lei2021conformal}) over 500 independent trials.}
  \label{fig:higher_lc_box}
\end{figure}

\subsubsection{Assessing the conservativeness of pro-CP}

Next, we explore the question: How conservative is pro-CP? Recall that the pro-CP procedure provides a conditional coverage guarantee, which is stronger than the marginal coverage guarantee, and therefore produces wider prediction sets than weighted conformal prediction. Given this tradeoff between the strength of the inferential target and the width of the prediction set, does pro-CP make a reasonable choice? Or does it achieve the stronger guarantee simply by being unnecessarily conservative? We address this question through additional experiments. 

For different values of the target level $\alpha$, we repeat the following process 100 times: generate datasets from the same distribution as above, run pro-CP and weighted split conformal prediction---both with quantile-based score, i.e., the weighted conformal prediction corresponds to the method of~\citet{lei2021conformal}---, and compute the median width and (marginal) coverage rates. We then average these results over the 100 trials to produce the width–coverage plot in Figure~\ref{fig:cov_width}. The results show that the two methods yield almost identical prediction set widths when they achieve the same marginal coverage rates. Thus, roughly speaking, the pro-CP method behaves like a ``level-shifted weighted conformal prediction". Importantly, in practice, applying weighted conformal prediction with a level adjustment to attain conditional coverage control is not feasible, as the practitioner does not know the amount of adjustment required. The pro-CP procedure, on the other hand, achieves conditional coverage guarantees with theoretical justification, without being unnecessarily conservative---it is essentially only as conservative as weighted conformal prediction. The wider prediction set from pro-CP should be interpreted as ``making a different choice in the tradeoff" to achieve a stronger target, rather than as being conservative. 

Note also that even if the appropriate level adjustment is known, shifting the level in weighted conformal prediction does not recover the pro-CP prediction sets. This is because the former applies weighted conformal prediction to individual test points separately, potentially resulting in different prediction set widths, whereas pro-CP outputs a shared width---more generally, a shared score bound---within each partition.

\begin{figure}[htbp]
  \centering
  \includegraphics[width=0.6\textwidth]{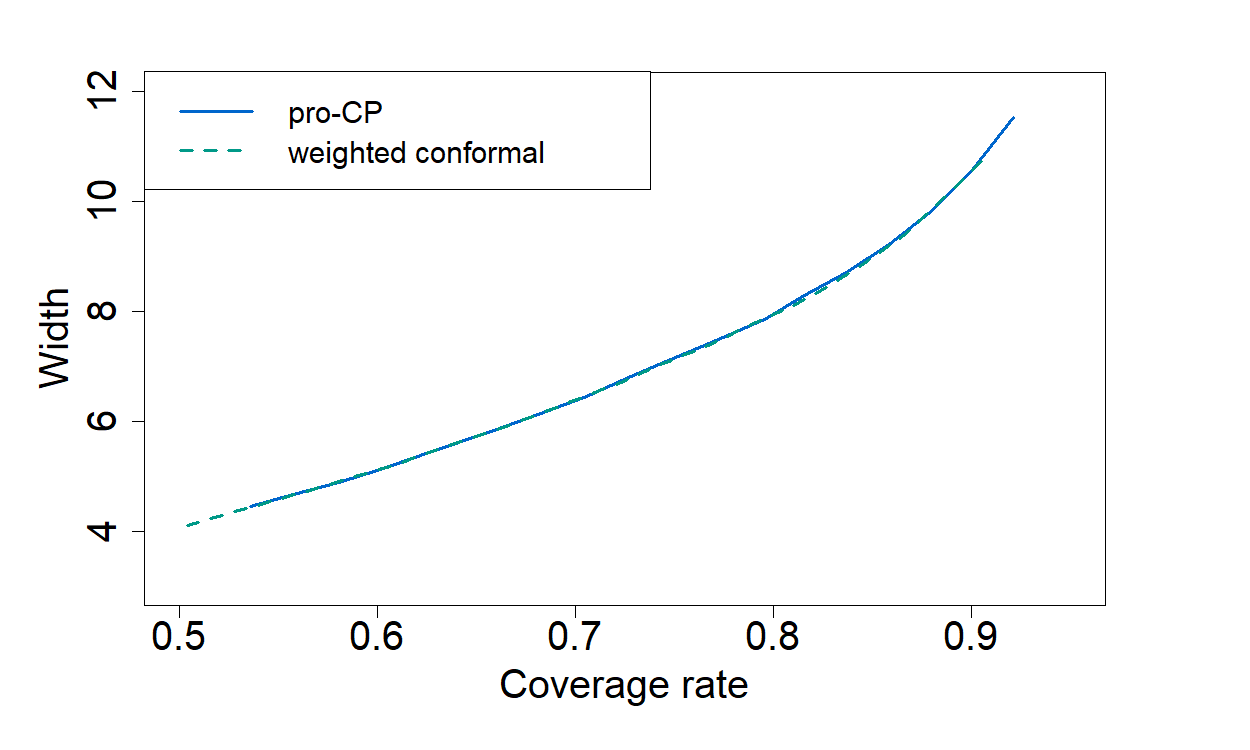}
  \caption{Width–coverage plot of prediction sets from pro-CP and weighted split conformal prediction (in the form discussed in~\citet{lei2021conformal}).}
  \label{fig:cov_width}
\end{figure}

\subsection{Application to a job search intervention study}\label{dat2}
We further illustrate the performance of the procedures on the JOBS II data set~\citep{imai2010general,DVNUMEYXD2010}. 
This dataset consists of observations from 1285 job seekers, before and after participating in a job skills workshop viewed as a treatment assigned to 879 participants, with a control group size of 406.
There are 14  features, such as demographic information of individuals and pre-treatment depression measures. The outcome variable is the post-treatment depression measure. 

We explore the performance of pro-CP---and weighted conformal prediction for comparison---for the task of simultaneously inferring individual treatment effects of the control group, 
as discussed in Section~\ref{ite}. 
Since we do not have access to the counterfactual outcomes, evaluating the prediction sets on the control group is not possible. 
To address this, we create 
a new control group by introducing missingness in the treatment group, and then estimate the coverage rate on the simulated control group. 
Although we do not have access to the counterfactual outcomes on the treatment group and consequently the simulated control group either, 
it is still possible to estimate the coverage rate of prediction sets, as that does not depend on the counterfactuals.
Recall that the prediction set for the ITE ($Y(1)-Y(0)$) is constructed by shifting the prediction set for $Y(1)$ by $Y(0)$. Thus, it is equivalent to estimate the coverage for $Y(1)$ before the shift.

We randomly split the treatment group into 
a training dataset of size $379$ 
and a calibration dataset of size $500$. 
Then we generate the missing outcomes (equivalently, the new control group) based on the logistic model
\begin{equation*}
    A \mid X \sim \textnormal{Bernoulli}\left(\frac{\exp(\beta^\top X)}{1+\exp(\beta^\top X)}\right),
\end{equation*}
with a fixed parameter $\beta$, resulting in approximately $22\%$ missingness. We construct an estimate of the propensity score using random forests and then compute the nonconformity score $s(x,y) = |y - \hat{\mu}(x)|$ by fitting $\hat{\mu}$ with random forest regression.
We then run the pro-CP procedure and the weighted split conformal prediction, using either the true or the estimated propensity score. For the pro-CP procedure, we apply the partitioning scheme based on $U$, constructed as in~\eqref{eqn:U}. Figure~\ref{fig:jobs} shows the coverage proportion (i.e., the term inside the expectation in~\eqref{eqn:coverage_ITE}) for the two procedures. Since the conditional coverage (the conditional expectation of the coverage proportion) cannot be evaluated from a single realized sample, we present this plot instead to illustrate the overall behavior of the methods. The results show that the coverage proportions from weighted conformal prediction are centered around the target level $1-\alpha$, whereas pro-CP tends to yield higher coverage proportions in most trials by producing slightly wider prediction sets.

\begin{figure}[ht]
  \centering
  \includegraphics[width=0.8\textwidth]{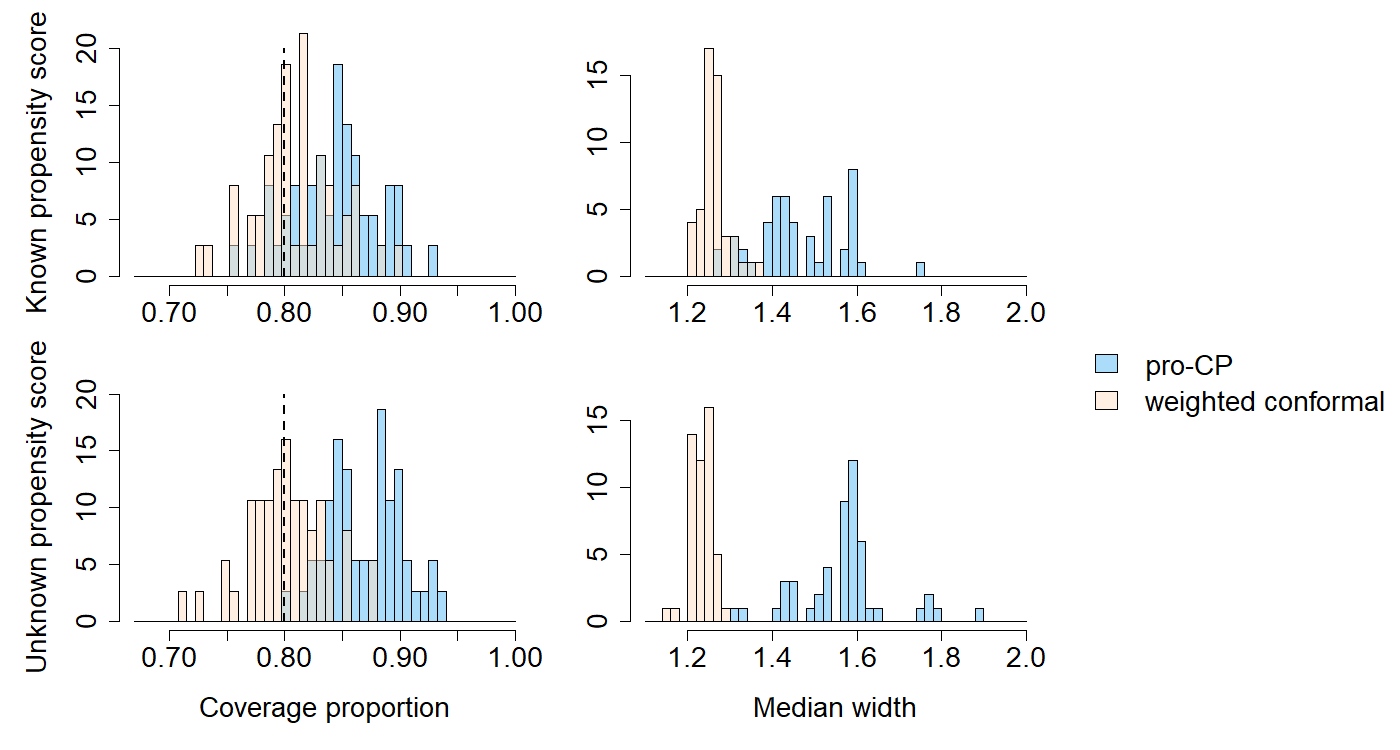}
  \caption{Results for the JOBS II data set: histograms of coverage proportion and median width of pro-CP and weighted split conformal prediction. Top: known propensity score; Bottom: unknown propensity score.}
  \label{fig:jobs}
\end{figure}

\section{Discussion}

In this work, we investigated
predictive inference for multiple unobserved outcomes, 
where the  propensity score can depend on the features. 
We proposed methods
that control the proportion of missing outcomes covered by the prediction sets, with marginal (in-expectation) and squared-coverage guarantees.

Several open questions remain.
Our procedures 
use binning and provide 
coverage conditionally on the bins that the features belong to. 
The bins are constructed based on the propensity score, but this leaves open the question of 
what an optimal binning scheme is.
Our method has strong 
theoretical properties, 
but might result in conservative prediction sets if the propensity score is close to zero or one with substantial probability. 
Indeed, our binning scheme is very fine-grained for those cases. 
Would simpler binning, e.g., uniform mass binning, work fairly well under additional assumptions?

For cases where an estimate of the propensity score is utilized for inference, our result provides a general error bound that depends on the accuracy of the estimator. Can we achieve a more refined, or doubly robust bound under a specific score function or by imposing a weak distributional assumption? We leave these questions to future work.

\section*{Acknowledgements}
This work was supported in part by 
NIH R01-AG065276, R01-GM139926, NSF 2210662, P01-AG041710, R01-CA222147, 
ARO W911NF-23-1-0296,
NSF 2046874, ONR N00014-21-1-2843, and the Sloan Foundation. 

\bibliographystyle{plainnat}
\bibliography{bib}


\appendix
\section*{Appendix}
{\bf Additional notation.}
We will use the following additional notation.
For a positive integer $n$,
we let $\mathcal{S}_n$ be the set of permutations of $[n]$. 

\section{Alternative naive method: binned wCP}
\label{bin-wCP}

Here, we aim to discuss another alternative naive method, binned weighted conformal prediction (binned wCP), a natural way of applying weighted conformal prediction \citep{tibshirani2019conformal} in our setting.
Suppose we bin each feature $X_i$, mapping it to $B_i = b(X_i)$, $i\in[n+m]$, with some map $b:\mathcal{X}\to \mathbb{Z}$. At the moment, we allow the binning map to be arbitrary.
Then, we can apply weighted conformal prediction \emph{for each specific bin separately}.
Specifically, 
letting 
$F_k^0 = \{j\in[m]: b(X_{n+j}) = k\}$ for
$k\in \mathbb{Z}$, 
for each $k$ such that $F_k^0$ is non-empty, weighted conformal prediction provides a prediction set
$\ch^{k}(\cdot) = \ch^{k}(\cdot,; (X_{n+j})_{j \in F_{k}^0},(B_{n+j})_{j \in [m]})$
such that for each $j \in F_k^0$,

\[\PPst{Y_{n+j} \in \ch^{k}(X_{n+j})}{(B_{n+j})_{j \in [m]}}\ge 1-\alpha.\]
Hence, we also have
\begin{equation}\label{bcc}
    \EEst{\frac{1}{m}\sum_{j \in [m]} \One{Y_{n+j} \in \ch^{B_{n+j}}(X_{n+j})}}{(B_{n+j})_{j \in [m]}}\ge 1-\alpha.
\end{equation}
This shows the following proposition:
\begin{proposition}
    For any binning scheme,  binned weighted conformal prediction achieves the bin-conditional coverage guarantee \eqref{bcc}.
\end{proposition}

In particular, this holds for propensity-score $\ep$-discretization, and hence---if the propensity score is known---wCP satisfies
propensity score discretized feature-conditional coverage as per Definition \ref{proc}.
This gives it a seeming advantage compared to pro-CP, which, as per
Theorem \ref{thm:MAR_known_p_A_X}, only achieves coverage 
at level $1-\alpha-\ep$ in the same scenario.
However, this seeming advantage is completely washed out by the far greater disadvantage that binned wCP \emph{does not pool data across bins}, and thus has much fewer datapoints from which to estimate quantiles. In practice, this means that 
pro-CP can often achieve finite size prediction sets even when
binned wCP produces sets of an infinite width. In other words, while each prediction set from binned wCP is based on a small sample size, the data pooling with pro-CP increases the effective sample size used to construct each prediction set, leading to overall more informative prediction sets.

How does the pooling by pro-CP work? The core idea lies in the simultaneous approximate exchangeability across bins, as in Section \ref{hier}, which enables conformal-type inference using the pooled dataset.
However, pooling arbitrary datapoints can introduce a large bias, and hence it is crucial to discretize the datapoints into bins appropriately in order to attain approximate exchangeability and to preserve approximate coverage, which is exactly what propensity score $\ep$-discretization achieves.

As a remark, in a simple setting with $m=1$ and many calibration data points falling into the same bin as the test point, the two methods output nearly identical prediction sets. To see that, consider the simplest setting where all $X_1,\cdots X_n$ and $X_{n+1}$ fall into the same bin, 
and observe that each weight in weighted conformal prediction has the following lower and upper bounds---under propensity score $\ep$-discretization---both of which are close to $1/(n+1)$:
\[\frac{1/(1+\eps)}{1+\cdots+1+1/(1+\eps)} \le w_i = \frac{\frac{p_{A\mid X}(X_i)}{1-p_{A\mid X}(X_i)}}{\frac{p_{A\mid X}(X_1)}{1-p_{A\mid X}(X_1)} + \cdots \frac{p_{A\mid X}(X_n)}{1-p_{A\mid X}(X_n)} + \frac{p_{A\mid X}(X_{n+1})}{1-p_{A\mid X}(X_{n+1})}} \le \frac{1+\eps}{1+\cdots+1+(1+\eps)}.\]
Since pro-CP reduces to standard conformal prediction in this simple setting, which corresponds to weighted conformal prediction with all weights equal to $1/(n+1)$, this implies that the two methods nearly coincide 
when the sample size in each bin is large. 
However, in more general settings where we cannot expect large sample sizes within all bins, pooling data can be advantageous by effectively increasing the sample size used in inference on the outcomes.

\section{Inference with a stronger guarantee}
\label{sq}

Recall that our ultimate goal is to construct
prediction sets for each missing outcome,
ensuring we have a small miscoverage proportion
$\hat{p} = \frac{1}{m}\sum_{j=1}^m \One{Y_{n+j} \notin \widehat{C}(X_{n+j})}$. From the discussion in the main sections, we know how
to achieve the following guarantees for $\hat{p}$ if $p_{A \mid X}$ is known:
\begin{enumerate}
\item $\EE{\hat{p}} \le \alpha$, \qquad through weighted conformal prediction.
\item $\EEst{\hat{p}}{B_{(n+1):(n+m)}} \le \alpha$, \qquad through $\chp$ given by~\eqref{eqn:pro_CP}.
\end{enumerate}

These guarantees bound the expectation of 
$\hat{p}$, and are especially informative
if the sample size is large so that $\hat{p}$, the sample mean of the miscoverage indicators, concentrates tightly around its mean. 
However, for a moderate sample size 
where $\hat{p}$ can be highly variable, 
bounding the mean does not necessarily imply a 
precise control of $\hat{p}$.
Can we construct 
procedures with stronger guarantees, and if so with what guarantees?
We explore these questions in this section.

\subsection{Squared-coverage guarantee}
An ideal condition one could aim for is 
an almost sure bound on the miscoverage proportion:
\begin{equation}\label{eqn:guarantee_ideal}
    \hat{p} \le \alpha \textnormal{ almost surely}.
\end{equation}
However, this 
is unlikely to be achievable 
unless we make strong distributional assumptions. 
A natural relaxation one might consider is the following high-probability---or, Probably Approximately Correct (PAC)---guarantee inspired by the properties of tolerance regions \citep{Wilks1941,Wald1943} and inductive conformal prediction \citep{vovk2012conditional,Park2020PAC}:
\begin{equation}\label{eqn:guarantee_pac_1}
    \PP{\hat{p} \ge 1-\alpha} \ge 1-\delta,
\end{equation}
where $\alpha \in (0,1)$ and $\delta \in (0,1)$ are predefined levels. 
However, achieving this guarantee proves challenging 
due to several reasons, see Section~\ref{sec:pac}. 
Briefly,
the in-expectation guarantee~\eqref{eqn:in_exp_guarantee} is equivalent to covering one randomly drawn missing outcome, 
enabling the control of the total variation distance through propensity score $\ep$-discretization. 
In contrast, \eqref{eqn:guarantee_pac_1}
concerns 
the joint distribution of all missing outcomes,
implying that a larger sample size or number of missing outcomes may lead to a larger error.

As an alternative relaxation, we consider the following \emph{squared-coverage} guarantee. 
\begin{equation}\label{eqn:squared_cov_guarantee}
    \EE{\hat{p}^2} = \EE{\Bigg(\frac{1}{m}\sum_{j=1}^m \One{Y_{n+j} \notin \widehat{C}(X_{n+j})}\Bigg)^2} \le \alpha^2.
\end{equation}
This condition is motivated by the work of~\citet{lee2023distribution}, where the authors provide a discussion 
of possible targets of predictive inference. We provide below some possible interpretations of the squared-coverage guarantee.
\paragraph{Closer proxy of the ideal condition.} 
The ideal condition~\eqref{eqn:guarantee_ideal} 
of course implies the squared-coverage guarantee~\eqref{eqn:squared_cov_guarantee}, 
which in turn implies the marginal coverage guarantee~\eqref{eqn:in_exp_guarantee}. 
Therefore, the squared-coverage guarantee can be seen as a closer proxy of the almost-sure requirement~\eqref{eqn:guarantee_ideal}.
\par

\paragraph{Penalty on the spread of coverage proportion.} Alternatively, one can view the guarantee~\eqref{eqn:squared_cov_guarantee} as a condition that penalizes both the mean $\EE{\hat{p}}$ and the spread $\textnormal{var}\left[\hat{p}\right]$, in the sense that it can also be written as
$\EE{\hat{p}}^2 + \textnormal{var}\left[\hat{p}\right] \le \alpha^2$.
The ideal condition can be considered as a special case with $\EE{\hat{p}} \le \alpha$ and $\textnormal{var}\left[\hat{p}\right]=0$. However, in practice, achieving $\textnormal{var}\left[\hat{p}\right] = 0$ 
is hard, due to the randomness in the data. 
However, 
the squared coverage guarantee 
can be viewed as controlling the variance $\textnormal{var}\left[\hat{p}\right] > 0$ in addition to ensuring $\EE{\hat{p}}^2 \le \alpha^2$.
\par

\paragraph{Surrogate of PAC guarantee.}
More intuitively, the squared-coverage guarantee can be viewed as an approximation of
the PAC guarantee~\eqref{eqn:guarantee_pac_1}, by providing a smaller upper bound on the probability of obtaining a large miscoverage proportion compared to the target $\alpha$. 
For example, for any $\delta > 0$, the in-expectation guarantee $\EE{\hat{p}} \le \alpha$ provides the following tail bound for $\hat{p}$ via Markov's inequality:
$\PP{\hat{p} \ge \alpha+\delta} \le 
\EE{\hat{p}}/(\alpha+\delta) \le \alpha/(\alpha+\delta)$.
This implies the PAC-type guarantee at level $(\alpha+\delta,\alpha/(\alpha+\delta))$---however, $\alpha/(\alpha+\delta)$ might not be sufficiently small.
On the other hand, the stronger guarantee~\eqref{eqn:squared_cov_guarantee} provides a tighter bound
$\PP{\hat{p} \ge \alpha+\delta} \le 
\EE{\hat{p}^2}/(\alpha+\delta)^2 \le \left(\alpha/(\alpha+\delta)\right)^2$,
which implies the $(\alpha+\delta,(\alpha/(\alpha+\delta))^2)$-PAC-type guarantee, where now the failure probability $(\alpha/(\alpha+\delta))^2$ is smaller.
\par
\medskip
In the following sections, we introduce procedures for discrete features that achieves the squared-coverage guarantee, and then discuss discretization-based methods for general feature distributions.

\subsection{Inference with a squared-coverage guarantee for discrete features}

We first consider 
discrete feature distributions.
Following the notation from Section~\ref{sec:in_exp}, let $\{X_1',X_2',\ldots,X_M'\}$ be the set of distinct values among the observed features $(X_i)_{i \in [n+m]}$, and define $I_k, I_k^0, I_k^1$ and $N_k, N_k^0, N_k^1$ as before. Let $s$ be a score function, constructed independently of the data.  
For all $i\in [n]$, let $S_i = s(X_i,Y_i)$ 
and define
\begin{align*}
    \bar{S}_i = 
    \begin{cases}
        S_i &\text{ for } i \in [n],\\
        +\infty &\text{ for } n < i \le n+m.\\
    \end{cases}
\end{align*}
Then we define a prediction set for all $x\in\mathcal{X}$ as

\begin{multline}\label{eqn:CI_discrete_squared}
    \ch^2(x) = \left\{\rule{0cm}{1cm} y \in \Y : s(x,y) \le Q_{1-\alpha^2}\left(\rule{0cm}{1cm} \sum_{k=1}^M \sum_{i \in I_k} \frac{N_k^0}{m^2N_k}\cdot\delta_{\bar{S}_i}\right. \right. \\
    \left. \left. 
    + \sum_{k=1}^M\sum_{\substack{i,j\in I_k \\ i \neq j}} \frac{N_k^0(N_k^0-1)}{m^2N_k(N_k-1)}\delta_{\min\{\bar{S}_i,\bar{S}_j\}} 
    + \sum_{1 \le k \neq k' \le M}\sum_{i \in I_k} \sum_{j \in I_{k'}} \frac{N_k^0N_{k'}^0}{m^2N_kN_{k'}}\delta_{\min\{\bar{S}_i,\bar{S}_j\}} \rule{0cm}{1cm}\right) \rule{0cm}{1cm}\right\}.
\end{multline}
We define $N_k^0(N_k^0-1)/N_k(N_k-1)$ as zero if $N_k=1$. We prove the following.
\begin{theorem}\label{thm:MAR_squared}
Suppose that the random variables within each collection $(Y_i : i \in I_k)$, $k \in [M]$ are simultaneously exchangeable conditional on $X_{1:(n+m)}$. Then the prediction set $\ch^2$ from~\eqref{eqn:CI_discrete_squared} satisfies
\begin{equation}\label{eqn:conditional_sq_guarantee}
    \EEst{\Bigg(\frac{1}{m}\sum_{j=1}^m \One{Y_{n+j} \notin \ch^2(X_{n+j})}\Bigg)^2}{X_{1:(n+m)}} \le \alpha^2.
\end{equation}
\end{theorem}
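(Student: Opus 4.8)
The plan is to mimic the ``random test point'' device used for Theorem~\ref{thm:MAR_in_exp}, but now applied to a *pair* of missing indices, since the squared miscoverage proportion unfolds into a double sum over $I_{A=0}$. First I would write
\[
\hat m^2 = \frac{1}{(N^{(0)})^2}\sum_{i,j\in I_{A=0}} \One{Y_i \notin \ch^2(X_i)}\One{Y_j \notin \ch^2(X_j)},
\]
and observe that, conditionally on $X_{1:n},A_{1:n}$, this equals $\PPst{Y_{i^*}\notin\ch^2(X_{i^*}),\, Y_{j^*}\notin\ch^2(X_{j^*})}{X_{1:n},A_{1:n}}$ where $(i^*,j^*)$ is drawn uniformly from $I_{A=0}\times I_{A=0}$ (independently of everything else). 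So it suffices to bound this joint miscoverage probability by $\alpha^2$. Since $\ch^2(x)$ thresholds the score $s(x,y)$ at a data-dependent quantile $q:=Q_{1-\alpha^2}(\cdot)$, the event $\{Y_{i^*}\notin\ch^2(X_{i^*})\}\cap\{Y_{j^*}\notin\ch^2(X_{j^*})\}$ is exactly $\{\min\{S_{i^*},S_{j^*}\} > q\}$ (using $\bar S_i = S_i$ for observed and $+\infty$ for missing, so on $I_{A=0}$ the relevant score is $\bar S$). Thus I need to show $\PP{\min\{\bar S_{i^*},\bar S_{j^*}\} > q}\le\alpha^2$, i.e.\ that $q$ is an (upper) $(1-\alpha^2)$-quantile of the law of $\min\{\bar S_{i^*},\bar S_{j^*}\}$ induced by the random pair.

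The key step is to \emph{identify the random measure in~\eqref{eqn:CI_discrete_squared} with the conditional law of} $\min\{\bar S_{i^*},\bar S_{j^*}\}$, up to a stochastic-domination correction coming from exchangeability. Decompose the draw of $(i^*,j^*)$ by cases: (a) $i^*=j^*$; (b) $i^*\neq j^*$ but $B_{i^*}=B_{j^*}$ (same feature bin $I_k$); (c) $B_{i^*}\neq B_{j^*}$. Case (a) has probability $N^{(0)}/(N^{(0)})^2$ of landing a given missing index in bin $k$ weighted by $N_k^0/N^{(0)}$... more precisely the three groups of terms in~\eqref{eqn:CI_discrete_squared} — the single-$\delta_{\bar S_i}$ sum, the within-bin $i\neq j$ sum, and the across-bin sum — are precisely the contributions of cases (a), (b), (c) after I use simultaneous (joint) exchangeability within each $(Y_i:i\in I_k)$ conditional on $X_{1:n},A_{1:n}$ to \emph{replace the missing-index scores by randomly sampled scores from the same bin}. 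Concretely, conditioning on $X_{1:n},A_{1:n}$, for a uniformly random size-$N_k^0$ subset of $I_k$ designated ``missing'', the joint law of any pair of designated scores is exchangeable with any pair drawn from all of $I_k$; this lets me upper-bound the relevant $\min$ in distribution by $\min\{S_i,S_j\}$ with $i,j$ sampled from the full index sets, whose point masses (counting with/without replacement appropriately: $N_k^0/N_k$ within a bin for a single index, $\frac{N_k^0(N_k^0-1)}{N_k(N_k-1)}$ for an ordered within-bin pair, $\frac{N_k^0 N_{k'}^0}{N_k N_{k'}}$ for a cross-bin pair) match exactly the coefficients in~\eqref{eqn:CI_discrete_squared}. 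The normalization $1/(N^{(0)})^2$ is the probability weight of the uniform pair draw, and one checks the coefficients sum to $1$ so the measure is a bona fide probability distribution. Having matched the measure, $q = Q_{1-\alpha^2}$ of it is by definition a value with at most $\alpha^2$ mass strictly above it, which yields $\PP{\min\{\bar S_{i^*},\bar S_{j^*}\}>q}\le\alpha^2$ and hence~\eqref{eqn:conditional_sq_guarantee}.

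The main obstacle I anticipate is handling the exchangeability argument \emph{cleanly and correctly in the joint (pairwise) sense}, especially the bookkeeping of sampling with versus without replacement: within a bin, the two sampled missing indices are distinct, so the pair behaves like sampling two distinct elements of $I_k$ (hence the $N_k(N_k-1)$ denominator and the need for joint exchangeability, not just marginal), whereas the single-index diagonal term and the cross-bin term involve independent-looking draws. I would need to verify that replacing $\bar S_{i^*}$ (which is $+\infty$ whenever $i^*$ happens to be a missing index, but here $i^*\in I_{A=0}$ always) by a bin-resampled score is a legitimate stochastic-domination step: since $\bar S_i = +\infty$ for every $i\in I_{A=0}$, actually on the event we care about all the ``missing'' scores in the min are $+\infty$, so the inequality $\min\{\bar S_{i^*},\bar S_{j^*}\} = +\infty \ge \min\{$resampled scores$\}$ holds trivially — the real content is that the \emph{resampled} pair, under joint exchangeability, has the stated law, so that $q$ computed from~\eqref{eqn:CI_discrete_squared} dominates the true $(1-\alpha^2)$-quantile threshold. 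I would make this precise by the standard conformal symmetry argument (the rank of the ``true'' min among the exchangeable collection is sub-uniform), taking care that the $0/0:=1$ convention for $N^{(0)}=0$ makes~\eqref{eqn:conditional_sq_guarantee} trivially hold in that degenerate case, and that the ``$N_k^0(N_k^0-1)/(N_k(N_k-1)):=0$ when $N_k=1$'' convention matches the vanishing of the within-bin pair contribution when a bin has a single point.
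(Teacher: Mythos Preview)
Your approach is the paper's: reduce the squared miscoverage to the joint miscoverage probability for a uniform random pair $(i^*,j^*)\in I_{A=0}^2$, then use joint within-bin exchangeability (permutation-averaging over $\mathcal S_{I_{1:M}}$) to match the three groups of coefficients in~\eqref{eqn:CI_discrete_squared} with your three cases (diagonal, same-bin off-diagonal, cross-bin). The counting you describe is exactly the paper's, yielding $1/N_k$, $1/(N_k(N_k-1))$, and $1/(N_kN_{k'})$ as the permutation-fraction for each case.

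The one point to straighten out is the $S$ versus $\bar S$ bookkeeping, which you flag but then tangle. The miscoverage event is $\{\min\{S_{i^*},S_{j^*}\}>q\}$ with the \emph{true} scores $S_{i^*}=s(X_{i^*},Y_{i^*})$; your displayed target $\PP{\min\{\bar S_{i^*},\bar S_{j^*}\}>q}\le\alpha^2$ is vacuous because $\bar S_{i^*}=+\infty$ whenever $i^*\in I_{A=0}$. The paper's clean device is to regard $q_{1-\alpha^2}$ as a function of the full score vector, note it is coordinatewise monotone so that $q_{1-\alpha^2}(\tilde{\bar S}_1,\ldots,\tilde{\bar S}_M)\ge q_{1-\alpha^2}(\tilde S_1,\ldots,\tilde S_M)$, and then run the entire exchangeability argument with the true scores against the unobservable threshold $q_{1-\alpha^2}(\tilde S)$; the deterministic quantile inequality closes the bound, and the monotonicity step transfers it to the observable threshold used in $\ch^2$. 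Your ``stochastic domination'' paragraph is reaching for this but has the inequality pointing the wrong way---it is the \emph{threshold} that is inflated by passing from $S$ to $\bar S$, not the test statistic.
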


The proof is given in the Appendix, but we briefly go over the idea here. The key observation is that the condition~\eqref{eqn:conditional_sq_guarantee} is equivalent to bounding the simultaneous miscoverage probability of two randomly and independently chosen missing outcomes by $\alpha^2$. 
By letting $j_1^*,j_2^*$ be two independent random draws from $\textnormal{Unif}([m])$, the target inequality~\eqref{eqn:conditional_sq_guarantee} is equivalent to
\[\PPst{Y_{n+j_1^*} \notin \ch^2(X_{n+j_1^*}), Y_{n+j_2^*} \notin \ch^2(X_{n+j_2^*})}{X_{1:(n+m)}} \le \alpha^2.\]
This observation is related to an intuitive interpretation of the distribution inside the quantile term of~\eqref{eqn:CI_discrete_squared}---the conditional distribution of $\min\{\bar{S}_{n+j_1^*}, \bar{S}_{n+j_2^*}\}$ given the set of scores $\{S_i : i \in [n]\}$.

Similarly to the prediction set~\eqref{eqn:CI_MAR_E}, 
the prediction set~\eqref{eqn:CI_discrete_squared} can be conservative if the proportion of the missing outcomes is high, and the distribution inside the quantile function has a large mass on $+\infty$. 
Specifically, the probability mass on $+\infty$ can be computed as
\[\frac{1}{m^2}\left[\sum_{k=1}^M \frac{N_k^0}{N_k}\cdot N_k^0 + \sum_{k=1}^M \frac{N_k^0(N_k^0-1)}{N_k(N_k-1)} \cdot N_k^0 (N_k^0-1)+\sum_{1 \le k \neq k' \le M} \frac{N_k^0 N_{k'}^0}{N_k N_{k'}}\cdot N_k^0 N_{k'}^0\right],\]
which can be approximated as
\[\frac{1}{m^2} \left(\tau \cdot m + \tau^2 \cdot \sum_{k=1}^M {N_k^0}^2 + \tau^2 \cdot \sum_{1 \le k \neq k' \le M} N_k^0 N_{k'}^0\right) = \tau^2 + \frac{\tau}{m},\]
where $\tau$ denotes the empirical missingness probability. To deal with the case where $\tau$ is large, we can apply a partitioning strategy, as before.

Let $U = \{U_1,\ldots,U_L\}$ be a partition of $[m]$, and let $N_{\ell}^0 = |U_{\ell}|$, for $\ell \in [L]$. For each $l$, let $\ch^{\ell}$ be the prediction set obtained by applying~\eqref{eqn:CI_discrete_squared} to the subset $(X_{n+j})_{j \in U_{\ell}}$  of the test
data, with level
\begin{equation}\label{eqn:alpha_l}
    \alpha_{\ell} = \frac{N_{\ell}^0 m}{\sum_{l'=1}^L (N_{l'}^0)^2} \cdot \alpha.
\end{equation}
Intuitively, $\alpha_{\ell}$ is proportional to $N_{\ell}^0$, which distributes the  error level proportionally to the number of missing values across partition elements. 
Let $\C_U^2$ denote this procedure, i.e., 
for all $j \in [m]$,
$\ch_U^2 = \C_U^2(\D)$ is given by $\ch_U^2(x,j) = \ch^{\ell_j}(x)$, where $\ell_j$ denotes the unique $\ell \in [L]$ such that $U_{\ell}$ contains $j$. We prove that this procedure satisfies the same guarantee.

\begin{corollary}\label{cor:sq_partition}
Under the assumptions of Theorem~\ref{thm:MAR_squared}, the procedure $\ch_U^2$ satisfies
\begin{equation*}
    \EEst{\Bigg(\frac{1}{m}\sum_{j=1}^m \One{Y_{n+j} \notin \ch_U^2(X_{n+j}, j)}\Bigg)^2}{X_{1:(n+m)}} \le \alpha^2.
\end{equation*}
\end{corollary}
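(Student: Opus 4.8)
\textbf{Proof proposal for Corollary~\ref{cor:sq_partition}.}
The plan is to split the squared miscoverage into one squared term per partition element plus cross terms, bound each squared term by a direct application of Theorem~\ref{thm:MAR_squared} to the corresponding sub-dataset, and control the cross terms by Cauchy--Schwarz without claiming any independence between the $\ch^l$. For $l \in I_{\ge 1}$ set
\[
\hat m_l = \frac{1}{N_l^0}\sum_{i \in U_l \cap I_{A=0}} \One{Y_i \notin \ch^l(X_i)},
\]
the within-$U_l$ miscoverage proportion, and $\hat m_l = 0$ otherwise. Since $U$ partitions $[n]$ and $\ch_U^2(X_i,i)=\ch^{l_i}(X_i)$, we have $N^{(0)} = \sum_{l \in I_{\ge 1}} N_l^0$ and $\frac{1}{N^{(0)}}\sum_{i \in I_{A=0}} \One{Y_i \notin \ch_U^2(X_i,i)} = \frac{1}{N^{(0)}}\sum_{l \in I_{\ge 1}} N_l^0 \hat m_l$, so expanding the square gives
\[
\EEst{\Big(\tfrac{1}{N^{(0)}}\sum_{i \in I_{A=0}} \One{Y_i \notin \ch_U^2(X_i,i)}\Big)^{2}}{X_{1:n},A_{1:n}}
= \frac{1}{(N^{(0)})^{2}}\sum_{l,l' \in I_{\ge 1}} N_l^0 N_{l'}^0\, \EEst{\hat m_l \hat m_{l'}}{X_{1:n},A_{1:n}}.
\]

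First I would handle the diagonal terms. Applying Theorem~\ref{thm:MAR_squared} to the sub-dataset $\{(X_i,A_i,Y_iA_i): i \in \bar U_l\}$ at level $\alpha_l$ from~\eqref{eqn:alpha_l}, whose hypotheses are inherited since a sub-collection of a jointly exchangeable family is jointly exchangeable and $\bar U_l \supseteq I_{A=1}$ retains all observed outcomes, yields $\EEst{\hat m_l^2}{X_{\bar U_l},A_{\bar U_l}} \le \alpha_l^2$; because $\hat m_l$ depends on the data only through $\bar U_l$-indexed quantities, this upgrades to $\EEst{\hat m_l^2}{X_{1:n},A_{1:n}} \le \alpha_l^2$. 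For the off-diagonal terms $l \neq l'$, rather than arguing independence of $\ch^l$ and $\ch^{l'}$ (which fails, as both are built from the shared observed data indexed by $I_{A=1}$), I would use the conditional Cauchy--Schwarz inequality: $\EEst{\hat m_l \hat m_{l'}}{X_{1:n},A_{1:n}} \le \big(\EEst{\hat m_l^2}{X_{1:n},A_{1:n}}\,\EEst{\hat m_{l'}^2}{X_{1:n},A_{1:n}}\big)^{1/2} \le \alpha_l\alpha_{l'}$.

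Combining, the conditional second moment is at most $\frac{1}{(N^{(0)})^{2}}\sum_{l,l'} N_l^0 N_{l'}^0 \alpha_l\alpha_{l'} = \big(\tfrac{1}{N^{(0)}}\sum_{l \in I_{\ge 1}} N_l^0\alpha_l\big)^{2}$, and the final step is the algebraic identity for which the weights in~\eqref{eqn:alpha_l} were designed: $\sum_{l} N_l^0\alpha_l = \frac{N^{(0)}\alpha}{\sum_{l'}(N_{l'}^0)^2}\sum_{l}(N_l^0)^2 = N^{(0)}\alpha$, giving the bound $\alpha^2$; the case $N^{(0)}=0$ is immediate from the convention that coverage equals $1$. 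I expect the only genuine subtlety to be the bookkeeping in passing from conditioning on $(X_{\bar U_l},A_{\bar U_l})$ to conditioning on the full $(X_{1:n},A_{1:n})$ and the accompanying check that each sub-dataset inherits the joint within-feature exchangeability hypothesis; the Cauchy--Schwarz step is what renders the cross terms tractable with no independence claim, and the rest is the weighting identity baked into $\alpha_l$.
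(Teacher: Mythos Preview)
Your proof is correct and uses the same essential ingredients as the paper---apply Theorem~\ref{thm:MAR_squared} to each sub-dataset to get $\EEst{\hat m_l^2}{X_{1:n},A_{1:n}}\le\alpha_l^2$, then combine via Cauchy--Schwarz and the algebraic identity built into \eqref{eqn:alpha_l}---but organizes the Cauchy--Schwarz step differently. The paper applies Cauchy--Schwarz \emph{pointwise} to the random vector before taking expectation, obtaining $\big(\sum_l (N_l^0/N^{(0)})\hat m_l\big)^2 \le \big(\sum_l (N_l^0/N^{(0)})^2\big)\big(\sum_l \hat m_l^2\big)$ and then bounding $\sum_l\EEst{\hat m_l^2}{V_n}\le\sum_l\alpha_l^2$; you instead expand the square and apply \emph{conditional} Cauchy--Schwarz term-by-term to the cross products. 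Your route yields the intermediate bound $\big(\tfrac{1}{N^{(0)}}\sum_l N_l^0\alpha_l\big)^2$, which is in general no larger than the paper's $\big(\sum_l(N_l^0/N^{(0)})^2\big)\big(\sum_l\alpha_l^2\big)$; the two coincide here precisely because $\alpha_l\propto N_l^0$, which is exactly what the choice \eqref{eqn:alpha_l} enforces. Your remark about upgrading the conditioning from $(X_{\bar U_l},A_{\bar U_l})$ to $(X_{1:n},A_{1:n})$ is well taken and handled correctly: the joint within-feature exchangeability is assumed conditional on the full $(X_{1:n},A_{1:n})$ and is inherited by sub-collections, so Theorem~\ref{thm:MAR_squared} applied to the $\bar U_l$-indexed data already delivers the bound conditional on the full $(X_{1:n},A_{1:n})$.
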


For this procedure, the set of singletons $U^{(1)} = \{\{j\} : j \in [n]\}$
is not a desirable choice of $U$, 
since it is equivalent to constructing the prediction sets~\eqref{eqn:discrete_split_conformal} at level $\alpha^2$, 
likely leading to overly conservative prediction sets. 
To see that, suppose $m=1$ and the unique missing outcome occurs in $B_k$, i.e., $X_{n+1} = X_k'$. Then
for all $x$, the $\ch^2$ in~\eqref{eqn:CI_discrete_squared} can be simplified to 

\begin{align*}
    \ch^2(x) &= \left\{y \in \Y : s(x,y) \le Q_{1-\alpha^2}\left(\sum_{i \in I_k} \frac{1}{N_k} \cdot \delta_{\bar{S}_i}\right)\right\}\\
    &= \left\{y \in \Y : s(x,y) \le Q_{1-\alpha^2}\left(\sum_{i \in I_k \backslash \{n+1\}} \frac{1}{N_k} \cdot \delta_{S_i} + \frac{1}{N_k}\delta_{\infty}\right)\right\},
\end{align*}
and thus we simply obtain a split conformal prediction set at level $\alpha^2$. Thus, even if we have a large sample size so that each $N_k$ is sufficiently large, $\ch_{U^{(1)}}^2$ is likely to be conservative.

As before, a reasonable choice of $U$ could 
minimize the partition size $|U|$, 
while keeping the ratio $|U_{\ell}|/n$ small for each $\ell \in [L]$. For example, if the overall proportion of missing outcomes is $0.2$ and we aim to have probability mass on $+\infty$ less than $0.01$, 
one can choose to have a partition of size $20$, with
nearly equal-size partitions. 

\subsection{Inference for general feature distributions via propensity discretization}

Now we consider the general case where $X$ can be continuous. If $p_{A \mid X}$ is known, we can apply a strategy similar to \eqref{eqn:pro_CP}. 
Construct the partition $\B$ as in~\eqref{eqn:eps_partition}.
We use notations such as $I_k^\B, I_k^{\B,0}, I_k^{\B,1}, N_k^\B, N_k^{\B,0}, N_k^{\B,1}$ as defined in Section~\ref{sec:conformal_discretization}. 
Then, for all $x\in\mathcal{X}$, 
we construct the prediction set, 
with 
$\bar{S}_{ij} := \min\{\bar{S}_i,\bar{S}_j\}$ for all $i,j$,
\begin{multline}\label{eqn:CI_discrete_squared_p_A_X}
    \chps(x) = \left\{\rule{0cm}{1cm} y \in \Y : s(x,y) \le Q_{1-\alpha^2}\left(\rule{0cm}{1cm} \sum_{k=1}^M \sum_{i \in I_k^\B} \frac{1}{m^2}\cdot \frac{N_k^{\B,0}}{N_k^\B}\cdot\delta_{\bar{S}_i}\right. \right. \\
    \left. \left.
    + \sum_{k=1}^M\sum_{\substack{i,j\in I_k^\B \\ i \neq j}} \frac{N_k^{\B,0}(N_k^{\B,0}-1)}{m^2N_k^\B(N_k^\B-1)}\delta_{\bar{S}_{ij} } 
    + \sum_{1 \le k \neq k' \le M}\sum_{i \in I_k^\B} \sum_{j \in I_{k'}^\B} \frac{N_k^{\B,0}N_{k'}^{\B,0}}{m^2N_k^\B N_{k'}^\B}\delta_{\bar{S}_{ij} } \rule{0cm}{1cm}\right) \rule{0cm}{1cm}\right\},
\end{multline}
which is obtained by applying $\ch$ from~\eqref{eqn:CI_discrete_squared} to the discretized data $(B_i,Y_i)_{i \in [n]}$ and $(B_{n+j})_{j \in [m]}$.

\begin{theorem}\label{thm:squared_p_A_X}
    Suppose $0 < p_{A \mid X}(x) < 1$ for any $x \in \X$. Then $\chps$ from~\eqref{eqn:CI_discrete_squared_p_A_X} satisfies
    \begin{equation*}
    \EEst{\Bigg(\frac{1}{m}\sum_{j=1}^m \One{Y_{n+j} \notin \chps(X_{n+j})}\Bigg)^2}{B_{1:(n+m)}} \le \alpha^2+2\eps.
    \end{equation*}
\end{theorem}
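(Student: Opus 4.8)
The plan is to reduce the propensity-discretized squared-coverage claim to the discrete-feature squared-coverage guarantee of Theorem \ref{thm:MAR_squared}, paying a total-variation price of $2\eps$ for the loss of exact exchangeability after binning. First I would rewrite the target quantity as a simultaneous miscoverage probability: conditionally on $B_{1:n}$ and $A_{1:n}$, draw $i^*, j^*$ independently and uniformly from $I_{A=0}$, so that
\begin{equation*}
\EEst{\hat m^2}{B_{1:n},A_{1:n}} = \PPst{Y_{i^*}\notin\chps(X_{i^*}),\ Y_{j^*}\notin\chps(X_{j^*})}{B_{1:n},A_{1:n}},
\end{equation*}
exactly as in the proof sketch following Theorem \ref{thm:MAR_squared}. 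The prediction set $\chps$ is, by construction, the set obtained by running the discrete procedure $\ch^2$ from \eqref{eqn:CI_discrete_squared} on the binned data $(B_i,A_i,Z_iA_i)_{1\le i\le n}$; so if the outcomes $(Y_i : i\in I_k^\B)$ within each bin were jointly exchangeable conditionally on $B_{1:n},A_{1:n}$, Theorem \ref{thm:MAR_squared} would give the bound $\alpha^2$ directly. The whole difficulty is that binning the features destroys exact exchangeability: conditioning on $B_i=k$ rather than on $X_i$ means the missingness pattern within a bin is no longer exchangeable with respect to the within-bin outcomes.

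The key step is therefore a coupling/total-variation argument, leveraging the approximate balancing result. By the $\ep$-discretization, all feature values in a bin $D_k$ have propensity odds within a factor $1+\eps$ of each other; Lemma \ref{lem:dtv_0_1} then controls, for each bin, the total-variation distance between the true conditional law of $(A_i : i\in I_k^\B)$ given the within-bin outcomes and features, and the law that would make $A$ independent of $Y$ within the bin (i.e.\ the exchangeable surrogate). I would introduce a surrogate distribution $\widetilde P$ under which, conditionally on $B_{1:n}$, the missingness indicators within each bin are drawn independently of the within-bin outcomes (so that $\ch^2$'s guarantee applies verbatim and yields $\alpha^2$), and then bound
\begin{equation*}
\bigl|\PP{\cdot} - \widetilde{\PP}{}\{\cdot\}\bigr| \le \dtv\!\left(P,\widetilde P\right),
\end{equation*}
where the event in question is the joint miscoverage of $i^*,j^*$. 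The $2\eps$ slack — rather than the $\eps$ of Theorem \ref{thm:MAR_known_p_A_X} — arises because the squared guarantee concerns \emph{two} independently drawn missing outcomes, and each contributes an $\eps$-order TV discrepancy; making this additivity precise (as opposed to, say, a multiplicative $(1+\eps)^2$ blow-up) is the delicate bookkeeping part.

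The main obstacle I anticipate is exactly this step: carefully defining the surrogate coupling so that (i) under the surrogate the bin-wise exchangeability used by Theorem \ref{thm:MAR_squared} genuinely holds, and (ii) the TV distance between the true joint law of $(X_{1:n},Y_{1:n},A_{1:n})$ (conditioned on $B_{1:n}$) and the surrogate is bounded by $2\eps$ uniformly, with the factor of $2$ tracked honestly through the two-draw structure. Once the coupling is in place the rest is routine: apply Theorem \ref{thm:MAR_squared} under the surrogate to get $\alpha^2$, add the TV bound, and identify the resulting constant as $2\eps$ using $\delta_{\hat p}$-type estimates specialized to the case $\hat p = p$. I would also need to check the degenerate cases ($N^{(0)}=0$, singleton bins with $N_k^\B=1$) separately, as in the statement's convention $0/0:=1$ and the definition of $N_k^0(N_k^0-1)/N_k(N_k-1)$ as zero, but these are straightforward.
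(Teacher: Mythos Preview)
Your overall strategy matches the paper's: rewrite the squared miscoverage as $\PPst{Y_{i^*}\notin\chps(X_{i^*}),\ Y_{j^*}\notin\chps(X_{j^*})}{B_{1:n},A_{1:n}}$ for two i.i.d.\ uniform draws $i^*,j^*$ from $I_{A=0}$, introduce a surrogate law under which the within-bin scores are exchangeable so that Theorem~\ref{thm:MAR_squared} delivers the bound $\alpha^2$, and pay a total-variation price for switching laws. However, there is a real gap in how you set up the coupling.

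You propose to bound the total-variation distance between the \emph{full joint laws} of $(X_{1:n},Y_{1:n},A_{1:n})$ (conditioned on $B_{1:n}$) under $P$ and $\widetilde P$ by $2\eps$. This is not possible: the full-joint TV scales with $N^{(0)}$, since each of the $N^{(0)}$ unobserved scores carries a marginal discrepancy of order $\eps$. The paper avoids this by noting that the threshold in $\chps$ is built from $\bar S_i$ (equal to $+\infty$ for $i\in I_{A=0}$), so the miscoverage event depends on the unobserved scores \emph{only through the pair} $(S_{i^*},S_{j^*})$, while the observed scores $(S_i)_{i\in I_{A=1}}$ are independent of that pair and have identical law under both measures. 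One therefore needs only $\dtv(P^{**},Q^{**})\le 2\eps$, where $P^{**},Q^{**}$ are the laws of $(S_{i^*},S_{j^*})$; this follows from Lemma~\ref{lem:dtv_0_1} applied coordinatewise (giving $\eps$ when $i^*=j^*$ and $2\eps$ when $i^*\neq j^*$) and averaged. Relatedly, the surrogate is more cleanly defined by keeping $A_{1:n}$ fixed---you are conditioning on it---and replacing $P_{S\mid A=0,\,X\in B_i}$ by $P_{S\mid A=1,\,X\in B_i}$ for each $i\in I_{A=0}$, rather than ``redrawing the missingness indicators'', which conflicts with conditioning on $A_{1:n}$.
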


Again, we can apply the partitioning strategy to obtain narrower prediction sets. Specifically, given a partition $U = \{U_1,U_2,\ldots,U_L\}$ of $[m]$, let $\ch^{\ell}$ be the prediction set we obtain by applying $\chps$ to the subset $(X_{n+j})_{j \in U_{\ell}}$ of the test data at level $\alpha_{\ell}$ given by~\eqref{eqn:alpha_l}, and then define
\begin{equation}\label{eqn:CI_general_squared_partition}
    \chps_U(x,j) = \ch^{\ell_j}(x),
\end{equation}
where $\ell_j$ denotes the unique $\ell \in [L]$ such that $U_{\ell}$ contains $j$. Then by the same logic as Corollary~\ref{cor:sq_partition}, with the result of Theorem~\ref{thm:squared_p_A_X}, we have the following.

\begin{corollary}\label{cor:sq_partition_general}
The prediction set $\chps_U$ from~\eqref{eqn:CI_general_squared_partition} satisfies
\begin{equation*}
    \EEst{\Bigg(\frac{1}{m}\sum_{j=1}^m \One{Y_{n+j} \notin \chps_U(X_{n+j}, j)}\Bigg)^2}{B_{1:(n+m)}} \le \alpha^2+2\eps.
\end{equation*}
\end{corollary}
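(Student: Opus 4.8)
The plan is to reduce the statement to Theorem~\ref{thm:squared_p_A_X} applied bin-by-bin on the partition elements, exactly mirroring the argument used to derive Corollary~\ref{cor:sq_partition} from Theorem~\ref{thm:MAR_squared}. First I would condition throughout on $B_{1:n}$ and $A_{1:n}$; note that the partition $U$ and the assignment $i \mapsto l_i$ are functions of the data that are independent of $Y_{1:n}$ given $(B_{1:n}, A_{1:n})$ (e.g.\ built from a separate fold or from $(B_{1:n},A_{1:n})$ themselves), so they can be treated as fixed. Write $\hat m_l = \frac{1}{N_l^0}\sum_{i \in U_l \cap I_{A=0}} \One{Y_i \notin \ch^{l}(X_i)}$ for the per-element miscoverage proportion, with $\hat m_l := 0$ when $N_l^0 = 0$. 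Since each $\ch^l$ is obtained by applying $\chps$ at level $\alpha_l$ to the subdata indexed by $U_l \cup I_{A=1}$, Theorem~\ref{thm:squared_p_A_X} gives $\EEst{\hat m_l^2}{B_{1:n},A_{1:n}} \le \alpha_l^2 + 2\eps$ for every $l$ with $N_l^0 \ge 1$.

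The second step is the decomposition of the global miscoverage proportion. Because the $U_l$ partition $[n]$, we have
\[
N^{(0)} \hat m = \sum_{i \in I_{A=0}} \One{Y_i \notin \chps_U(X_i,i)} = \sum_{l \in I_{\ge 1}} \sum_{i \in U_l \cap I_{A=0}} \One{Y_i \notin \ch^{l}(X_i)} = \sum_{l \in I_{\ge 1}} N_l^0 \hat m_l,
\]
so that $\hat m = \sum_{l} (N_l^0/N^{(0)}) \hat m_l$ is a convex combination of the $\hat m_l$ with weights $w_l = N_l^0/N^{(0)}$, which are fixed given $(B_{1:n},A_{1:n})$ and sum to one. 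By Jensen's inequality (convexity of $t \mapsto t^2$), $\hat m^2 \le \sum_l w_l \hat m_l^2$, hence
\[
\EEst{\hat m^2}{B_{1:n},A_{1:n}} \le \sum_{l \in I_{\ge 1}} w_l \EEst{\hat m_l^2}{B_{1:n},A_{1:n}} \le \sum_{l \in I_{\ge 1}} w_l (\alpha_l^2 + 2\eps).
\]

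The third step is to check that the choice $\alpha_l = \frac{N_l^0 N^{(0)}}{\sum_{l'} (N_{l'}^0)^2}\,\alpha$ makes the first sum collapse to $\alpha^2$. Indeed $w_l \alpha_l^2 = \frac{N_l^0}{N^{(0)}} \cdot \frac{(N_l^0)^2 (N^{(0)})^2}{(\sum_{l'}(N_{l'}^0)^2)^2}\alpha^2 = \frac{(N_l^0)^3 N^{(0)}}{(\sum_{l'}(N_{l'}^0)^2)^2}\alpha^2$, and I would instead observe more cleanly that $\sum_l w_l \alpha_l = \sum_l \frac{N_l^0}{N^{(0)}}\cdot\frac{N_l^0 N^{(0)}}{\sum_{l'}(N_{l'}^0)^2}\alpha = \alpha$; then since each $\alpha_l \le $ its maximum, one has $\sum_l w_l \alpha_l^2 \le (\max_l \alpha_l)\sum_l w_l \alpha_l$, which is not quite $\alpha^2$ — so the correct route is the direct computation $\sum_l w_l \alpha_l^2 = \alpha^2 \sum_l \frac{(N_l^0)^3 N^{(0)}}{(\sum_{l'}(N_{l'}^0)^2)^2}$, and one uses $\sum_l (N_l^0)^3 \le (\sum_l (N_l^0)^2)\cdot \max_l N_l^0 \le \cdots$; cleaner yet, note $\sum_l (N_l^0)^3 \le \big(\sum_l (N_l^0)^2\big)^{3/2}$ is false in general, so I would in fact adopt the argument of Corollary~\ref{cor:sq_partition} verbatim, which shows that with this $\alpha_l$ the bound telescopes to exactly $\alpha^2$ via $\sum_l (N_l^0)^2 \alpha_l^2 / \alpha^2 \cdot (\text{normalization}) $ — the point being that $\sum_l N_l^0 \alpha_l^2 = \alpha^2 \sum_l (N_l^0)^3 (N^{(0)})/(\sum(N_{l'}^0)^2)^2$ and one needs the identity as it appears in the proof of Corollary~\ref{cor:sq_partition}. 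Finally $\sum_{l \in I_{\ge 1}} w_l \cdot 2\eps \le 2\eps$ since the $w_l$ sum to at most one, giving the claimed bound $\alpha^2 + 2\eps$; a last tower-property step removes the conditioning on $B_{1:n}$ to pass from the $\{B_{1:n},A_{1:n}\}$-conditional statement to the $\{X_{1:n},A_{1:n}\}$-conditional one, using that $B_{1:n}$ is a deterministic function of $X_{1:n}$.

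The main obstacle is the bookkeeping in the third step: one must verify that the per-element levels $\alpha_l$ in~\eqref{eqn:alpha_l} are exactly calibrated so that the weighted sum of $\alpha_l^2$ returns $\alpha^2$ rather than something larger, and this is precisely the computation carried out in the proof of Corollary~\ref{cor:sq_partition}; the only new ingredient here is carrying the additive $2\eps$ discretization slack through the convex combination, which is immediate because $\sum_l w_l = 1$. Everything else — the Jensen step, the convex-combination decomposition, and the final tower property — is routine.
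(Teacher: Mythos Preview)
There is a genuine gap in the combinatorial bookkeeping, and it cannot be patched by ``adopting the argument of Corollary~\ref{cor:sq_partition} verbatim''. Your Jensen step $\hat m^2\le \sum_l w_l \hat m_l^2$ with $w_l=N_l^0/N^{(0)}$ is correct, but it leads to the bound $\sum_l w_l\alpha_l^2=\alpha^2\cdot \dfrac{N^{(0)}\sum_l (N_l^0)^3}{\big(\sum_l (N_l^0)^2\big)^2}$, and by Cauchy--Schwarz (with vectors $(N_l^0)^{1/2}$ and $(N_l^0)^{3/2}$) one has $(\sum_l a_l)(\sum_l a_l^3)\ge(\sum_l a_l^2)^2$, so this ratio is $\ge 1$ with strict inequality unless all $N_l^0$ coincide. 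For instance with two cells and $N_1^0=1,\ N_2^0=2$ you get $1.08\,\alpha^2$, not $\alpha^2$. The reason the levels $\alpha_l$ in~\eqref{eqn:alpha_l} look right is that they are calibrated for the paper's Cauchy--Schwarz bound $\hat m^2\le\big(\sum_l w_l^2\big)\big(\sum_l\hat m_l^2\big)$, for which one does get $\big(\sum_l w_l^2\big)\big(\sum_l \alpha_l^2\big)=\alpha^2$ exactly; they are \emph{not} calibrated for the Jensen weights.

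Unfortunately, switching to the Cauchy--Schwarz route does not rescue your per-element strategy either: with that bound the additive slack becomes $\big(\sum_l w_l^2\big)\cdot L\cdot 2\eps\ge 2\eps$ (again by QM--AM), so you overshoot on the $\eps$ side instead. The point is that you cannot first apply Theorem~\ref{thm:squared_p_A_X} to each $U_l$ (incurring $+2\eps$ once per element) and then aggregate; the $\alpha^2$ calibration and the $2\eps$ slack are incompatible under either inequality. The paper's argument avoids this by separating the two sources of error: first carry out the Cauchy--Schwarz combination of Corollary~\ref{cor:sq_partition} \emph{under the auxiliary exchangeable measure $Q$} (yielding exactly $\alpha^2$), and then invoke the total-variation comparison from the proof of Theorem~\ref{thm:squared_p_A_X} once, globally, since the relevant event still depends on the unobserved scores only through $(S_{i^*},S_{j^*})$ and $d_{\mathrm{TV}}(P^{**},Q^{**})\le 2\eps$. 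Finally, your last ``tower property'' step is in the wrong direction: since $B_{1:n}$ is a function of $X_{1:n}$, a bound on $\EEst{\cdot}{B_{1:n},A_{1:n}}$ does not propagate to the finer conditioning $\EEst{\cdot}{X_{1:n},A_{1:n}}$.
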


\subsubsection{Approximate inference via estimation of missingness probability}

Next, consider the setting where we do not have access to $p_{A \mid X}$, and instead have an estimate $\hat{p}_{A \mid X}$, and let $\ctps$ be the procedure obtained by constructing $\mathcal{B}$ based on this estimate. 
Applying arguments similar  to the proof of Theorem~\ref{thm:MAR_estimated_p_A_X}, we can prove the following approximate guarantee for the procedure $\ctps$.

\begin{theorem}\label{thm:squared_p_A_X_estimated}
Suppose $0 < p_{A \mid X}(x) < 1$ and $0 < \hat{p}_{A \mid X} < 1$ hold for all $x \in \X$. Define $f_{p,\hat{p}}$ as~\eqref{eqn:f_p_phat}. Then $\ctps$ satisfies
    \begin{equation*}
    \EEst{\Bigg(\frac{1}{m}\sum_{j=1}^m \One{Y_i \notin \ctps(X_i)}\Bigg)^2}{B_{1:(n+m)}} \le \alpha^2+2( \eps + \delta_{\hat{p}_{A\mid X}} + \eps \cdot \delta_{\hat{p}_{A\mid X}}),
    \end{equation*}
    where
    $\delta_{\hat{p}_{A \mid X}} = e^{2\|\log f_{p,\hat{p}}\|_\infty} - 1$.
\end{theorem}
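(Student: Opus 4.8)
The plan is to reduce Theorem~\ref{thm:squared_p_A_X_estimated} to Theorem~\ref{thm:squared_p_A_X} by the same change-of-measure argument used to derive Theorem~\ref{thm:MAR_estimated_p_A_X} from Theorem~\ref{thm:MAR_known_p_A_X}, carried out at the level of the pair $(i^*, j^*)$ of randomly drawn missing indices rather than a single one. First I would recall the key reformulation already noted after Theorem~\ref{thm:MAR_squared}: conditionally on $B_{1:n}, A_{1:n}$, the squared-miscoverage expectation equals $\PPst{Y_{i^*} \notin \ctps(X_{i^*}),\, Y_{j^*} \notin \ctps(X_{j^*})}{B_{1:n},A_{1:n}}$, where $i^*, j^*$ are i.i.d.\ uniform on $I_{A=0}$ and the set $\ctps$ is built from the $\hat p_{A\mid X}$-discretization. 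The construction~\eqref{eqn:CI_discrete_squared_p_A_X} guarantees that the relevant quantile dominates the conditional distribution of $\min\{\bar S_{i^*}, \bar S_{j^*}\}$ given the scores \emph{when the outcomes within each $\hat p_{A\mid X}$-bin are exchangeable given missingness}; the estimated propensity score does not yield exact exchangeability, so the task is to quantify the failure.

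The main step is the approximate-balancing bound. Within each bin $D_k$ of the $\hat p_{A\mid X}$-discretization, the true conditional law of $(Y_i)_{i \in I^\B_k}$ given $(A_i)_{i \in I^\B_k}$ differs from the exchangeable (detached) law. Here I would invoke Lemma~\ref{lem:dtv_0_1} together with the mechanism already used for Theorem~\ref{thm:MAR_estimated_p_A_X}: a feature $x$ landing in $\hat p_{A\mid X}$-bin $D_k$ has true odds $p_{A\mid X}(x)/(1-p_{A\mid X}(x))$ lying in an interval whose multiplicative width is at most $(1+\ep)\cdot e^{2\|\log f_{p,\hat p}\|_\infty}$, because the $\hat p$-bin controls the estimated odds to within a factor $1+\ep$ and $f_{p,\hat p}$ relates estimated to true odds to within a factor $e^{2\|\log f_{p,\hat p}\|_\infty}$. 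Thus the effective discretization parameter for the \emph{true} propensity is $\ep' := (1+\ep)(1+\delta_{\hat p_{A\mid X}}) - 1 = \ep + \delta_{\hat p_{A\mid X}} + \ep\,\delta_{\hat p_{A\mid X}}$, with $\delta_{\hat p_{A\mid X}} = e^{2\|\log f_{p,\hat p}\|_\infty}-1$. Feeding $\ep'$ in place of $\ep$ into the total-variation estimate that underlies Theorem~\ref{thm:squared_p_A_X} then gives the conclusion, since that theorem's slack $2\ep$ arises precisely from bounding, by $\ep$, the total variation distance between the true and the detached conditional laws for \emph{each} of the two draws and adding the two contributions (or, more carefully, from a coupling argument on the joint law of $(Y_{i^*}, Y_{j^*})$ that the factor $2$ already absorbs).

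The one genuine subtlety — and the step I expect to be the main obstacle — is that the squared guarantee involves the \emph{joint} law of two draws, so the total-variation perturbation must be controlled for the pair, not just marginally; naively one might fear a blow-up quadratic in the per-bin discrepancy. The resolution is that the distribution inside the quantile in~\eqref{eqn:CI_discrete_squared_p_A_X} is itself the exact law of $\min\{\bar S_{i^*}, \bar S_{j^*}\}$ under the detached model, so it suffices to bound $d_{\mathrm{TV}}$ between the true and detached joint laws of $(Y_i)_{i \in I_{A=0}}$ given $(B_{1:n}, A_{1:n})$ restricted to the two sampled coordinates; by the same product-over-bins decomposition as in the proof of Theorem~\ref{thm:squared_p_A_X}, this stays linear, giving the additive $2\ep'$ rather than anything worse. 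I would therefore structure the proof as: (i) state the $(i^*, j^*)$ reformulation; (ii) identify $\ep'$ as the true-propensity discretization width via $f_{p,\hat p}$; (iii) quote the total-variation bound from the proof of Theorem~\ref{thm:squared_p_A_X} with $\ep \leadsto \ep'$; (iv) conclude $\EEst{\hat m^2}{B_{1:n},A_{1:n}} \le \alpha^2 + 2\ep'$ and expand $\ep'$. Steps (i), (iii), (iv) are essentially bookkeeping once Theorems~\ref{thm:squared_p_A_X} and~\ref{thm:MAR_estimated_p_A_X} are in hand; step (ii) is where the estimation error enters and deserves the careful multiplicative accounting.
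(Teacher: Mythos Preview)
Your proposal is correct and follows essentially the same route as the paper's proof: reformulate via the pair $(i^*,j^*)$, observe that the $\hat{p}_{A\mid X}$-bins constrain the true odds to a multiplicative window of width $(1+\eps)(1+\delta_{\hat{p}_{A\mid X}})$ so that the effective discretization level is $\ep' = \eps + \delta_{\hat{p}_{A\mid X}} + \eps\,\delta_{\hat{p}_{A\mid X}}$, and then invoke the $2\ep$ total-variation bound from the proof of Theorem~\ref{thm:squared_p_A_X} with $\ep$ replaced by $\ep'$. The paper's proof is terser---it simply says to repeat the steps of Theorem~\ref{thm:squared_p_A_X} and quote the $\dtv(P^*,Q^*)\le\ep'$ bound from the proof of Theorem~\ref{thm:MAR_estimated_p_A_X}---but the logic is identical to yours, including your correct handling of the joint-versus-marginal subtlety.
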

As before, we have the same upper bound for the partition-based procedure $\ctps_U$. We omit this to avoid repetition.

\subsection{Notes on other potential target guarantees}

So far, we have investigated the squared-coverage guarantee as a stronger miscoverage proportion-controller. 
In this section, we explore other possible targets of inference. 

\subsubsection{Higher-order coverage guarantees}

Suppose we aim for the following $K$-th order coverage guarantee:
\[\EE{\Bigg(\frac{1}{m}\sum_{j=1}^m \One{Y_{n+j} \notin \widehat{C}(X_{n+j})}\Bigg)^K} \le \alpha^K,\]
where $K \ge 3$. 
A larger $K$ means a stronger requirement, in the sense that the $K'$-th order coverage guarantee implies the $K$-th order guarantee if $K' > K$. 
This guarantee is also achievable,
but may require extremely wide prediction sets. 

Recall that the squared coverage guarantee can be achieved by looking at the simultaneous miscoverage of two randomly chosen missing outcomes. 
Similarly, the $K$-th order coverage guarantee can be obtained by investigating the simultaneous miscoverage of $K$ randomly chosen missing outcomes, and it turns out that the resulting prediction set has the form
$\ch(x) = \left\{y \in \Y : s(x,y) \le Q_{1-\alpha^K}(P_K)\right\}$,
where the distribution $P_K$ inside the quantile is supported on the set of observed scores and $+\infty$, i.e., $\{\bar{S}_i : i \in [n+m]\}$, similarly to $\ch^2$.
However, this procedure is unlikely to provide informative prediction sets in practice, as it involves the $(1-\alpha^K)$-quantile of a distribution whose support size is less than $n$. 
For $\alpha = 0.05$ and $K=3$, this already requires $n\ge 8000$ to be non-trivial, and the requirement grows exponentially with $K$,
Thus, we focus on second-order coverage guarantees in this work.

\subsubsection{PAC-type guarantee}\label{sec:pac}
One might be interested in a guarantee of the form
\eqref{eqn:guarantee_pac_1}, which asks the prediction set to cover at least $\lceil m (1-\alpha) \rceil$ missing outcomes with sufficient probability. 
This PAC-type guarantee provides a clean interpretation of the procedure, but it turns out quite challenging to 
handle. 
Indeed, it requires dealing with the set of all missing outcomes, instead of the coverage for one or two randomly chosen missing outcomes. 

A natural approach to achieve the guarantee~\eqref{eqn:guarantee_pac_1} is to consider the distribution of the $\lceil m (1-\alpha) \rceil$-th smallest element among the set of scores with missing outcomes. Specifically, we can construct the prediction set for all $x$ as

\[\ch(x) = \left\{y \in \Y : s(x,y) \le Q_{1-\delta}\left(\sum_{\substack{J = J_1 \cup \ldots J_M \\ J_k \subset I_k, |J_k| = N_k^0}} \frac{1}{\prod_{k=1}^M \binom{N_k}{N_k^0}} \cdot \delta_{\bar{S}_{(\lceil(1-\alpha)m\rceil)}^J}\right)\right\},\]
and show that this procedure satisfies the guarantee~\eqref{eqn:guarantee_pac_1}, 
through a standard exchangeability argument. 
However, we do not have an obvious adjusted procedure for the case of high missingness probability in this case.
We cannot apply the previous partitioning strategy, since the coverage probability does not have the linearity that expectation has.

\section{Discussion of partitioning the test datapoints}
\label{testpart}

Recall that the mass at $+\infty$ 
in the prediction set from \eqref{eqn:CI_MAR_E}
is
\(
\frac{1}{m} \sum_{k=1}^M \frac{(N_k^0)^2}{N_k},
\)
where $m$ is the number of test points, and $N_k^0, N_k^1$ denote the number of datapoints with a given feature value $X_k'$ in the test and training sets, respectively, and $N_k = N_k^0 + N_k^1$.
To lighten notation, write $u_k = N_k^0$, $o_k = N_k^1$, so that the above mass at $+\infty$ becomes
\begin{equation}\label{F}
F(u_1, \ldots, u_K) = \frac{1}{\sum_{k=1}^M u_k} \left( \sum_{k=1}^M \frac{u_k^2}{u_k + o_k} \right).
\end{equation}

A partition $U = \{U_1, \ldots, U_L\}$ corresponds to splitting up the unobserved datapoints.
Let $\vec{u} = (u_1, \ldots, u_M) \in \mathbb{N}^K$, and let
\(
\vec{u}^{(1)}, \ldots, \vec{u}^{(L)} \in \mathbb{N}^K
\)
such that $u_k^{(\ell)} = U_\ell \cap I_k^0$ denotes the number of test datapoints with a given feature $X_k'$ in the $\ell$-th partition element $U_\ell$.
Then,
\(
\vec{u} = \sum_{\ell=1}^L \vec{u}^{(\ell)},
\)
and the mass at $+\infty$ in the $\ell$-th prediction set $\ch^\ell$ is
\(
F(\vec{u}^{(\ell)}).
\)

To determine how to evaluate a partition,
we must decide how to measure the impact of the mass at $+\infty$ on a prediction set.
In the above partitioning method, a mass of $F(\vec{u}^{(\ell)})$ is placed at $+\infty$ for all $\sum_{k=1}^M u_k^{(\ell)}$ test datapoints used in the construction of $\ch^\ell$.
In particular, if $\alpha \le F(\vec{u}^{(\ell)})$, then the prediction sets for these datapoints cover all possible values of the outcome, and are thus uninformative.

A reasonable goal could be to minimize the number of test datapoints with uninformative prediction sets.
This can thus be formulated as the following integer optimization problem, where we denote by $1_M$ the all-ones vector of size $M$:
\begin{equation}\label{co}
\begin{aligned}
&\min_{L,\, \vec{u}^{(1)}, \ldots, \vec{u}^{(L)}} && 1_M^\top \left( \sum_{\ell \in V_\alpha} \vec{u}^{(\ell)} \right) \\
&\text{subject to} \quad
& & L \in \mathbb{N},
\quad 
\vec{u}^{(1)}, \ldots, \vec{u}^{(L)} \in \mathbb{N}^M, \quad
\sum_{\ell=1}^L \vec{u}^{(\ell)} = \vec{u},
\quad\mathrm{ where }\quad
V_\alpha = \left\{ \ell \in [L] : F(\vec{u}^{(\ell)}) \ge \alpha \right\},
\end{aligned}
\end{equation}
and where $F$ is defined above in \eqref{F}.
There are a large number of fast algorithms and corresponding implementations that can be used to approximate the solution; and any approximation could lead to useful gains in statistical performance.

As a somewhat simpler goal, we could aim to maximize the level $\alpha$ for which there are no uninformative prediction sets.
This can be formulated as
\begin{align}\label{mm}
&\min_{L,\, \vec{u}^{(1)}, \ldots, \vec{u}^{(L)}}
\max_{\ell \in [L]} F(\vec{u}^{(\ell)}), 
\end{align}
subject to the constraints from
\eqref{co}.

To gain some insight into this problem, we consider some relaxed and simplified cases.
First, we can obtain an upper bound on the objective by relaxing the constraints $\vec{u}^{(\ell)} \in \mathbb{N}^M$ to $\vec{u}^{(\ell)} \in [0,\infty)^M$.
Next, consider a special case where all entries of $\vec{u}$ are equal; i.e., where there are an equal number of test datapoints for each feature value. Call this value $\rho$.
Then, we observe that for any vector $x \in [0,\infty)^M$,
we can reduce the value of the missingness by replacing $x$ with a vector whose entries are all equal to the mean value $\bar x$ of the coordinates in $x$:
\[
F(\bar x \cdot 1_M) \le F(x), \quad \forall x \in [0,\infty)^M.
\]
This follows from the convexity of the map
\[
x \mapsto H(x):=\sum_{k=1}^M \frac{x_k^2}{x_k + o_k} \quad \text{on } [0,\infty)^M,
\]
since $F(x) = H(x)/1^\top x$.
The convexity of $x$
in turn follows from the convexity of
\(
z \mapsto h(x) = \frac{z^2}{z + o}\)
on $[0,\infty)$, for $o\ge0$.
The latter can be verified because
\(
h''(z) = \left(\frac{z^2}{z + o}\right)'' = \frac{2o^2}{(z + o)^3} > 0.
\)

Therefore, given any $\vec{u}^{(\ell)}$, we can replace it with
\(
v^{(\ell)} = \left( \frac{1}{M} \sum_{k=1}^M u_k^{(\ell)} \right) \cdot 1_M,
\)
and obtain $F(v^{(\ell)}) \le F(\vec{u}^{(\ell)})$.
Now observe that
\[
\sum_{\ell=1}^L v^{(\ell)} 
= \frac{ 1_M}{M} \sum_{\ell=1}^L \sum_{k=1}^M u_k^{(\ell)} 
= \frac{ 1_M}{M} \sum_{k=1}^M \sum_{\ell=1}^L u_k^{(\ell)},
\]
and
\(
\sum_{\ell=1}^L u_k^{(\ell)} = u_k = \rho.
\)
Hence,
\(
\sum_{\ell=1}^L v^{(\ell)} 
= \frac{ 1_M}{M} \sum_{k=1}^M u_k 
= \rho \cdot 1_M = \vec{u},
\)
and thus $(v^{(1)}, \ldots, v^{(L)})$ satisfies the constraints of our optimization problem.

Thus, denoting
\(
w_\ell = \frac{1}{M} \sum_{k=1}^M u_k^{(\ell)},
\)
the problem reduces to
\[
\begin{aligned}
&\min_{L,\, w_1, \ldots, w_L \ge 0} && \max \left\{ G(w_\ell) : \ell = 1, \ldots, L \right\} \quad
\text{subject to} & \sum_{\ell=1}^L w_\ell = \rho,
\end{aligned}
\]
where
\[
G(w) = \frac{1}{M w} \left( \sum_{k=1}^M \frac{w^2}{w + o_k} \right) = \frac{1}{M} \sum_{k=1}^M \frac{w}{w + o_k}.
\]

Now, since $G$ is increasing,
\[
\max_{\ell \in [L]} G(w_\ell) = G\left( \max_{\ell \in [L]} w_\ell \right) \ge G\left( \frac{\rho}{L} \right),
\]
achieved when $w_\ell = \frac{\rho}{L}$ for all $\ell \in [L]$.

This shows that, for a fixed $L$, the optimum is
\[
G_L := \frac{1}{M} \sum_{k=1}^M \frac{\rho/L}{\rho/L + o_k} = \frac{1}{M} \sum_{k=1}^M \frac{\rho}{\rho + L o_k}.
\]
Moreover, since $G_L$ is decreasing in $L$, the optimum is achieved by taking $L$ as large as possible.

Returning to our original problem, this shows that we should split up the test datapoints corresponding to different features as evenly as possible.
Given $\rho$ datapoints for each feature value, the number of datapoints for a split into $L$ partition elements is $\rho / L$, if $L$ divides $\rho$.
The above analysis shows that we should take $L$ as large as possible, i.e., $L = \rho$, so that each bin contains exactly one datapoint for each feature value.
Moreover, in this case, the minimum achievable $\alpha$ is 
\(
\alpha^*= G_\rho = \frac{1}{M} \sum_{k=1}^M \frac{1}{1 + o_k}.
\)
Therefore we have proved the following result:
\begin{proposition}[Optimal partition achieving minimal mass at $+\infty$ for equal test counts]
Suppose that the number of test datapoints is the same for each feature value.
Then, among all partitions of the test datapoints into disjoint groups, 
the minimum achievable worst-case value of the feature-wise worst-case mass at $+\infty$ 
in the prediction set from \eqref{eqn:CI_MAR_E}---i.e., the optimal value of \eqref{mm}---is
\[
\alpha^* := \frac{1}{M} \sum_{k=1}^M \frac{1}{1 + N_k^1}.
\]
This is attained when each partition element consists of exactly one test datapoint for every feature value.
\end{proposition}

Similar qualitative insights extend to a more general, uneven distribution of test datapoints across features; however, finding the optimal solution seems not as straightforward analytically, and to require numerical computation.

\section{Additional experimental results}
\label{addexp}

\subsection{Illustration of pro-CP2}\label{sec:sim_pro_CP_1_2}

Next, we illustrate the coverage of the procedures $\chp$, $\chps$, and also $\ctp$, $\ctps$,
using an estimate $\hat{p}_{A \mid X}$ of $p_{A \mid X}$.
The theoretical results for these procedures control the bin-conditional coverage rate. 
In this simulation, we sample the data  $(X_i,A_i,Y_iA_i)_{1 \le i \le n}$ multiple times from the marginal distribution $P_X \times P_{A \mid X} \times P_{Y \mid X}$, since the goal of this experiment is to illustrate how the procedures perform in terms of controlling the coverage proportion in various trials.

We first generate training data, and
using 
the normal kernel $K_h(x,y) = \exp(-\frac{1}{2}((y-x)/h)^2)$ for all $x,y$,
apply a kernel regression
\[\hat{p}_{A \mid X}(x) = \frac{\sum_{i=1}^{n_\text{train}} K_h(X_i',x) A_i'}{\sum_{i=1}^{n_\text{train}} K_h(X_i',x)},\]
to construct an estimate $\hat{p}_{A \mid X}$
for all $x$. We select the bandwidth $h$ by applying the method of~\citet{ruppert1995effective}.
We also construct an estimate $\hat{\mu}$ by linear regression, 
and the construct the score function $s(x,y) = |y - \hat{\mu}(x)|$ for all $x,y$.

Then we generate data $(X_i,A_i,Y_iA_i)_{1 \le i \le n}$ with $n=500$, and apply four procedures $\chp_U$, $\chps_U$, $\chp_U$, $\ctps_U$, with $U$ in~\eqref{eqn:U}, under $\alpha=0.2$ and $\eps = 0.1$. 
From now on, we 
denote the four procedures as $\chp$, $\chps$, $\ctp$, $\ctps$ without the subscript $U$ for simplicity. 
For each 
procedure, 
we compute the coverage proportion and the median width of the prediction sets. 
We repeat these steps 500 times, 
and summarize the results in Table~\ref{table:sim}, Figures~\ref{fig:known_ps} and~\ref{fig:unknown_ps}.

\begin{table}

\begin{center} 
\begin{tabular}{llll}
\hline
& & $\PP{\text{coverage} \ge 1-\alpha}$ & $\EE{\text{median width}}$ \\
\hline
\multirow{4}{*}{Setting 1}&$\chp$  & 0.7560 (0.0192)  & 24.61 (0.0856)\\
&$\chps$  & 0.9920 (0.0040) & 29.09 (0.1072)\\
&$\ctp$   & 0.6880 (0.0207) & 23.95 (0.0870)\\
&$\ctps$  & 0.9860 (0.0053) & 28.39 (0.1067)\\\hline
\multirow{4}{*}{Setting 2}&$\chp$ & 0.9060 (0.0131) & 23.86 (0.0935)\\
&$\chps$  & 0.9980 (0.0020) & 30.24 (0.1307)\\
&$\ctp$   & 0.9160 (0.0124) & 23.75 (0.0816)\\
&$\ctps$  & 1.000 (0.0000)&  29.43 (0.1078)\\\hline
\end{tabular} 
\end{center}
\caption{Results for Settings 1 and 2: The probability of the coverage proportion being larger than $1-\alpha$, and the mean of the \emph{median prediction interval width} of the prediction sets $\chp$, $\chps$, $\ctp$, and $\ctps$, with standard errors.}
\label{table:sim}
\end{table}

\begin{figure}[H]
  \centering
  \includegraphics[width=0.9\textwidth]{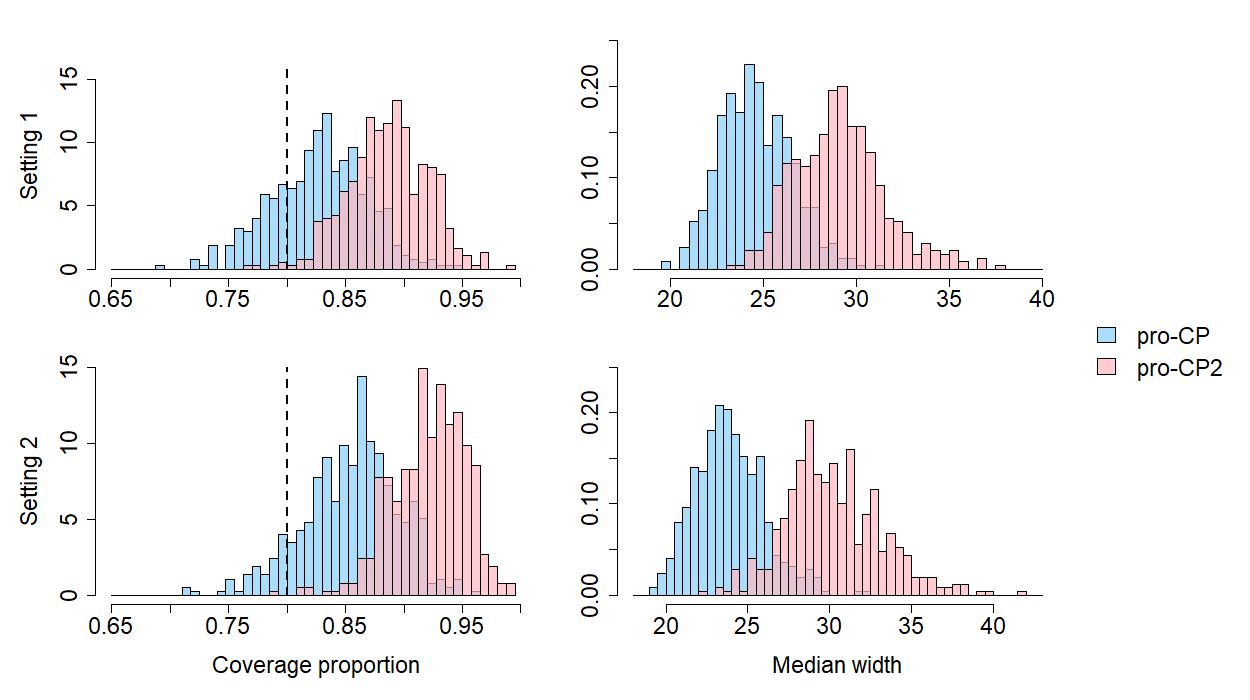}
  \caption{Results for the setting of a known propensity score: histograms of coverage proportion and median width of $\chp$, $\chps$ under 500 independent trials, in Settings 1 and 2.}
  \label{fig:known_ps}
\end{figure}

\begin{figure}[H]
  \centering
  \includegraphics[width=0.9\textwidth]{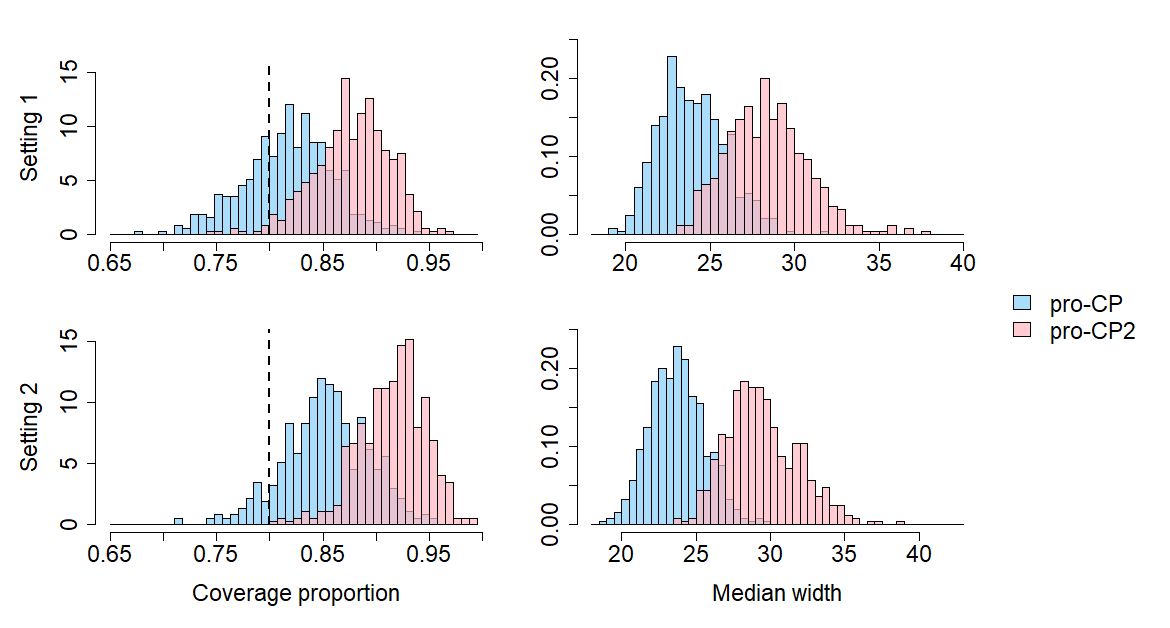}
  \caption{Results for an unknown propensity score: histograms of coverage proportion and median width of $\ctp$, $\ctps$ for 500 independent trials, in Settings 1 and 2.}
  \label{fig:unknown_ps}
\end{figure}

Table~\ref{table:sim} shows estimates of the following values for the four procedures:
\begin{align*}
    &1.\; \text{Probability of coverage proportion at least $ 1-\alpha$: }  \PP{\frac{1}{m} \sum_{ i\in I_{A=0}} \One{Y_i \in \ch(X_i)} \ge 1-\alpha},\\
    &2.\; \text{Expected median interval width: }  \EE{\textnormal{median}\Big(\Big\{\textnormal{leb}(\ch(X_i)) : A_i = 0\Big\}\Big)},
\end{align*}
and Figures~\ref{fig:known_ps} and~\ref{fig:unknown_ps} show the histograms of the coverage proportion and the median width in the two settings.

The results show that pro-CP2 with the squared coverage guarantee tends to provide wider prediction sets, 
to ensure
coverage proportion larger than $1-\alpha$ in most trials. 
This illustrates how the squared coverage guarantee works as an approximation of the PAC-type guarantee~\eqref{eqn:guarantee_pac_1}. The experiments with $\ctp$ and $\ctp$, the procedures based on an estimate $\hat{p}_{A\mid X}$ of the missingness probability, 
show similar results to those with $\chp$ and $\chps$, respectively.

\subsection{Analysis of the conservativeness of pro-CP2}

As the previous experiments illustrate, pro-CP2 provides a stronger guarantee by constructing wider prediction sets. 
A question is whether the procedure increases the width only as needed (compared to pro-CP with the in-expectation guarantee). 
To examine this question, we run pro-CP and pro-CP2 for different values of the level $\alpha$, and compare their widths and coverage rates. Figure~\ref{fig:curve} shows the median width-coverage rate curve of the two procedures in Settings 1 and 2, where the values are the averages over 500 independent trials. 

\begin{figure}[H]
  \centering
  \includegraphics[width=0.9\textwidth]{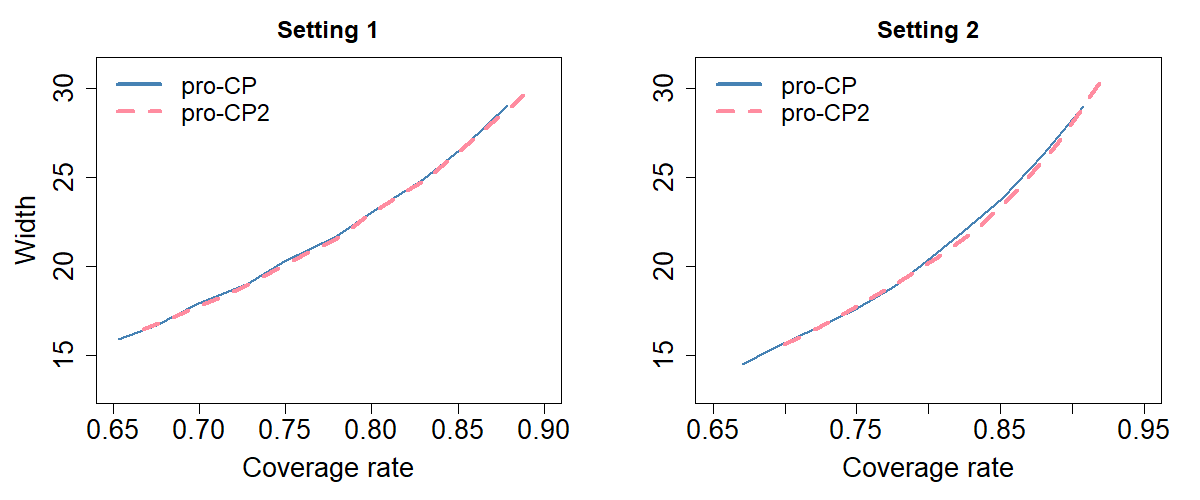}
  \caption{Median width-coverage rate curves of pro-CP and pro-CP2, in Settings 1 and 2.}
  \label{fig:curve}
\end{figure}

The result shows that pro-CP2 provides a similar prediction set to pro-CP, 
for the same coverage rate. In other words, for pro-CP2 run at level $\alpha$, there exists $\alpha'$ that pro-CP run at level $\alpha'$ shows a similar performance (on average). Recall the second interpretation of the squared-coverage guarantee, which suggests that it shifts the expected coverage rate to account for the spread of coverage. The above result demonstrates that pro-CP2 indeed functions like an adjusted pro-CP, without unnecessary widening of the prediction sets.

\section{Additional proofs}

\subsection{Proof of Theorem~\ref{thm:MAR_in_exp}}
We leverage ideas from the proof of the validity of hierarchical conformal prediction \citep[Theorem 1]{lee2023distribution}, see also \cite{dobriban2023symmpi}. Let $j^*$ be a random variable drawn via $j^* \sim \textnormal{Unif}([m])$, independently of the data. 
Then it is enough to prove
\begin{equation}\label{eqn:E_to_P}
    \PPst{Y_{n+j^*} \in \ch(X_{n+j^*})}{X_{1:(n+m)}} \ge 1-\alpha,
\end{equation}
since, as in \eqref{isu},
\begin{align*}
    &\PPst{Y_{n+j^*} \in \ch(X_{n+j^*})}{X_{1:(n+m)}} = \EEst{\One{Y_{n+j^*} \in \ch(X_{n+j^*})}}{X_{1:(n+m)}}\\
    & =\EEst{\EEst{\One{Y_{n+j^*} \in \ch(X_{n+j^*})}}{j^*,\ch, X_{1:(n+m)}}}{X_{1:(n+m)}}
    =  \EEst{\frac{1}{m}\sum_{j=1}^m\One{Y_{n+j} \in \ch(X_{n+j})}}{X_{1:(n+m)}}.
\end{align*}
Define $q_{1-\alpha} : \R^{N_1}\times \R^{N_2} \times \ldots \times \R^{N_M} \rightarrow \R$, 
such that 
for all 
$\tilde{s}_k = (\tilde{s}_{k1},\tilde{s}_{k2}, \ldots, \tilde{s}_{k N_k})^\top \in \R^{N_k}$ for each $k$,
we have
\[q_{1-\alpha}(\tilde{s}_1,\tilde{s}_2,\ldots,\tilde{s}_M) = Q_{1-\alpha}\left(\sum_{k=1}^M \sum_{j=1}^{N_k} \frac{1}{m} \frac{N_k^0}{N_k}\cdot \delta_{\tilde{s}_{kj}}\right).\]
Defining
$\tilde{S}_k = (S_i)_{i \in I_k}$, then it holds that
\[q_{1-\alpha}(\tilde{S}_1, \ldots, \tilde{S}_M) = Q_{1-\alpha}\left(\sum_{k=1}^M \sum_{i \in I_k} \frac{1}{m}\frac{N_k^0}{N_k}\cdot\delta_{S_i}\right).\]
Then by the definition of a quantile, it holds deterministically that
\begin{equation}\label{eqn:weighted_quantile_2}
    \sum_{k=1}^M \sum_{i \in I_k} \frac{1}{m}\frac{N_k^0}{N_k}\cdot \One{S_i \le q_{1-\alpha}(\tilde{S}_1, \ldots, \tilde{S}_M)} \ge 1-\alpha.
\end{equation}
Next, by definition of $q_{1-\alpha}$, we have
\[q_{1-\alpha}(\tilde{s}_1,\ldots,\tilde{s}_M) = q_{1-\alpha}(\tilde{s}_1^{\sigma_1},\ldots,\tilde{s}_M^{\sigma_M}),\]
for any permutations $\sigma_1 \in \mathcal{S}_{N_1}, \ldots, \sigma_M \in \mathcal{S}_{N_M}$, where $\tilde{s}_k^{\sigma_k}$ denotes the $\sigma_k$-permuted vector  $\tilde{s}_k$, for each $k$.
We also have
\[(\tilde{S}_1, \tilde{S}_2, \ldots, \tilde{S}_M) \stackrel{\text{d}}{=} (\tilde{S}_1^{\sigma_1}, \tilde{S}_2^{\sigma_2}, \ldots, \tilde{S}_M^{\sigma_M}) \mid X_{1:(n+m)},\]
by the exchangeability of $(\tilde{S}_1, \tilde{S}_2, \ldots, \tilde{S}_M)$ conditional on $X_{1:(n+m)}$. 
Let 
$\sigma \in \mathcal{S}_n$ be a permutation determined by $\sigma_1,\ldots,\sigma_M$, defined as follows:
For each $k$, with $I_k = \{i_{k1}, i_{k2}, \ldots, i_{kN_k}\}$,  $\sigma$ satisfies $\sigma(i_{kj}) = i_{k \sigma_k(j)}$ for any $1 \le k \le M$ and $1 \le j \le N_k$. 
Let $S^\sigma$ denote the vector $S_{\sigma(1)}, \ldots, S_{\sigma(n)}$---in other words, the components of $S^\sigma$ are given by $((S^\sigma)_i)_{i \in I_k} = \tilde{S}_k^{\sigma_k}$ for each $k\in [M]$. Let $\mathcal{S}_{I_{1:M}}$ be the set of such permutations, i.e.,
\begin{equation}\label{eqn:S_1_M}
    \mathcal{S}_{I_{1:M}} = \{\sigma \in \mathcal{S}_n : \sigma(i_{kj}) = i_{k \sigma_k(j)} \,\forall k \in [M], j \in [N_k], \sigma_1 \in \mathcal{S}_1, \ldots, \sigma_M \in \mathcal{S}_M\}.
\end{equation}
Conditional on $X_{1:(n+m)}$, 
\begin{align*}
    &\One{S_{n+j^*} \le q_{1-\alpha}(\tilde{S}_1, \tilde{S}_2, \ldots, \tilde{S}_M)} \stackrel{\text{d}}{=} \One{S_{n+j^*}^\sigma \le q_{1-\alpha}(\tilde{S}_1^{\sigma_1}, \tilde{S}_2^{\sigma_2}, \ldots, \tilde{S}_M^{\sigma_M})}\\
    & =\One{S_{n+j^*}^\sigma \le q_{1-\alpha}(\tilde{S}_1, \tilde{S}_2, \ldots, \tilde{S}_M)}
    =\One{S_{\sigma(n+j^*)} \le q_{1-\alpha}(\tilde{S}_1, \tilde{S}_2, \ldots, \tilde{S}_M)}.
\end{align*}
Then it follows that
\begin{align*}
&\PPst{S_{n+j^*} \le q_{1-\alpha}(\tilde{S}_1, \ldots, \tilde{S}_M)}{X_{1:(n+m)}}
= \EEst{\One{S_{n+j^*} \le q_{1-\alpha}(\tilde{S}_1, \ldots, \tilde{S}_M)}}{X_{1:(n+m)}}\\
& =\EEst{\frac{1}{|\mathcal{S}_{I_{1:M}}|}\sum_{\sigma \in \mathcal{S}_{I_{1:M}}}\One{S_{\sigma(n+j^*)} \le q_{1-\alpha}(\tilde{S}_1, \ldots, \tilde{S}_M)}}{X_{1:(n+m)}}.
\end{align*}
Now we introduce, for each $k\in[M]$,
the dummy indices 
$i \in I_k$, $j' \in I_k^0$ 
below, 
where $j'$ represents the possible values taken by the random variable $j^*$, 
and $i$ represents the possible values taken by the random variable $\sigma(n+j^*)$---or $\sigma(n+j')$.
We can see that the above quantity equals
\begin{align*}
&
\EEst{\frac{1}{|\mathcal{S}_{I_{1:M}}|}\sum_{\sigma \in \mathcal{S}_{I_{1:M}}} 
\sum_{k=1}^M \sum_{i \in I_k}\sum_{j' \in I_k^0} \One{j^* = j', i = \sigma(n+j'), S_i \le q_{1-\alpha}(\tilde{S}_1, \ldots, \tilde{S}_M)}}{X_{1:(n+m)}}\\
& =\frac{1}{m} \cdot \EEst{\sum_{k=1}^M \sum_{i \in I_k} \sum_{j' \in I_k^0}\frac{|\mathcal{S}_{I_{1:M},k}^{i,j'}|}{|\mathcal{S}_{I_{1:M}}|}\cdot \One{S_i \le  q_{1-\alpha}(\tilde{S}_1, \ldots, \tilde{S}_M)}}{X_{1:(n+m)}},
\end{align*}
where 
$\mathcal{S}_{I_{1:M},k}^{i,j'} = \left\{\sigma \in \mathcal{S}_{I_{1:M}} : \sigma(n+j') = i\right\}$,
and the last step holds since $j^*$ is independent of the data and $\PPst{j^* = j'}{X_{1:(n+m)}} = \frac{1}{m}$ for any $j \in [m]$. 
From the definition of $\mathcal{S}_{I_{1:M}}$ in 
\eqref{eqn:S_1_M}, 
for any $i \in I_k$ and $j' \in I_k^0$,
the number of permutations in $\mathcal{S}_{I_{1:M},k}^{i,j'}$ is
$|\mathcal{S}_{I_{1:M},k}^{i,j'}| = \bigg[\prod_{k' \neq k} N_k!\bigg] \cdot (N_k-1)!$,
because these permutations have only one value within the $k$-th block fixed, and all other values are arbitrary.
Thus
\[\frac{|\mathcal{S}_{I_{1:M},k}^{i,j'}|}{|\mathcal{S}_{I_{1:M}}|} = \frac{\bigg[\prod_{k' \neq k} N_k!\bigg] \cdot (N_k-1)!}{\prod_{k'=1}^M N_{k'}!} = \frac{1}{N_k}.\]
Therefore, putting everything together, we have
\begin{align*}
    &\PPst{S_{n+j^*} \le q_{1-\alpha}(\tilde{S}_1, \ldots, \tilde{S}_M)}{X_{1:(n+m)}}
    = \EEst{\sum_{k=1}^M \sum_{i \in I_k} \sum_{j' \in I_k^0}\frac{1}{m}\frac{1}{N_k}\cdot \One{S_i \le  q_{1-\alpha}(\tilde{S}_1, \ldots, \tilde{S}_M)}}{X_{1:(n+m)}}\\
    & =\EEst{\sum_{k=1}^M \sum_{i \in I_k} \frac{1}{m}\frac{N_k^0}{N_k}\cdot \One{S_i \le  q_{1-\alpha}(\tilde{S}_1, \ldots, \tilde{S}_M)}}{X_{1:(n+m)}} \ge 1-\alpha,
\end{align*}
where the last inequality applies~\eqref{eqn:weighted_quantile_2}. The desired inequality follows by observing that
\begin{align*}
    \sum_{k=1}^M \sum_{i \in I_k^1}\frac{1}{m}\cdot \frac{N_k^0}{N_k}\cdot \delta_{S_i} + \frac{1}{m}\sum_{k=1}^M \frac{(N_k^0)^2}{N_k} \cdot\delta_{+\infty} = \sum_{k=1}^M \sum_{i \in I_k}\frac{1}{m}\cdot \frac{N_k^0}{N_k}\cdot \delta_{\bar{S}_i},
\end{align*}
where $\bar{S}_i = S_i \One{i \le n} + (+\infty) \One{i > n}$, so that
\[Q_{1-\alpha}\left(\sum_{k=1}^M \sum_{i \in I_k^1}\frac{1}{m}\cdot \frac{N_k^0}{N_k}\cdot \delta_{S_i} + \frac{1}{m}\sum_{k=1}^M \frac{(N_k^0)^2}{N_k} \cdot\delta_{+\infty}\right) \ge q_{1-\alpha}(\tilde{S}_1, \ldots, \tilde{S}_M)\]
holds deterministically.

\subsection{Proof of Corollary~\ref{cor:in_exp_discrete_partition}}

By construction, we have that
\begin{align*}
    &\EEst{\frac{1}{m}\sum_{j=1}^m \One{Y_{n+j} \in \ch_U(X_n+j,j)}}{X_{1:(n+m)}}
    =\EEst{\frac{1}{m}\sum_{j=1}^m \One{Y_{n+j} \in \ch^{\ell_j}(X_{n+j})}}{X_{1:(n+m)}}\\
    & =\EEst{\frac{1}{m}\sum_{l=1}^L\sum_{j \in U_{\ell}} \One{Y_{n+j} \in \ch^{\ell_j}(X_{n+j})}}{X_{1:(n+m)}}\\
    &=\sum_{l=1}^L \frac{N_{\ell}^0}{m}\cdot\EEst{\frac{1}{N_{\ell}^0}\sum_{j \in U_{\ell}} \One{Y_{n+j} \in \ch^{\ell_j}(X_{n+j})}}{X_{1:(n+m)}}
    \ge 1-\alpha,
\end{align*}
where the last step holds by Theorem~\ref{thm:MAR_in_exp}. This proves the claim.

\subsection{Proof of Lemma~\ref{lem:dtv_0_1}}
    By the assumption, we have
   \[p_1 \le p_{A \mid X}(x) \le p_2 \textnormal{ for all }x \in D, \textnormal{ where } p_1 = \frac{t}{1+t} \textnormal{ and } p_2 = \frac{t(1+\eps)}{1+t(1+\eps)}.\] 
    Take any measurable set $V \subset \R$. We have
    \begin{align*}
        &\PPst{S \in V}{A=1, X \in D} = \frac{\PP{S \in V, A=1, X \in D}}{\PP{A=1, X \in D}} = \frac{\EE{\PPst{S \in V, A=1, X \in D}{X}}}{\EE{\PPst{A=1, X \in D}{X}}}\\
        &=\frac{\EE{\PPst{S \in V, A=1}{X}\cdot \One{X \in D}}}{\EE{\PPst{A=1}{X}\cdot \One{X \in D}}}
        = \frac{\EE{\PPst{S \in V}{A=1, X}\cdot p_{A \mid X}(X)\cdot \One{X \in D}}}{\EE{p_{A \mid X}(X)\cdot \One{X \in D}}}\\
        &=\frac{\EE{\PPst{S \in V}{A=0, X}\cdot p_{A \mid X}(X)\cdot \One{X \in D}}}{\EE{p_{A \mid X}(X)\cdot \One{X \in D}}}
        \le \frac{p_2 \cdot \EE{\PPst{S \in V}{A=0, X}\cdot \One{X \in D}}}{p_1 \cdot \EE{\One{X \in D}}},
    \end{align*}
    where the last equality holds since $S$ is conditionally independent of $A$ given $X$. Similarly, we have
    \begin{align*}
        \PPst{S \in V}{A=0, X \in D} &= \frac{\EE{\PPst{S \in V}{A=0, X}\cdot (1-p_{A \mid X}(X))\cdot \One{X \in D}}}{\EE{(1-p_{A \mid X}(X))\cdot \One{X \in D}}}\\
        &\ge \frac{(1-p_2) \cdot \EE{\PPst{S \in V}{A=0, X}\cdot \One{X \in D}}}{(1-p_1) \cdot \EE{\One{X \in D}}}.
    \end{align*}
    It follows that with $\zeta = \frac{\EE{\PPst{S \in V}{A=0, X}\cdot \One{X \in D}}}{ \EE{\One{X \in D}}}$,
    \begin{align*}
        &\PPst{S \in V}{A=1, X \in D}-\PPst{S \in V}{A=0, X \in D}\\
        &\le  \zeta\cdot \left[\frac{p_2}{p_1} - \frac{1-p_2}{1-p_1}\right]
        =\zeta \cdot \frac{1-p_2}{1-p_1}\cdot \left[\frac{\frac{p_2}{1-p_2}}{\frac{p_1}{1-p_1}} - 1\right]
        =\zeta \cdot \frac{1-p_2}{1-p_1} \cdot \eps \le \eps.
    \end{align*}
    Similarly,
    we can show that
    \[\PPst{S \in V}{A=1, X \in D}-\PPst{S \in V}{A=0, X \in D} \ge -\eps.\]
    This holds for any measurable set $D \subset \R$, implying that
    \begin{multline*}
        \dtv(P_{S \mid A=1, X \in D}, P_{S \mid A=0, X \in D})\\
        = \sup_{D \subset \R : \text{measurable}} |\PPst{S \in V}{A=1, X \in D}-\PPst{S \in V}{A=0, X \in D}| \le \eps.
    \end{multline*}

\subsection{Proof of Theorem~\ref{thm:MAR_known_p_A_X}}
Let $j^*$ be a random variable drawn from $\textnormal{Unif}([m])$. Then we see that
\begin{align*}
    &\PPst{Y_{n+j^*} \in \chp(X_{n+j^*})}{B_{1:(n+m)}}
    = \EEst{\One{Y_{n+j^*} \in \chp(X_{n+j^*})}}{B_{1:(n+m)}}\\
    &= \EEst{\sum_{j=1}^m\One{Y_{n+j^*} \in \chp(X_{n+j^*}), j^* = j}}{B_{1:(n+m)}}\\
    &= \sum_{j=1}^m \EEst{\One{j^* = j}}{B_{1:(n+m)}} \cdot \EEst{\One{Y_{n+j} \in \chp(X_{n+j})}}{B_{1:(n+m)}}\\
    &= \EEst{\frac{1}{m}\sum_{j=1}^m \One{Y_{n+j} \in \chp(X_{n+j})}}{B_{1:(n+m)}}.
\end{align*}

Therefore,
by the definition of $\chp$,
the target inequality can equivalently be written as
\begin{equation}\label{eqn:target_ineq}
    \PPst{S_{n+j^*} \le Q_{1-\alpha}\left(\sum_{k=1}^M \sum_{i \in I_k^{\B,1}}\frac{1}{m}\cdot \frac{N_k^{\B,0}}{N_k^\B}\cdot \delta_{S_i} + \frac{1}{m}\sum_{k=1}^M \frac{(N_k^{\B,0})^2}{N_k^\B} \cdot\delta_{+\infty}\right)}{B_{1:(n+m)}} \ge 1-\alpha-\eps.
\end{equation}
Here, the probability is taken with respect to the following distribution:
\begin{equation}\label{eqn:P}
    P:
    \begin{cases}
    S_i \indepsim P_{s(X,Y) \mid A = 0, X \in {B_i}} &\textnormal{ for } i\in [n],\\
    S_i \indepsim P_{s(X,Y) \mid A = 1, X \in {B_i}} &\textnormal{ for } i\in [n+m]\backslash[n],\\
    j^* \sim \textnormal{Unif}([m]), &\textnormal{ independently of } (S_i)_{1 \le i \le n+m},
    \end{cases}
\end{equation}
where we treat $B_{1:(n+m)}$ as fixed for convenience. Next, we consider 
the distribution $Q$, 
which is identical to $P$, except 
$S_i \indepsim P_{s(X,Y) \mid A = 1, X \in {B_i}} \textnormal{ for } i > n$.
Note that under $Q$, the dataset $\{B_i,Z_i\}_{i \in [n+m]}$ satisfies the assumptions of Theorem~\ref{thm:MAR_in_exp}. Therefore, we have
\[\Ppst{Q}{S_{n+j^*} \le Q_{1-\alpha}\left(\sum_{k=1}^M \sum_{i \in I_k^{\B,1}}\frac{1}{m}\cdot \frac{N_k^{\B,0}}{N_k^\B}\cdot \delta_{S_i} + \frac{1}{m}\sum_{k=1}^M \frac{(N_k^{\B,0})^2}{N_k^\B} \cdot\delta_{+\infty}\right)}{B_{1:(n+m)}} \ge 1-\alpha.\]
The event inside the probability depends on $(S_{n+j})_{j \in [m]}$ and $j^*$ only through $S_{n+j^*}$, and the distribution of $(S_i)_{i \in [n]}$ is the same under $P$ and $Q$ and is independent of $S_{n+j^*}$. 
Therefore, to show~\eqref{eqn:target_ineq}, it is sufficient to prove
$\dtv(P^*, Q^*) \le \eps$,
where $P^*$ and $Q^*$ denote the distribution of $S_{n+j^*}$ under $P$ and $Q$, respectively.

Take any measurable set $D \subset \R$. It holds that
\begin{align*}
    &\Pp{P}{S_{n+j^*} \in D} - \Pp{Q}{S_{n+j^*} \in D}
    =\frac{1}{m}\sum_{j=1}^m \Big[\Pp{P}{S_{n+j} \in D}-\Pp{Q}{S_{n+j} \in D}\Big]\\
    &=\frac{1}{m}\sum_{j=1}^m \Big[\PPst{S \in V}{A=0, X \in D_{n+j}} - \PPst{S \in V}{A=1, X \in D_{n+j}}\Big].
\end{align*}
Applying Lemma~\ref{lem:dtv_0_1} for each $j \in [m]$, we have that
\[-\eps \le \PPst{S \in V}{A=0, X \in D_{n+j}} - \PPst{S \in V}{A=1, X \in D_{n+j}} \le \eps, \textnormal{ for all } j \in [m],\]
which implies
$-\eps \le \Pp{P}{S_{n+j^*} \in D} - \Pp{Q}{S_{n+j^*} \in D} \le \eps$
by the above equality. Since this holds for an arbitrary $D$, we have shown that $\dtv(P^*, Q^*) \le \eps$, and the desired inequality follows.

\subsection{Proof of Theorem~\ref{opt}}
\label{lb}

Denote $p_x = \mathbb{P}(A=1 \mid X = x)$ for any $X$. Due to the form of
$\operatorname{Cover}(\mathcal{V}, P, A = a)$, 
to find a bound for $\Delta_\mathcal{V}(P)$,
it is enough to bound the probabilities
$\mathbb{P}(Y \in  V \mid X \in D, A = a)$ for fixed sets $V \subset \Y$ and $D \in \mathcal{D}$. 
Similarly to the proof of 
Lemma \ref{lem:dtv_0_1},
we compute:
\begin{align*}
&\PP{Y \in V \mid A=1, X \in D}
= \frac{\PPst{Y \in V, A = 1}{X \in D}}{\PPst{A = 1 }{X \in D}}\\
&= \frac{\EEst{\PPst{Y \in V, A = 1}{X,X \in D}}{X \in D}}{\EEst{\PPst{A = 1 }{X,X \in D}}{X \in D}} = \frac{\EEst{\PPst{Y \in V}{X} \cdot p_X}{X \in D}}{\EEst{p_X}{X \in D}},
\end{align*}
where we apply the missing at random assumption in the last equality. Therefore, we have
\begin{align*}
&\PP{Y \in V \mid A = 1, X \in D} - \PP{Y \in V \mid A = 0, X \in D} \\
&= \frac{\EEst{\PPst{Y \in V}{X} \cdot p_X}{X \in D}}{\EEst{p_X}{X \in D}} - \frac{\EEst{\PPst{Y \in V}{X} \cdot (1-p_X)}{X \in D}}{\EEst{1-p_X}{X \in D}}\\
&= \EEst{\PPst{Y \in V}{X} \cdot h_D(X)}{X \in D},
\end{align*}
where
\[h_D(x) = \frac{p_x}{\EEst{p_X}{X \in D}} - \frac{1-p_x}{\EEst{1-p_X}{X \in D}}.\]
It follows that, unless $V = \emptyset$ or $\Y$,
\begin{equation}\label{eqn:g_d_p}
\begin{split}
    g(D,P_{X \mid X \in D}) &:= \sup_{P_{Y \mid X}, V} |\PP{Y \in V \mid A = 1, X \in D} - \PP{Y \in V \mid A = 0, X \in D}|\\
    &= \EEst{h_D(X)\cdot \one{h_D(X) \geq 0}}{X \in D} = -\EEst{h_D(X)\cdot \one{h_D(X) < 0}}{X \in D},
\end{split}
\end{equation}
where the last equality holds since $\EEst{h_D(X)}{X \in D} = 0$ by the definition of $h_D$. If $V = \emptyset$ or $\Y$, $g(D,P_{X \mid X \in D}) = 0$ clearly holds.
Then, 
indexing the sets in $\mathcal{V}$ via the sets in $\mathcal{D}$,
observe that
\begin{equation}\label{eqn:delta_s_p}
\begin{split}
    \Delta_\mathcal{V}(P_{A|X}) &= \sup_{P_{Y \mid X}, P_X}\left|\sum_{D \in \mathcal{D}}\PP{X \in D} \cdot \big(\PPst{Y \in V_D}{X \in D, A=0} - \PPst{Y \in V_D}{X \in D, A=1}\big)\right|\\
    &= \sup_{D \in \mathcal{D}} \sup_{P_X} g(D,P_{X \mid X \in D}).
\end{split}
\end{equation}
Intuitively, this holds because we can assign all the weight $\PP{X \in D}$ to the set $D$ that attains the largest value of $\sup_{P_X} g(D, P_X)$ in order to maximize the sum.

Now we investigate the term $\sup_{P_X} g(D, P_{X \mid X \in D})$ for a fixed $D = D_\lambda$ with $\lambda \in \Lambda_\mathcal{V}$.
Consider a distribution $P_{X \mid X \in D}^0$ supported on $\{x_1, x_2\} \subset D$ defined as
\[P_{X \mid X \in D}^0 = \frac{1}{2}\delta_{x_1} + \frac{1}{2}\delta_{x_2}.\]
Let $p_1 = p_{x_1}$ and $p_2 = p_{x_2}$, and, without loss of generality, assume that $p_1 > p_2$---we consider the case where $x \mapsto p_x$ is not constant on $D$ so that $x_1$ and $x_2$ can be chosen while satisfying $p_1 \neq p_2$; otherwise, $h_D \equiv 0$, and hence $g(D, P_{X \mid X \in D})$ is trivially zero.

It is easy to verify that under $P_{X \mid X \in D}^0$, we have $h_D(x_1) > 0$ and $h_D(x_2) < 0$, and that $\EEst{p_X}{X \in D} = \frac{1}{2}(p_1 + p_2)$. Therefore,
\begin{multline*}
    g(D,P_{X \mid X \in D}^0) = \EEst{h_D(X)\cdot \one{h_D(X) \geq 0}}{X \in D} = \frac{1}{2}\cdot h_D(x_1) = \frac{p_1}{p_1+p_2} - \frac{1-p_1}{2-p_1-p_2}\\
    = \frac{p_1(2-p_1-p_2) - (1-p_1)(p_1+p_2)}{(p_1+p_2)(2-p_1-p_2)} = \frac{p_1-p_2}{(p_1+p_2)(2-p_1-p_2)}.
\end{multline*}
Now define 
\[\eps(x_1,x_2) := \left|\frac{p_1/(1-p_1)}{p_2/(1-p_2)} - 1\right| = \left|\frac{p_1(1-p_2)}{p_2(1-p_1)} - 1\right| = \frac{p_1-p_2}{p_2(1-p_1)}.\]
Then
\begin{align*}
    &g(D,P_{X \mid X \in D}^0) = \frac{p_1-p_2}{(p_1+p_2)(2-p_1-p_2)} = \frac{p_1-p_2}{p_2(1-p_1)} \cdot \frac{p_2(1-p_1)}{(p_1+p_2)(2-p_1-p_2)}\\
    &= \eps(x_1,x_2) \cdot \frac{1}{\left(1+\frac{p_1}{p_2}\right)\left(1+\frac{1-p_2}{1-p_1}\right)} \geq \eps(x_1,x_2) \cdot \frac{1}{4 \cdot \frac{p_1}{p_2} \cdot \frac{1-p_2}{1-p_1}} = \frac{\eps(x_1,x_2)}{4(1+\eps(x_1,x_2))},
\end{align*}
where the inequality holds since $p_1/p_2 \geq 1$ and 
$(1-p_2)/(1-p_1) \geq 1$. 
Therefore, we have
\[\sup_{P_X} g(D,P_{X \mid X \in D}) \geq g(D,P_{X \mid X \in D}^0) \geq \frac{\eps(x_1,x_2)}{4(1+\eps(x_1,x_2))}.\]
Since the above inequality holds for any choice of $x_1$ and $x_2$, we have
\[\sup_{P_X} g(D,P_{X \mid X \in D}) \geq \sup_{x_1,x_2 \in D} \frac{\eps(x_1,x_2)}{4(1+\eps(x_1,x_2))} = \frac{\mathcal{E}(D, P_{A|X})}{4(1+\mathcal{E}(D, P_{A|X}))},\]
noting that $t \mapsto t/(4(1+t))$ is a continuous and nondecreasing function. Since the above inequality holds for any $D_\lambda$ with $\lambda \in \Lambda_\mathcal{V}$, applying~\eqref{eqn:delta_s_p}, we have
\[\Delta_\mathcal{V}(P_{A|X}) = \sup_{D \in \mathcal{D}} \sup_{P_X} g(D,P_{X \mid X \in D}) \geq \sup_{\lambda \in \Lambda_\mathcal{V}} \frac{\mathcal{E}(D, P_{A|X})}{4(1+\mathcal{E}(D, P_{A|X}))} = \frac{\mathcal{E}_\mathcal{V}(\mathcal{D}, P_{A|X})}{4(1+\mathcal{E}_\mathcal{V}(\mathcal{D}, P_{A|X}))},\]
which implies the desired bound.

\subsection{Proof of Theorem~\ref{thm:MAR_estimated_p_A_X}}
The proof is similar to that of Theorem~\ref{thm:MAR_known_p_A_X}. 
In the last step,  it is sufficient to prove
$\dtv(P^*, Q^*) \le  \eps + \delta_{\hat{p}_{A\mid X}} + \eps \cdot \delta_{\hat{p}_{A\mid X}}$,
and it is again enough to prove that for any $B \in \B$ (where $\B$ is constructed based on~\eqref{eqn:eps_partition} with $\hat{p}_{A \mid X}$),
\[\dtv(P_{S \mid A=1, X \in D}, P_{S \mid A=0, X \in D}) \le   \eps + \delta_{\hat{p}_{A\mid X}} + \eps \cdot \delta_{\hat{p}_{A\mid X}}.\]
By the definition of $\delta_{\hat{p}_{A\mid X}}$, for any $x \in \X$, it holds that
$-\log (1+\delta_{\hat{p}_{A\mid X}}) \le 2\log f_{p,\hat{p}}(x) \le \log (1+\delta_{\hat{p}_{A\mid X}})$,
and consequently
$1/\sqrt{1+\delta_{\hat{p}_{A\mid X}}} \le f_{p,\hat{p}}(x) \le \sqrt{1+\delta_{\hat{p}_{A\mid X}}}$.
Fix any $k \in \mathbb{Z}$. 
Then for any $x \in D_k$, we have
\[(1+\eps)^k \le \frac{\hat{p}_{A \mid X}(x)}{1-\hat{p}_{A \mid X}(x)} \le (1+\eps)^{k+1},\]
by the construction of $\B$, and it follows that
\[(1+\eps)^k \cdot \frac{1}{\sqrt{1+\delta_{\hat{p}_{A\mid X}}}}\le \frac{p_{A \mid X}(x)}{1-p_{A \mid X}(x)} = \frac{\hat{p}_{A \mid X}(x)}{1-\hat{p}_{A \mid X}(x)} \cdot f_{p,\hat{p}}(x) \le (1+\eps)^{k+1} \cdot \sqrt{1+\delta_{\hat{p}_{A\mid X}}},\]
for any $x \in D_k$. Therefore, by Lemma~\ref{lem:dtv_0_1}, we have
\[\dtv(P_{S \mid A=1, X \in D}, P_{S \mid A=0, X \in D}) \le (1+\eps)(1+\delta_{\hat{p}_{A\mid X}})-1 = \eps + \delta_{\hat{p}_{A\mid X}} + \eps \cdot \delta_{\hat{p}_{A\mid X}}
,\]
as desired.

\subsection{Proof of Theorem~\ref{thm:MAR_squared}}

The proof follows
similar steps as the proof of Theorem~\ref{thm:MAR_in_exp}. Throughout the proof, we denote $\nu = \frac{1}{m^2}$. Let $j_1^*$ and $j_2^*$ be the two independent draws from $\textnormal{Unif}([m])$, independent of the data. Then, with $E_{j_1,j_2} = \{Y_{n+j_1} \notin \ch^2(X_{n+j_1}), Y_{n+j_2} \notin \ch^2(X_{n+j_2})\}$, 
we observe that
\begin{align*}
    &\PPst{E_{j_1^*,j_2^*}}{X_{1:(n+m)}}
    = \EEst{\One{E_{j_1^*,j_2^*}}}{X_{1:(n+m)}}
     = \EEst{\sum_{j_1,j_2 \in [m]} 
     \One{j_1^* = j_1, j_2^* = j_2} \cdot\One{E_{j_1^*,j_2^*}}}{X_{1:(n+m)}}\\
    & =\sum_{j_1,j_2 \in [m]} \nu \cdot \EEst{E_{j_1,j_2} }{X_{1:(n+m)}}
    = \EEst{\Bigg(\frac{1}{m}\sum_{j=1}^m \One{Y_{n+j} \notin \ch^2(X_{n+j})}\Bigg)^2}{X_{1:(n+m)}},
\end{align*}
which implies that it is equivalent to prove that the simultaneous miscoverage rate for the two randomly chosen missing outcomes is bounded by $\alpha^2$.

Fix $X_{1:(n+m)}$, and define $q_{1-\alpha^2} : \R^{N_1}\times \R^{N_2} \times \ldots \times \R^{N_M} \rightarrow \R$,
for $(\tilde{s}_1, \ldots, \tilde{s}_M)$
with  $\tilde{s}_k = (\tilde{s}_{k1},\tilde{s}_{k2}, \ldots, \tilde{s}_{k N_k})$ for each $k$
by
\begin{multline}\label{eqn:squared_quantile_fn}
    q_{1-\alpha^2}(\tilde{s}_1, \ldots, \tilde{s}_M) = Q_{1-\alpha^2}\left(\rule{0cm}{1.2cm} \nu\sum_{k : N_k^0 \ge 1} \sum_{i=1}^{N_k}  \frac{N_k^0}{N_k}\delta_{\tilde{s}_{ki}} \right. \\
    \left. +\nu \sum_{k : N_k^0 \ge 2}\sum_{1 \le i \neq j \le N_k} 
    \frac{N_k^0(N_k^0-1)}{N_k(N_k-1)}\delta_{\min\{\tilde{s}_{ki},\tilde{s}_{kj}\}} + \nu\sum_{\substack{k \neq k :\\ N_k^0 \ge 1, N_{k'}^0 \ge 1}}\sum_{i=1}^{N_k} \sum_{j=1}^{N_{k'}} 
    \frac{N_k^0N_{k'}^0}{N_kN_{k'}}\delta_{\min\{\tilde{s}_{ki},\tilde{s}_{kj}\}} \rule{0cm}{1.2cm}\right).
\end{multline}
 For the function $q_{1-\alpha^2}$, we observe the following properties. First, for any permutations $\sigma_1 \in \mathcal{S}_{N_1}, \ldots, \sigma_M \in \mathcal{S}_{N_M}$, it holds that
\begin{equation}\label{eqn:q_sq_exch}
    q_{1-\alpha^2}(\tilde{s}_1, \ldots, \tilde{s}_M) = q_{1-\alpha^2}(\tilde{s}_1^{\sigma_1}, \ldots, \tilde{s}_M^{\sigma_M}),
\end{equation}
where $\tilde{s}_1^{\sigma_1}, \ldots, \tilde{s}_M^{\sigma_M}$ are defined as in the proof of Theorem~\ref{thm:MAR_in_exp}. Next, let $\tilde{S}_k = (S_i)_{i \in I_k}$ for $k\in [M]$. 
By definition of $Q_{1-\alpha^2}$, it holds deterministically that
\begin{equation}\label{eqn:q_sq_ineq}
\begin{split}
    &\sum_{k : N_k^0 \ge 1} \sum_{i \in I_k} \nu\cdot \frac{N_k^0}{N_k}\cdot\One{S_i \le q_{1-\alpha^2}(\tilde{S}_1, \ldots, \tilde{S}_M)}  \\
      &+ \sum_{k : N_k^0 \ge 2}\sum_{\substack{i,j\in I_k \\ i \neq j}} \nu \cdot \frac{N_k^0(N_k^0-1)}{N_k(N_k-1)}\cdot\One{\min\{S_i,S_j\} \le q_{1-\alpha^2}(\tilde{S}_1, \ldots, \tilde{S}_M)}\\
      &+ \sum_{\substack{k \neq k :\\ N_k^0 \ge 1, N_{k'}^0 \ge 1}}\sum_{i \in I_k} \sum_{j \in I_{k'}} \nu \cdot \frac{N_k^0}{N_k}\cdot\frac{N_{k'}^0}{N_{k'}}\cdot \One{\min\{S_i,S_j\} \le q_{1-\alpha^2}(\tilde{S}_1, \ldots, \tilde{S}_M)} \ge 1-\alpha^2.
      \end{split}
\end{equation}
Next, since $\{Y_i : i \in I_k\}$ is an exchangeable draw from $P_{Y \mid X=X_k}$ conditional on $X_k$ by the missing at random assumption, $\tilde{S}_k$ is exchangeable conditional on $X_{1:(n+m)}$, for each $k\in [M]$. 
Since this holds jointly, we have
\[(\tilde{S}_1, \tilde{S}_2, \ldots, \tilde{S}_M) \stackrel{\text{d}}{=} (\tilde{S}_1^{\sigma_1}, \tilde{S}_2^{\sigma_2}, \ldots, \tilde{S}_M^{\sigma_M}) \mid X_{1:(n+m)},\]
for any permutations $\sigma_1 \in \mathcal{S}_{N_1}, \ldots, \sigma_M \in \mathcal{S}_{N_M}$. Therefore, conditional on $X_{1:(n+m)}$,
\begin{align*}
    &\One{\min\{S_{n+j_1^*}, S_{n+j_2^*}\} \le q_{1-\alpha^2}(\tilde{S}_1,\ldots,\tilde{S}_M)} \stackrel{\text{d}}{=} \One{\min\{S_{n+j_1^*}^\sigma, S_{n+j_2^*}^\sigma\} \le q_{1-\alpha^2}(\tilde{S}_1^{\sigma_1},\ldots,\tilde{S}_M^{\sigma_M})}\\
    &= \One{\min\{S_{n+j_1^*}^\sigma, S_{n+j_2^*}^\sigma\} \le q_{1-\alpha^2}(\tilde{S}_1,\ldots,\tilde{S}_M)}
    = \One{\min\{S_{\sigma(n+j_1^*)}, S_{\sigma(n+j_2^*)}\} \le q_{1-\alpha^2}(\tilde{S}_1,\ldots,\tilde{S}_M)},
\end{align*}
where the second equality applies~\eqref{eqn:q_sq_exch}, and $\sigma \in \mathcal{S}_n$ is a permutation determined by $\sigma_1,\ldots,\sigma_M$, defined as in the proof of Theorem~\ref{thm:MAR_in_exp}. 

Define $\mathcal{S}_{I_{1:M}}$ as in~\eqref{eqn:S_1_M}, and let
\begin{align*}
    &\mathcal{S}_{I_{1:M},k}^{i,i'} = \{\sigma \in \mathcal{S}_{I_{1:M}} : \sigma(i') = i\},\quad
    \mathcal{S}_{I_{1:M},k_1,k_2}^{(i,i'),(j,j')} = \{\sigma \in \mathcal{S}_{I_{1:M}} : \sigma(i') = i, \sigma(j') = j\},
\end{align*}
for $i,i',j,j' \in [n+m]$ and $k\in[M]$. 
The sizes 
$|\mathcal{S}_{I_{1:M}}|$, $ |\mathcal{S}_{I_{1:M},k}^{i,i'}|$ are given in the
proof of Theorem~\ref{thm:MAR_in_exp}
and the size of $\mathcal{S}_{I_{1:M},k}^{i,i'}$ is
\begin{align*}
    &|\mathcal{S}_{I_{1:M},k_1,k_2}^{(i,i'),(j,j')}| =
    \begin{cases}
        \prod_{k \neq k_1} N_k ! \cdot (N_{k_1}-2)! &\text{ if } i,i',j,j' \in I_{k_1},\\
        \prod_{k \neq k_1, k_2} N_k ! \cdot (N_{k_1}-1)! (N_{k_2}-1)! &\text{ if } i,i' \in I_{k_1}, j,j' \in I_{k_2}\; (k_1 \neq k_2).
    \end{cases}
\end{align*}
From the above observations, 
with $U_{ij} = \One{\min\{S_i, S_j\} \le q_{1-\alpha^2}(\tilde{S}_1,\ldots,\tilde{S}_M)}$ for all $i,j$,
and for each $k\in[M]$,
introducing
the dummy indices 
$i_1,i_2 \in I_k$, $j_1,j_2 \in I_k^0$ 
as in the proof of Theorem~\ref{thm:MAR_in_exp},
we have
\begin{align*}
    &    \EEst{U_{n+j_1^*,n+j_2^*}}{X_{1:(n+m)}}
    = \EEst{\frac{1}{|\mathcal{S}_{I_{1:M}}|}\sum_{\sigma \in \mathcal{S}_{I_{1:M}}} U_{\sigma(n+j_1^*)\sigma(n+j_2^*)}}{X_{1:(n+m)}}\\
    & =\mathbb{E}\left[ \frac{1}{|\mathcal{S}_{I_{1:M}}|}\sum_{\sigma \in \mathcal{S}_{I_{1:M}}} \left[\rule{0cm}{0.8cm} \sum_{k : N_k^0 \ge 1} \sum_{i_1 \in I_k} \sum_{j_1 \in I_k^0} \One{j_1^* = j_2^* = j_1}\One{i=\sigma(n+j_1)}\One{S_{i_1} \le q_{1-\alpha^2}(\tilde{S}_1,\ldots,\tilde{S}_M)} \right. \right.\\
    &+\sum_{k : N_k^0 \ge 2} \sum_{\substack{i_1,i_2\in I_k \\ i_1 \neq i_2}} \sum_{\substack{j_1,j_2\in I_k^0 \\ j_1 \neq j_2}} \One{j_1^* = j_1, j_2^* = j_2}\One{i_1 = \sigma(n+j_1), i_2=\sigma(n+j_2)}  U_{i_1 i_2}\\
    &+\sum_{\substack{k \neq k': \\ N_k^0 \ge 1, N_{k'}^0 \ge 1}} \sum_{i_1 \in I_k} \sum_{j_1 \in I_k^0} \sum_{i_2 \in I_{k'}} \sum_{j_2 \in I_{k'}^0} \One{j_1^* = j_1, j_2^* = j_2}\One{i_1 = \sigma(n+j_1), i_2=\sigma(n+j_2)}
    \left. \left. U_{i_1 i_2} \right] \ \middle| \ X_{1:(n+m)} \right].
\end{align*}
By collecting indices,
this further equals
 \begin{align*}   
    &\mathbb{E}\left[\rule{0cm}{1.2cm} \nu \sum_{k : N_k^0 \ge 1} \sum_{i_1 \in I_k} \sum_{j_1 \in I_k^0} \frac{|\mathcal{S}_{I_{1:M},k}^{i_1,n+j_1}|}{|\mathcal{S}_{I_{1:M}}|}\cdot \One{S_{i_1} \le q_{1-\alpha^2}(\tilde{S}_1,\ldots,\tilde{S}_M)} + \nu\sum_{k : N_k^0 \ge 2} \sum_{\substack{i_1,i_2\in I_k \\ i_1 \neq i_2}} \sum_{\substack{j_1,j_2\in I_k^0 \\ j_1 \neq j_2}} \frac{|\mathcal{S}_{I_{1:M},k,k}^{(i_1,i_2),(n+j_1,n+j_2')}|}{|\mathcal{S}_{I_{1:M}}|} U_{i_1 i_2}\right.\\
    &\quad \left. +\nu \sum_{\substack{k \neq k': \\ N_k^0 \ge 1, N_{k'}^0 \ge 1}} \sum_{i_1 \in I_k} \sum_{j_1 \in I_k^0} \sum_{i_2 \in I_{k'}} \sum_{j_2 \in I_{k'}^0} \frac{|\mathcal{S}_{I_{1:M},k,k'}^{(i_1,i_2),(n+j_1,n+j_2)}|}{|\mathcal{S}_{I_{1:M}}|} U_{i_1 i_2} \rule{0cm}{1.2cm}\right]\\
    =&\mathbb{E}\left[\rule{0cm}{1.2cm} \sum_{k : N_k^0 \ge 1} \sum_{i_1 \in I_k} \nu \frac{N_k^0}{N_k}  \One{S_{i_1} \le q_{1-\alpha^2}(\tilde{S}_1,\ldots,\tilde{S}_M)}+ \sum_{k : N_k^0 \ge 2} \sum_{\substack{i_1,i_2\in I_k \\ i_1 \neq i_2}} \nu \cdot N_k^0 (N_k^0-1) \cdot \frac{1}{N_k(N_k-1)} U_{i_1 i_2} \right.\\
    &\quad\left. + \sum_{\substack{k \neq k': \\ N_k^0 \ge 1, N_{k'}^0 \ge 1}} \sum_{i_1 \in I_k} \sum_{i_2 \in I_{k'}} \nu \cdot N_k^0 N_{k'}^0 \cdot \frac{1}{N_k N_{k'}} U_{i_1 i_2} \rule{0cm}{1.2cm}\right]
    \ge 1-\alpha^2,
\end{align*}
where the last step applies~\eqref{eqn:q_sq_ineq}. 

Next, let $\tilde{\bar{S}}_k = (\bar{S}_i)_{i \in I_k}$ for $k\in [M]$, and observe that
for all $x\in \mathcal{X}$,
$\ch^2(x) = \{y \in \Y : s(x,y) \le q_{1-\alpha^2}(\tilde{\bar{S}}_1,\ldots,\tilde{\bar{S}}_M)\}$.
From the calculations above, we have
\begin{align*}
&\EEst{\Bigg(\frac{1}{m}\sum_{j=1}^m \One{Y_{n+j} \notin \ch^2(X_{n+j})}\Bigg)^2}{X_{1:(n+m)}} = \PPst{E_{j_1^*,j_2^*}}{X_{1:(n+m)}}\\
& =\PPst{\min\{S_{n+j_1^*}, S_{n+j_2^*}\} > q_{1-\alpha^2}(\tilde{\bar{S}}_1,\ldots,\tilde{\bar{S}}_M)}{X_{1:(n+m)}}\\
&\le \PPst{\min\{S_{n+j_1^*}, S_{n+j_2^*}\} > q_{1-\alpha^2}(\tilde{S}_1,\ldots,\tilde{S}_M)}{X_{1:(n+m)}}
\le \alpha^2,
\end{align*}
where the first inequality holds since $\bar{S}_i \ge S_i$ deterministically for each $i \in [n+m]$ and $q_{1-\alpha^2}$ is monotone increasing with respect to each component of its inputs.

\subsection{Proof of Corollary~\ref{cor:sq_partition}}
Applying Theorem~\ref{thm:MAR_squared} for each $\ell \in [L]$, we have
\[\Delta_n:=\EEst{\Bigg(\frac{1}{N_{\ell}^0}\sum_{j \in U_{\ell}} \One{Y_{n+j} \notin \ch_U^2(X_{n+j})}\Bigg)^2}{X_{1:(n+m)}} \le \alpha_{\ell}^2,\]
for all $\ell \in [L]$. Next, we have
\begin{multline*}
    \sum_{l = 1}^L\Bigg(\frac{1}{N_{\ell}^0}\sum_{j \in U_{\ell}} \One{Y_{n+j} \notin \ch_U^2(X_{n+j})}\Bigg)^2 \cdot \sum_{l = 1}^L \bigg(\frac{N_{\ell}^0}{m}\bigg)^2\\
    \ge \left(\sum_{l = 1}^L\frac{N_{\ell}^0}{m} \cdot \frac{1}{N_{\ell}^0}\sum_{j \in U_{\ell}} \One{Y_{n+j} \notin \ch_U^2(X_{n+j})}\right)^2 = \bigg(\frac{1}{m}\sum_{i \in I_{A=0}} \One{Y_{n+j} \notin \ch_U^2(X_{n+j})}\bigg)^2,
\end{multline*}
by the Cauchy-Schwartz inequality. Therefore,
\begin{align*}
    &\EEst{\bigg(\frac{1}{m}\sum_{i \in I_{A=0}} \One{Y_{n+j} \notin \ch_U^2(X_{n+j})}\bigg)^2}{X_{1:(n+m)}} \\
    &\le \sum_{l = 1}^L \bigg(\frac{N_{\ell}^0}{m}\bigg)^2 \cdot \EEst{\sum_{l = 1}^L\Bigg(\frac{1}{N_{\ell}^0}\sum_{j \in U_{\ell}} \One{Y_{n+j} \notin \ch_U^2(X_{n+j})}\Bigg)^2}{X_{1:(n+m)}}\\
    &\le \sum_{l = 1}^L \bigg(\frac{N_{\ell}^0}{m}\bigg)^2 \cdot \sum_{l = 1}^L \alpha_{\ell}^2 = \alpha^2 \cdot \sum_{l = 1}^L \frac{{N_{\ell}^0}^2}{m^2}\cdot \sum_{l = 1}^L \frac{{N_{\ell}^0}^2 m^2}{(\sum_{l = 1}^L {N_{\ell}^0}^2)^2} = \alpha^2.
\end{align*}

\subsection{Proof of Theorem~\ref{thm:squared_p_A_X}}
The proof is similar to that  of Theorem~\ref{thm:MAR_known_p_A_X}. Let $j_1^*$ and $j_2^*$ be independent draws from $\textnormal{Unif}([m])$. Applying a similar argument to the proof of Theorem~\ref{thm:MAR_squared}, we have
\begin{align*}
    &\PPst{Y_{n+j_1^*} \notin \chps(X_{n+j_1^*}), Y_{n+j_2^*} \notin \chps(X_{n+j_2^*})}{X_{1:(n+m)}}\\
    &= \EEst{\Bigg(\frac{1}{m}\sum_{j=1}^m \One{Y_{n+j} \notin \chps(X_{n+j})}\Bigg)^2}{X_{1:(n+m)}}.
\end{align*}
Now consider the two distributions $P$ and $Q$, defined in~\eqref{eqn:P}. 
Under $Q$, the discretized data $(B_i,Z_i)_{i \in [n+m]}$ satisfies the assumptions of Theorem~\ref{thm:MAR_squared}, and thus
\[\Ppst{Q}{\min\{S_{n+j_1^*}, S_{n+j_2^*}\} > q_{1-\alpha^2}(\tilde{S}_1,\ldots,\tilde{S}_M)}{X_{1:(n+m)}}\\
\le \alpha^2,\]
by the proof of Theorem~\ref{thm:MAR_squared}, where $q_{1-\alpha^2}$ is defined in~\eqref{eqn:squared_quantile_fn}. 
Also, the target inequality is equivalent to
\[\Ppst{P}{\min\{S_{n+j_1^*}, S_{n+j_2^*}\} > q_{1-\alpha^2}(\tilde{S}_1,\ldots,\tilde{S}_M)}{X_{1:(n+m)}}\\
\le \alpha^2 + 2\eps,\]
and therefore it is sufficient to prove
$\dtv(P^{**}, Q^{**}) \le 2\eps$,
where $P^{**}$ and $Q^{**}$ denote the joint distribution of $(S_{n+j_1^*}, S_{n+j_2^*})$ under $P$ and $Q$, respectively. Let $D$ be any measurable subset of $\R^2$. Then
\begin{multline*}
    \Pp{P}{(S_{n+j_1^*}, S_{n+j_2^*}) \in D} - \Pp{Q}{(S_{n+j_1^*}, S_{n+j_2^*}) \in D}\\
    = \frac{1}{m^2} \sum_{j_1,j_2 \in [m]} \Big[\Pp{P}{(S_{n+j_1},S_{n+j_2}) \in D} - \Pp{Q}{(S_{n+j_1},S_{n+j_2}) \in D}\Big].
\end{multline*}
Let $P_{S_i}^P$ and $P_{S_i}^Q$ denote the distributions of $S_i$ under $P$ and $Q$, respectively. Now we observe that if $i=j$,
\begin{multline*}
    \Pp{P}{(S_i,S_j) \in D} - \Pp{Q}{(S_i,S_j) \in D} = \Pp{P}{(S_i,S_i) \in D} - \Pp{Q}{(S_i,S_i) \in D}\\
\le \dtv(P_{S_i}^P, P_{S_i}^Q) = \dtv(P_{S \mid X \in D_i, A=0}, P_{S \mid X \in D_i, A=1}) \le \eps,
\end{multline*}
and if $i \neq j$, we have
\begin{multline*}
    \Pp{P}{(S_i,S_j) \in D} - \Pp{Q}{(S_i,S_j) \in D} \le \dtv(P_{S_i}^P, P_{S_i}^Q) + \dtv(P_{S_j}^P, P_{S_j}^Q)\\
    = \dtv(P_{S \mid X \in D_i, A=0}, P_{S \mid X \in D_i, A=1}) + \dtv(P_{S \mid X \in D_j, A=0}, P_{S \mid X \in D_j, A=1}) \le 2\eps,
\end{multline*}
by Lemma~\ref{lem:dtv_0_1}. It follows that
$\Pp{P}{(S_{n+j_1^*}, S_{n+j_2^*}) \in D} - \Pp{Q}{(S_{n+j_1^*}, S_{n+j_2^*}) \in D} \le 2\eps$,
and we 
obtain the lower bound $-2\eps$ by an analogous argument. Since this holds for an arbitrary $D$, we have $\dtv(P^{**}, Q^{**}) \le 2\eps$, which implies the desired inequality.

\subsection{Proof of Corollary~\ref{cor:sq_partition_general}}

The target inequality follows directly from Theorem~\ref{thm:squared_p_A_X}, by following the  steps of the proof of Corollary~\ref{cor:sq_partition}.

\subsection{Proof of Theorem~\ref{thm:squared_p_A_X_estimated}}
We follow the steps of the proof of Theorem~\ref{thm:squared_p_A_X}. Then it turns out that it is sufficient to show
$\dtv(P^{**}, Q^{**}) \le 2( \eps + \delta_{\hat{p}_{A\mid X}} + \eps \cdot \delta_{\hat{p}_{A\mid X}})$.
This bound follows directly from the definition of $P^{**}, Q^{**}$ and the result in the proof of Theorem~\ref{thm:MAR_estimated_p_A_X}, where we prove
$\dtv(P^*, Q^*) \le  \eps + \delta_{\hat{p}_{A\mid X}} + \eps \cdot \delta_{\hat{p}_{A\mid X}}$,
and therefore the claim is proved.

\end{document}